\pdfoutput=1
\RequirePackage{ifpdf}
\ifpdf % We~are running pdfTeX in pdf mode
\documentclass[pdftex]{sigma}
\else
\documentclass{sigma}
\fi

\newcommand{\inversion}[1]{\!\overleftrightarrow{\,#1\,}\!}

\newcommand{\reverse}{{\,\overleftrightarrow{\!(\cdots)\!}\,}}

\usepackage{tikz}

\newcommand{\crosssquare}{{\begin{tikzpicture}[scale=.15]
			\draw (0,0) -- (1,1) -- (1,0) -- (0,1) -- (0,0) -- (1,0) -- (1,1) -- (0,1) -- cycle;
\end{tikzpicture}}}

\newcommand{\cross}{{\begin{tikzpicture}[scale=.15]
			\draw (0,0) -- (1,1);
			\draw (1,0) -- (0,1);
\end{tikzpicture}}}

\newcommand{\triang}{{\begin{tikzpicture}[scale=.15]
			\draw (1,0) -- (0,0) -- (0,1) -- cycle;
\end{tikzpicture}}}

\newcommand{\trileg}{{\begin{tikzpicture}[scale=.15]
			\draw (1,1) -- (0,0);
			\draw (1,0) -- (0,0) -- (0,1);
\end{tikzpicture}}}

\newcommand{\cyclic}{{\circlearrowleft}_{ijk}}

\newcommand{\pdv}[2]{\frac{\partial {#1}}{\partial {#2}}}

\newcommand{\C}{\mathbb{C}}

\newcommand{\Z}{\mathbb{Z}}

\newcommand{\cL}{\mathcal{L}}

\renewcommand{\vec}{\boldsymbol}

{ \theoremstyle{definition}
	\newtheorem{Definition}{Definition}[section]
	\newtheorem{Example}[Definition]{Example}

}
\newtheorem{Lemma}[Definition]{Lemma}
\newtheorem{Theorem}[Definition]{Theorem}
\newtheorem{Proposition}[Definition]{Proposition}

\newtheorem{Assumption}[Definition]{Assumption}

\numberwithin{equation}{section}

\begin{document}
%\allowdisplaybreaks

\newcommand{\arXivNumber}{2403.16845}

\renewcommand{\PaperNumber}{059}

\FirstPageHeading

\ShortArticleName{Discrete Lagrangian Multiforms for ABS Equations II:}

\ArticleName{Discrete Lagrangian Multiforms for ABS Equations~II:\\ Tetrahedron and Octahedron Equations}

\Author{Jacob J. RICHARDSON~$^{\rm a}$ and Mats VERMEEREN~$^{\rm b}$}

\AuthorNameForHeading{J.J.~Richardson and M.~Vermeeren}

\Address{$^{\rm a)}$~School of Mathematics, University of Leeds, Leeds, LS2 9JT, UK}
\EmailD{\href{mailto:jacobjoseph.gu@gmail.com}{jacobjoseph.gu@gmail.com}}

\Address{$^{\rm b)}$~Department of Mathematical Sciences, Loughborough University, \\
\hphantom{$^{\rm b)}$}~Loughborough, LE11 3TU, UK}
\EmailD{\href{mailto:m.vermeeren@lboro.ac.uk}{m.vermeeren@lboro.ac.uk}}
	
\ArticleDates{Received March 26, 2024, in final form July 02, 2025; Published online July 18, 2025}

\Abstract{We present four types of discrete Lagrangian 2-form associated to the integrable quad equations of the ABS list. These include the triangle Lagrangian that has traditionally been used in the Lagrangian multiform description of ABS equations, the trident Lagrangian that was central to Part I of this paper, and two Lagrangians that have not been studied in the multiform setting. Two of the Lagrangian 2-forms have the quad equations, or a system equivalent to the quad equations, as their Euler--Lagrange equations, and one produces the tetrahedron equations. This is in contrast to the triangle Lagrangian 2-form, which produces equations that are weaker than the quad equations (they are equivalent to two octahedron equations). We use relations between the Lagrangian 2-forms to prove that the system of quad equations is equivalent to the combined system of tetrahedron and octahedron equations. Furthermore, for each of the Lagrangian 2-forms, we study the double zero property of the exterior derivative. In particular, this gives a possible variational interpretation to the octahedron equations.}

\Keywords{discrete integrability; Lagrangian multiforms; variational principles}
	
\Classification{39A36; 37J70; 37J06}

\section{Introduction}

In Part I of this work~\cite{richardson2025discrete1}, we revisited the discrete Lagrangian multiforms for the integrable quad equations of the ABS list. We showed that two central properties of the theory are more subtle than usually acknowledged:
\begin{itemize}	\itemsep=0pt
 \item[(1)] the equivalence between quad equations and their three-leg forms,
 \item[(2)] the closure relation of the Lagrangian multiforms.
\end{itemize}
By introducing additional, integer-valued, fields to the action, we managed to put property~(1) on sound footing and recovered a slightly weaker version of property~(2). We gave counterexamples to the usual formulations of these properties.
In Part~I, we focused our attention to what we call the \emph{trident Lagrangian}. We chose this over the established choice in discrete Lagrangian multiform theory (which we call the \emph{triangle Lagrangian} and was introduced in~\cite{lobb2009lagrangian}, see also~\mbox{\cite{bobenko2010lagrangian, boll2014integrability, boll2016integrability, lobb2018variational, xenitidis2011lagrangian}}), because the trident Lagrangian produces Euler--Lagrange equations that are equivalent to the multi-affine quad equations, whereas the triangle Lagrangian produces a~weaker set of equations.

The first aim of Part~II is to compare these choices of Lagrangian, as well as two further Lagrangians. We will show how the Euler--Lagrange equations of these different Lagrangian multiforms relate to various sets of equations that have been associated to integrable quad equations: tetrahedron equations and octahedron relations~\cite{boll2014integrability, boll2016integrability}.

The second aim of Part II involves an algebraic interpretation of the closure relation that has recently been emphasised: we can write the exterior derivative of the Lagrangian form as (a~sum~of) product(s) of Euler--Lagrange expressions. In other words, the exterior derivative is not just zero on solutions of the Euler--Lagrange equations \big(or possibly a multiple of $4 \pi^2$\big), but attains a~\emph{double zero} on this set of equations.
In the continuous (and semi-discrete) case, the double zero property has been used implicitly in~\cite{sleigh2020variational, suris2016variational} and discussed explicitly in~\mbox{\cite{caudrelier2023lagrangiana, petrera2021variational, sleigh2022semi}}. In the discrete setting, a double zero expansion for the lattice Boussinesq equation was recently obtained in~\cite{nijhoff2024lagrangian}. In the present work, we give double zero expansions for the Lagrangian multiforms of the ABS list.
These double zero expansions extend the variational interpretation of the quad and tetrahedron equations and provide the first known variational interpretation of the octahedron equations.

The structure of this paper is as follows. In Section~\ref{sec-abs}, we review the properties of the ABS equations that are relevant to this work. We start Section~\ref{sec-multiforms} with a review of Lagrangian multiform theory, including the extension of the action by integer fields that was developed in Part~I. In Sections~\ref{sec-cross}--\ref{sec-triangle}, we present four types of Lagrangian multiform. For two of these, the variational principle produces exactly the set of quad equations. One of the Lagrangian multiforms produces only the tetrahedron equations. The final one is the well-known Lagrangian multiform on a triangular stencil. In Section~\ref{sec-octa}, we show that the relations between quad equations, tetrahedron equations, and octahedron equations follow from relations between the different Lagrangian 2-forms. In Section~\ref{sec-double0}, we show that the exterior derivatives of all four 2-forms admit a double zero expansion (in terms of quad, tetrahedron, or octahedron polynomials). We give a general construction of these double zero expansions and explicit expressions for the quad equation H1.

\section{Quad equations}\label{sec-abs}

We consider the multi-affine and multidimensionally consistent quad equations from the ABS list~\cite{adler2003classification}. The ABS list consists of
\begin{itemize}	\itemsep=0pt
 \item equations of Hirota-type H1, H2, H3$_{\delta}$,
 \item more symmetric equations Q1$_{\delta}$, Q2, Q3$_{\delta}$, Q4,
 \item equations A1$_{\delta}$, A2, related to Q1$_{\delta}$, Q3 by non-autonomous transformations.
\end{itemize}
The subscript in $Q1_{\delta}$ etc.\@ indicates dependence on a parameter $\delta$. We often distinguish between zero and nonzero values of the parameter by writing $Q1_{\delta=0}$ or $Q1_{\delta\neq0}$.
Details of these equations can be found, for example, in~\cite{adler2003classification, bobenko2010lagrangian, hietarinta2009soliton, nijhoff2009soliton}, or in Part I~\cite{richardson2025discrete1}.

\subsection{Three-leg forms}

Given a field $u\colon \Z^N \to \C$ and a reference point $\vec n \in \Z^N$, we write $u = u(\vec n)$, $u_i = u(\vec n + \vec e_i)$, $u_{ij} = u(\vec n + \vec e_i + \vec e_j)$, etc., where $\vec e_i$ denotes the unit vector in the $i$-th direction.
Each equation
\[ Q_{ij} := Q(u,u_i,u_j,u_{ij}, \alpha_i, \alpha_j) = 0\]
of the ABS list is related to a \emph{three-leg form}. For some of the equations (H1, A1$_{\delta=0}$ and Q1$_{\delta=0}$), there exist functions $\psi$ and $\phi$ such that the multi-affine equation is equivalent to
\begin{equation}
 \label{three-leg-additive-0}
 \phi(u,u_{ij}, \alpha_i-\alpha_j) = \psi(u,u_i,\alpha_i) - \psi(u,u_j,\alpha_j) .
\end{equation}
The other ABS equations have three-leg form with multiplicative structure: there exist functions~$\Psi$ and~$\Phi$ such that the multi-affine equation is equivalent to
\[ \Phi(u,u_{ij}, \alpha_i-\alpha_j) = \frac{ \Psi(u,u_i,\alpha_i) }{ \Psi(u,u_j,\alpha_j) } . \]
Setting $\psi = \log(\Psi)$ and $\phi = \log(\phi)$, we find that the multi-affine equation (for H2, H3, A1$_{\delta\neq0}$, A2, Q1$_{\delta\neq0}$, Q2, Q3, Q4) is equivalent to
\begin{equation}
 \label{three-leg-additive-2pii}
 \phi(u,u_{ij}, \alpha_i-\alpha_j) = \psi(u,u_i,\alpha_i) - \psi(u,u_j,\alpha_j) + 2 \Theta \pi {\rm i}, \qquad \Theta \in \Z .
\end{equation}
For all equations of the ABS list, regardless of whether the three leg form is of type~\eqref{three-leg-additive-0} or~\eqref{three-leg-additive-2pii}, $\phi$ is an odd function of its last entry, i.e., $\phi(u,u_{ij}, \alpha_i-\alpha_j) = - \phi(u,u_{ij}, \alpha_j-\alpha_i)$.

We denote the additive three-leg expression by
\begin{equation*}
	\mathcal{Q}_{ij}^{(u)} := \psi(u,u_i,\alpha_i) - \psi(u,u_j,\alpha_j) - \phi(u,u_{ij}, \alpha_i-\alpha_j) ,
	%\label{quad-u}
\end{equation*}
where the superscript indicates that the three legs meet at the vertex $u$. We can also consider three-leg forms based at the other three vertices of the square:
\begin{align}
	&\mathcal{Q}_{ij}^{(u_i)}:= \psi(u_i,u_{ij},\alpha_j) - \psi(u_i,u,\alpha_i) - \phi(u_i,u_j, \alpha_j-\alpha_i) , \label{quad-ui}\\
	&\mathcal{Q}_{ij}^{(u_{ij})}:= \psi(u_{ij},u_j,\alpha_i) - \psi(u_{ij},u_i,\alpha_j) - \phi(u_{ij},u, \alpha_i-\alpha_j) , \label{quad-uij}\\
	&\mathcal{Q}_{ij}^{(u_j)}:= \psi(u_j,u,\alpha_j) - \psi(u_j,u_{ij},\alpha_i) - \phi(u_j,u_i, \alpha_j-\alpha_i) . \label{quad-uj}
\end{align}
In summary, we have the following.

\begin{Proposition} \label{prop-three-leg-quad}
 For ${\rm H}1$, ${\rm A}1_{\delta=0}$ and ${\rm Q}1_{\delta=0}$, there holds
 \[ Q_{ij} = 0 \iff \mathcal{Q}_{ij}^{(u)} = 0 \iff \mathcal{Q}_{ij}^{(u_i)} = 0 \iff \mathcal{Q}_{ij}^{(u_j)} = 0 \iff \mathcal{Q}_{ij}^{(u_{ij})} = 0 .\]
 For ${\rm H}2$, ${\rm H}3$, ${\rm A}1_{\delta\neq0}$, ${\rm A}2$, ${\rm Q}1_{\delta\neq0}$, ${\rm Q}2$, ${\rm Q}3$, there holds
 \begin{align*} Q_{ij} = 0 &\iff \mathcal{Q}_{ij}^{(u)} \equiv 0 \mod 2 \pi {\rm i} \iff \mathcal{Q}_{ij}^{(u_i)} \equiv 0 \mod 2 \pi {\rm i} \\
 &\iff \mathcal{Q}_{ij}^{(u_j)} \equiv 0 \mod 2 \pi {\rm i} \iff
 \mathcal{Q}_{ij}^{(u_{ij})} \equiv 0 \mod 2 \pi {\rm i} .
 \end{align*}
\end{Proposition}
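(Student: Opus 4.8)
The base-$u$ case is already recorded by the definition of the three-leg form: \eqref{three-leg-additive-0} and \eqref{three-leg-additive-2pii} say precisely that $Q_{ij}=0\iff\mathcal{Q}_{ij}^{(u)}=0$ for the additive equations and $Q_{ij}=0\iff\mathcal{Q}_{ij}^{(u)}\equiv0\bmod2\pi\mathrm{i}$ for the multiplicative ones. The plan is to take this as given and propagate it to the bases $u_i$, $u_j$ and $u_{ij}$ using the symmetries of the square, which leave the solution set $\{Q_{ij}=0\}$ invariant.

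First I would record how the expressions \eqref{quad-ui}--\eqref{quad-uj} arise from $\mathcal{Q}_{ij}^{(u)}$ by relabelling the corners. Let $\sigma_i$ be the reflection $u\leftrightarrow u_i$, $u_j\leftrightarrow u_{ij}$, let $\sigma_j$ be the reflection $u\leftrightarrow u_j$, $u_i\leftrightarrow u_{ij}$, and let $\rho=\sigma_i\sigma_j$ be the central rotation $u\leftrightarrow u_{ij}$, $u_i\leftrightarrow u_j$, each acting on the fields while fixing $\alpha_i,\alpha_j$. Substituting into the definitions gives
\[ \rho\bigl(\mathcal{Q}_{ij}^{(u)}\bigr)=\mathcal{Q}_{ij}^{(u_{ij})},\qquad \sigma_i\bigl(\mathcal{Q}_{ij}^{(u)}\bigr)=-\mathcal{Q}_{ij}^{(u_i)},\qquad \sigma_j\bigl(\mathcal{Q}_{ij}^{(u)}\bigr)=-\mathcal{Q}_{ij}^{(u_j)}. \]
The rotation identity is immediate; the two reflection identities are where the oddness of $\phi$ in its last argument enters, since after relabelling the diagonal leg of $\sigma_i(\mathcal{Q}_{ij}^{(u)})$ carries argument $\alpha_i-\alpha_j$ whereas that of $-\mathcal{Q}_{ij}^{(u_i)}$ carries $\alpha_j-\alpha_i$, and the two are reconciled precisely by $\phi(\,\cdot\,,\,\cdot\,,\gamma)=-\phi(\,\cdot\,,\,\cdot\,,-\gamma)$ (the short $\psi$-legs already match).

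Next I would use the $D_4$ symmetry built into the ABS classification: every quad polynomial obeys $\sigma_i(Q_{ij})=\pm Q_{ij}$, $\sigma_j(Q_{ij})=\pm Q_{ij}$ and $\rho(Q_{ij})=\pm Q_{ij}$, so each relabelling preserves $\{Q_{ij}=0\}$. Applying $\sigma_i$ to the base-$u$ equivalence then turns $Q_{ij}=0\iff\mathcal{Q}_{ij}^{(u)}=0$ into $Q_{ij}=0\iff\mathcal{Q}_{ij}^{(u_i)}=0$, because $\sigma_i$ fixes the left-hand condition and sends the right-hand expression to $-\mathcal{Q}_{ij}^{(u_i)}$; likewise $\rho$ and $\sigma_j$ yield the base-$u_{ij}$ and base-$u_j$ statements. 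This settles the additive equations H1, A1$_{\delta=0}$, Q1$_{\delta=0}$ outright, and there the exact oddness of the genuine function $\phi$ is what makes the reflection identities hold on the nose.

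The only delicate point is the multiplicative case, singled out as subtle in Part~I, where $\psi=\log\Psi$, $\phi=\log\Phi$ and the equivalence is read modulo $2\pi\mathrm{i}$. This actually works in our favour: reasoning modulo $2\pi\mathrm{i}$ on both sides, one only needs the reflection identities up to a multiple of $2\pi\mathrm{i}$, and $\log\Phi(\,\cdot\,,\,\cdot\,,-\gamma)\equiv-\log\Phi(\,\cdot\,,\,\cdot\,,\gamma)\bmod2\pi\mathrm{i}$ follows automatically from $\Phi(\,\cdot\,,\,\cdot\,,-\gamma)=\Phi(\,\cdot\,,\,\cdot\,,\gamma)^{-1}$, so the congruence class is transported from $u$ to each other base just as the value zero is transported in the additive case. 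I expect the real effort to lie less in any single step than in keeping the $\pm$ factors of the $D_4$-action and the branch conventions for $\phi$ globally consistent, so that the transported statements are the ones claimed --- precisely the bookkeeping that Part~I shows is easy to mishandle.
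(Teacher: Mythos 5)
Your argument is correct. Note that the paper itself gives no proof of this proposition: it is stated as a summary (``In summary, we have the following'') of the three-leg-form facts established in the ABS literature and in Part~I, with the base-$u$ equivalence taken as the defining property \eqref{three-leg-additive-0}/\eqref{three-leg-additive-2pii} and the forms \eqref{quad-ui}--\eqref{quad-uj} simply listed. Your reconstruction --- propagating the base-$u$ equivalence to the other three vertices via the square symmetries $\sigma_i$, $\sigma_j$, $\rho$, using the stated oddness of $\phi$ in its last argument to identify $\sigma_i\bigl(\mathcal{Q}_{ij}^{(u)}\bigr)=-\mathcal{Q}_{ij}^{(u_i)}$ (and similarly for $\sigma_j$), and the $D_4$ symmetry of the multi-affine polynomial to see that each relabelling preserves $\{Q_{ij}=0\}$ --- is a sound and self-contained way to obtain exactly what the paper asserts; the sign ambiguities are harmless since only zero sets (or congruence classes modulo $2\pi\mathrm{i}$, which are stable under negation) are compared. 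Your fallback in the multiplicative case, needing the reflection identities only modulo $2\pi\mathrm{i}$, is not even required here, since the paper states the oddness of $\phi$ exactly rather than modulo $2\pi\mathrm{i}$.
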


The three-leg forms of quad equations on adjacent faces of the cube combine to form an equation on a tetrahedral stencil. We can base the three-leg form of the tetrahedron equation at any of its vertices, as illustrated in Figure~\ref{fig-tetra},
\begin{align}
 &\mathcal{T}^{(u)}:= \phi(u,u_{ij},\alpha_i-\alpha_j) + \phi(u,u_{jk},\alpha_j-\alpha_k) + \phi(u,u_{ki},\alpha_k-\alpha_i) , \label{tetra-u}\\
	&\mathcal{T}^{(u_{ij})}:= \phi(u_{ij},u,\alpha_i-\alpha_j) + \phi(u_{ij},u_{ki},\alpha_j-\alpha_k) + \phi(u_{ij},u_{jk}, \alpha_k-\alpha_i) , \label{tetra-uij}\\
	&\mathcal{T}^{(u_{ki})}:= \phi(u_{ki},u_{jk},\alpha_i-\alpha_j) + \phi(u_{ki},u_{ij},\alpha_j-\alpha_k) + \phi(u_{ki},u, \alpha_k-\alpha_i) , \label{tetra-uki}\\
	&\mathcal{T}^{(u_{jk})}:= \phi(u_{jk},u_{ki},\alpha_i-\alpha_j) + \phi(u_{jk},u,\alpha_j-\alpha_k) + \phi(u_{jk},u_{ij}, \alpha_k-\alpha_i) . \label{tetra-ujk}
\end{align}

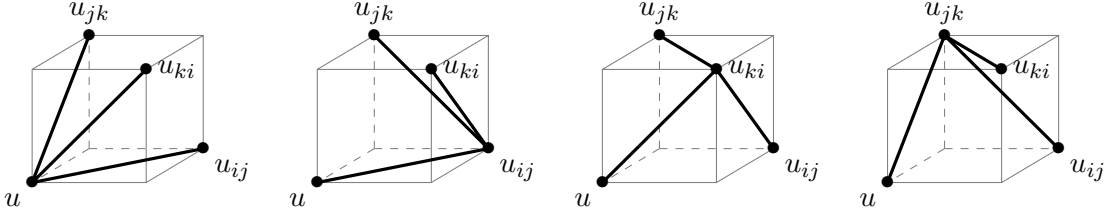
\begin{figure}[t]
\centering
\begin{tikzpicture}[scale=1.5]
	\begin{scope}[y={(5mm, 3mm)}, z={(0cm,1cm)}]
 	\draw[gray, every edge/.append style={dashed}]
		(1,0,0) -- (0,0,0) -- (0,0,1) -- (0,1,1) -- (1,1,1) -- (1,1,0) -- cycle -- (1,0,1) -- (0,0,1)
		(1,0,1) -- (1,1,1)
		(0,1,0) edge (0,0,0) edge (0,1,1) edge (1,1,0);
		\draw[very thick, every edge/.append style={dashed}]
 (0,0,0) node[below left] {$u$} -- (0,1,1) node[above] {$u_{jk}$}
		(0,0,0) -- (1,0,1) node[right] {$u_{ki}$}
		(0,0,0) -- (1,1,0) node[below right] {$u_{ij}$};
 \node at (0,0,0) {$\bullet$};
 \node[gray] at (1,1,0) {$\bullet$};
 \node[gray] at (0,1,1) {$\bullet$};
 \node[gray] at (1,0,1) {$\bullet$};
	\end{scope}
 \begin{scope}[shift=({2.5,0}),scale=1, y={(5mm, 3mm)}, z={(0cm,1cm)}]
 	\draw[gray, every edge/.append style={dashed}]
		(1,0,0) -- (0,0,0) -- (0,0,1) -- (0,1,1) -- (1,1,1) -- (1,1,0) -- cycle -- (1,0,1) -- (0,0,1)
		(1,0,1) -- (1,1,1)
		(0,1,0) edge (0,0,0) edge (0,1,1) edge (1,1,0);
		\draw[very thick, every edge/.append style={dashed}]
 (1,1,0) node[below right] {$u_{ij}$} -- (0,0,0) node[below left] {$u$}
 (1,1,0) -- (0,1,1) node[above] {$u_{jk}$}
		(1,1,0) -- (1,0,1) node[right] {$u_{ki}$};
 \node[gray] at (0,0,0) {$\bullet$};
 \node at (1,1,0) {$\bullet$};
 \node[gray] at (0,1,1) {$\bullet$};
 \node[gray] at (1,0,1) {$\bullet$};
	\end{scope}	
 \begin{scope}[shift=({5,0}),scale=1, y={(5mm, 3mm)}, z={(0cm,1cm)}]
 	\draw[gray, every edge/.append style={dashed}]
		(1,0,0) -- (0,0,0) -- (0,0,1) -- (0,1,1) -- (1,1,1) -- (1,1,0) -- cycle -- (1,0,1) -- (0,0,1)
		(1,0,1) -- (1,1,1)
		(0,1,0) edge (0,0,0) edge (0,1,1) edge (1,1,0);
		\draw[very thick, every edge/.append style={dashed}]
 (1,0,1) node[right] {$u_{ki}$} -- (1,1,0) node[below right] {$u_{ij}$}
 (1,0,1) -- (0,0,0) node[below left] {$u$}
 (1,0,1) -- (0,1,1) node[above] {$u_{jk}$};
 \node[gray] at (0,0,0) {$\bullet$};
 \node[gray] at (1,1,0) {$\bullet$};
 \node[gray] at (0,1,1) {$\bullet$};
 \node at (1,0,1) {$\bullet$};
	\end{scope}	
 \begin{scope}[shift=({7.5,0}),scale=1, y={(5mm, 3mm)}, z={(0cm,1cm)}]
 	\draw[gray, every edge/.append style={dashed}]
		(1,0,0) -- (0,0,0) -- (0,0,1) -- (0,1,1) -- (1,1,1) -- (1,1,0) -- cycle -- (1,0,1) -- (0,0,1)
		(1,0,1) -- (1,1,1)
		(0,1,0) edge (0,0,0) edge (0,1,1) edge (1,1,0);
		\draw[very thick, every edge/.append style={dashed}]
 (0,1,1) node[above] {$u_{jk}$} -- (1,0,1) node[right] {$u_{ki}$}
 (0,1,1) -- (1,1,0) node[below right] {$u_{ij}$}
 (0,1,1) -- (0,0,0) node[below left] {$u$};
 \node[gray] at (0,0,0) {$\bullet$};
 \node[gray] at (1,1,0) {$\bullet$};
 \node at (0,1,1) {$\bullet$};
 \node[gray] at (1,0,1) {$\bullet$};
	\end{scope}	
\end{tikzpicture}
\caption{Four three-leg forms of a tetrahedron equation.}
\label{fig-tetra}
\end{figure}

The ABS equations of type Q have the property that their short and long leg functions are the same, $\phi=\psi$. Each equation of type H and A shares its long leg function $\phi$ with an equation of type Q, but has a different short leg function $\psi$.
Thus, the tetrahedron equation of any member of the ABS list takes the form of a quad equation of type Q, in their respective three-leg forms. Similarly, the multi-affine tetrahedron equation $T=0$ can be described as a quad polynomial of type Q, evaluated on a tetrahedron stencil:
\[ T(u,u_{ij},u_{jk},u_{ki},\alpha_i,\alpha_j,\alpha_k)
 = Q^{(\text{type Q})}(u,u_{ij},u_{jk},u_{ki},\alpha_i-\alpha_j,-\alpha_j+\alpha_k) . \]
A second tetrahedron equation is found from this by point inversion,
\[ T(u_{ijk},u_{k},u_{i},u_{j},\alpha_i,\alpha_j,\alpha_k)
 = Q^{(\text{type Q})}(u_{ijk},u_{k},u_{i},u_{j},\alpha_i-\alpha_j,-\alpha_j+\alpha_k) . \]
Its three-leg forms are
\begin{align}
 &\mathcal{T}^{(u_{ijk})}:= \phi(u_{ijk},u_{k},\alpha_i-\alpha_j) + \phi(u_{ijk},u_{i},\alpha_j-\alpha_k) + \phi(u_{ijk},u_{j},\alpha_k-\alpha_i) ,\label{tetra-uijk}
 \\
	&\mathcal{T}^{(u_{k})}:= \phi(u_{k},u_{ijk},\alpha_i-\alpha_j) + \phi(u_{k},u_{j},\alpha_j-\alpha_k) + \phi(u_{k},u_{i}, \alpha_k-\alpha_i) \label{tetra-uk} ,\\
	&\mathcal{T}^{(u_{j})}:= \phi(u_{j},u_{i},\alpha_i-\alpha_j) + \phi(u_{j},u_{k},\alpha_j-\alpha_k) + \phi(u_{j},u_{ijk}, \alpha_k-\alpha_i) \label{tetra-uj} ,\\
	&\mathcal{T}^{(u_{i})}:= \phi(u_{i},u_{j},\alpha_i-\alpha_j) + \phi(u_{i},u_{ijk},\alpha_j-\alpha_k) + \phi(u_{i},u_{k}, \alpha_k-\alpha_i) \label{tetra-ui} .
\end{align}

The equivalences between the tetrahedron equations in multi-affine and three-leg forms are as follows.
\begin{Proposition}
 %\label{prop-three-leg-tetra}
 For ${\rm H}1$, ${\rm A}1_{\delta=0}$ and ${\rm Q}1_{\delta=0}$, the multi-affine equation
 \[ T(u,u_{ij},u_{jk},u_{ki},\alpha_i,\alpha_j,\alpha_k) = 0\] is equivalent to each of the following:
 \[ \mathcal{T}^{(u)} = 0 \iff \mathcal{T}^{(u_{ij})} = 0 \iff \mathcal{T}^{(u_{jk})} = 0 \iff \mathcal{T}^{(u_{ki})} = 0 .\]

 For ${\rm H}2$, ${\rm H}3$, ${\rm A}1_{\delta\neq0}$, ${\rm A}2$, ${\rm Q}1_{\delta\neq0}$, ${\rm Q}2$, ${\rm Q}3$,
 the multi-affine equation
 \[ T(u,u_{ij},u_{jk},u_{ki},\alpha_i,\alpha_j,\alpha_k) = 0\] is equivalent to each of the following:
 \begin{gather*}
 \mathcal{T}^{(u)} \equiv 0 \mod 2 \pi {\rm i} \iff
 \mathcal{T}^{(u_{ij})} \equiv 0 \mod 2 \pi {\rm i} \iff \\
 \mathcal{T}^{(u_{jk})} \equiv 0 \mod 2 \pi {\rm i} \iff
 \mathcal{T}^{(u_{ki})} \equiv 0 \mod 2 \pi {\rm i} .
 \end{gather*}
Analogous equivalences relate $T(u_{ijk},u_{k},u_{i},u_{j},\alpha_i,\alpha_j,\alpha_k) = 0$ to $\mathcal{T}^{(u_{ijk})}$, $\mathcal{T}^{(u_{k})}$, $\mathcal{T}^{(u_{i})}$, and $\mathcal{T}^{(u_{j})}$.
\end{Proposition}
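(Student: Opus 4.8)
\emph{Proof proposal.} The plan is to exploit the explicit identification, recorded just above the statement, of the multi-affine tetrahedron polynomial with a type-Q quad polynomial evaluated on the tetrahedral stencil. Writing the type-Q quad equation abstractly as $Q(U,U_I,U_J,U_{IJ},A_I,A_J)=0$, the displayed relation $T(u,u_{ij},u_{jk},u_{ki},\alpha_i,\alpha_j,\alpha_k) = Q^{(\text{type Q})}(u,u_{ij},u_{jk},u_{ki},\alpha_i-\alpha_j,-\alpha_j+\alpha_k)$ says exactly that $T=0$ is the image of a genuine ABS quad equation of type Q under the substitution $(U,U_I,U_J,U_{IJ}) = (u,u_{ij},u_{jk},u_{ki})$ together with $(A_I,A_J)=(\alpha_i-\alpha_j,\alpha_k-\alpha_j)$. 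Since this type-Q equation is itself a member of the ABS list, Proposition~\ref{prop-three-leg-quad} applies verbatim to it, giving the equivalence of $Q=0$ with each of its four three-leg forms $\mathcal{Q}^{(U)}$, $\mathcal{Q}^{(U_I)}$, $\mathcal{Q}^{(U_{IJ})}$, $\mathcal{Q}^{(U_J)}$ — exactly for the additive (type~0) equations and modulo $2\pi\mathrm{i}$ for the multiplicative ones. The first group of the present statement corresponds to the type-Q partner Q1$_{\delta=0}$, which is additive, and the second group to partners in the multiplicative class, so the two cases of the proposition line up with the two cases of Proposition~\ref{prop-three-leg-quad}.

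The remaining work is to check that, under this substitution, the four tetrahedron three-leg expressions \eqref{tetra-u}--\eqref{tetra-ujk} coincide up to sign with the four quad three-leg expressions. I would substitute into each $\mathcal{Q}$-form, recall that for type-Q equations $\psi=\phi$, and then apply the oddness of $\phi$ in its last argument, $\phi(a,b,\gamma)=-\phi(a,b,-\gamma)$, to the two terms whose parameter slots acquire the wrong sign. The outcome I expect is
\[ \mathcal{T}^{(u)} = \mathcal{Q}^{(U)}, \qquad \mathcal{T}^{(u_{ij})} = -\mathcal{Q}^{(U_I)}, \qquad \mathcal{T}^{(u_{jk})} = -\mathcal{Q}^{(U_J)}, \qquad \mathcal{T}^{(u_{ki})} = \mathcal{Q}^{(U_{IJ})}, \]
so that each tetrahedron form based at a vertex equals $\pm$ the quad form based at the corresponding vertex. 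Because $\mathcal{X}=0$ if and only if $-\mathcal{X}=0$, and likewise $\mathcal{X}\equiv 0$ if and only if $-\mathcal{X}\equiv 0$ modulo $2\pi\mathrm{i}$, the overall signs are irrelevant for the stated equivalences. Combining these four identities with Proposition~\ref{prop-three-leg-quad} then yields the four-fold equivalence for $T=0$.

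The claim for the second tetrahedron equation $T(u_{ijk},u_k,u_i,u_j,\alpha_i,\alpha_j,\alpha_k)=0$ follows by the identical argument applied to the point-inverted stencil: the corresponding displayed identity again realises it as a type-Q quad equation, now on the vertices $(u_{ijk},u_k,u_i,u_j)$, and the same substitution-plus-oddness computation matches $\mathcal{T}^{(u_{ijk})}$, $\mathcal{T}^{(u_k)}$, $\mathcal{T}^{(u_i)}$, $\mathcal{T}^{(u_j)}$, up to sign, with the four quad three-leg forms of that equation.

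I expect the only genuine obstacle to be the sign bookkeeping in the second paragraph: one must substitute the parameters $A_I=\alpha_i-\alpha_j$ and $A_J=\alpha_k-\alpha_j$ consistently and track precisely which of the three $\phi$-terms in each three-leg form requires the oddness relation, since a single mismatched sign would break the identification. Everything else is structural, and once the four identities above are verified the proposition is an immediate corollary of Proposition~\ref{prop-three-leg-quad}.
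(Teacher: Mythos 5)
Your proposal is correct and matches the paper's intended argument: the paper states this proposition without a formal proof, relying on exactly the identification (described in the preceding paragraphs) of the tetrahedron equation with a type-Q quad equation on the tetrahedral stencil, so that Proposition~\ref{prop-three-leg-quad} applies. Your four sign identities $\mathcal{T}^{(u)} = \mathcal{Q}^{(U)}$, $\mathcal{T}^{(u_{ij})} = -\mathcal{Q}^{(U_I)}$, $\mathcal{T}^{(u_{jk})} = -\mathcal{Q}^{(U_J)}$, $\mathcal{T}^{(u_{ki})} = \mathcal{Q}^{(U_{IJ})}$ check out under the substitution $(A_I,A_J)=(\alpha_i-\alpha_j,\alpha_k-\alpha_j)$ together with the oddness of $\phi$ in its last argument.
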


\section{Lagrangian multiforms}
\label{sec-multiforms}

\subsection{General theory}

The central object of the Lagrangian multiform description of quad equations is a discrete two-form $\cL(u,u_i,u_j,u_{ij},\alpha_i,\alpha_j)$, i.e., a function that is skew-symmetric under the swap of indices~${i \leftrightarrow j}$. We require that, for every choice of 2-dimensional discrete surface in the lattice, the corresponding action sum is critical with respect to variations of the field $u\colon \Z^N \to \C$. Equivalently, we require that the action sum over every elementary cube,{\samepage
\begin{align*}
\begin{split}
 S &= \cL(u_k,u_{ki},u_{jk}, u_{ijk},\alpha_i,\alpha_j) + \cL(u_i,u_{ij},u_{ki}, u_{ijk},\alpha_j,\alpha_k) + \cL(u_j,u_{jk},u_{ij}, u_{ijk},\alpha_k,\alpha_i)
 \\ &\quad{} - \cL(u,u_i,u_j,u_{ij},\alpha_i,\alpha_j) - \cL(u,u_j,u_k,u_{jk},\alpha_j,\alpha_k) - \cL(u,u_k,u_i,u_{ki},\alpha_k,\alpha_i) ,
\end{split}
\end{align*}
is critical with respect to variations of the fields $u, u_i, \dots, u_{ijk}$~\cite{boll2014integrability}.}

Criticality with respect to variations of the fields leads to a set of generalised Euler--Lagrange equations, which we call \emph{corner equations}. (They are also known as \emph{multiform Euler--Lagrange equations} or \emph{multi-time Euler--Lagrange equations}.) For the Lagrangians usually given for the ABS equations, the corner equations are linear combinations of the three-leg forms without $2\pi {\rm i}$ terms, as in equation~\eqref{three-leg-additive-0}. To recover the three-leg equation modulo $2 \pi {\rm i}$, as in equation~\eqref{three-leg-additive-2pii}, we introduced an extended action in Part I:
\begin{align*}
 S^{\vec \Theta,\vec \Xi} &=
 S + 2 \pi {\rm i} ( \Theta u + \Theta_i u_i + \Theta_j u_j + \Theta_k u_k +\Theta_{ij} u_{ij} + \Theta_{jk} u_{jk} + \Theta_{ki} u_{ki} + \Theta_{ijk} u_{ijk}) \\
 &\quad{} + 2 \pi {\rm i} (\Xi_i \alpha_i + \Xi_j \alpha_j + \Xi_k \alpha_k),
\end{align*}
where $\Theta, \Theta_i, \dots, \Theta_{ijk}$ are integers associated to the vertices of the cube, and $\Xi_i$, $\Xi_j$, $\Xi_k$ are integers associated to the lattice directions.

The terms in $S^{\vec \Theta,\vec \Xi}$ involving $\Xi_i$, $\Xi_j$, $\Xi_k$ do not affect the corner equations, but they are key to the second variational principle in Lagrangian multiform theory: the action is invariant with respect to changes in the surface. This is equivalent to the property that the action on each elementary cube vanishes, i.e., \smash{$S^{\vec \Theta,\vec \Xi} = 0$} on solutions. \big(Except in some cases we only have \smash{$S^{\vec \Theta,\vec \Xi} \equiv 0 \mod 4 \pi^2$}.\big) Since this action can be thought of as the discrete exterior derivative of the Lagrangian 2-form, it is also referred to as the \emph{closure relation}.

Similarly, for the equations H1, A1$_{\delta=0}$ and Q1$_{\delta=0}$ with a three-leg form of type~\eqref{three-leg-additive-0}, we consider
\begin{align*}
 S^{\vec \Xi} &= S + 2 \pi {\rm i} (\Xi_i \alpha_i + \Xi_j \alpha_j + \Xi_k \alpha_k)
\end{align*}
in order to have a closure relation.

\subsection{Lagrangian leg functions}

The discrete Lagrangian 2-forms we consider in this work are linear combinations of functions of two lattice sites only.
After a suitable transformation of the variable $u$, there exist functions~$L$ and~$\Lambda$ such that the leg functions $\psi$ and $\phi$ can be expressed as
\begin{align*}
	&\psi(u,u_i, \alpha)= \pdv{}{u} L(u,u_i,\alpha_i) , \qquad
	\phi(u,u_{ij}, \alpha_i - \alpha_j)= \pdv{}{u} \Lambda(u,u_{ij},\alpha_i-\alpha_j) ,
\end{align*}
and also as
\begin{align*}
	&\pdv{}{u_i} L(u,u_i,\alpha_i)= \psi(u_i,u,\alpha_i) , \qquad
	\pdv{}{u_{ij}} \Lambda(u,u_{ij},\alpha_i-\alpha_j)= \phi(u_{ij},u, \alpha_i-\alpha_j) .
\end{align*}

A key element in the ABS classification is provided by the biquadratics associated to a quad equation, i.e., the polynomials $h$ and $g$ satisfying
 \begin{align*}
 & h(u,u_i,\alpha_i)= \frac{1}{k(\alpha_i,\alpha_j)} \biggl( Q \pdv{^2 Q}{u_j \partial u_{ij}} - \pdv{Q}{u_j}\pdv{Q}{u_{ij}} \biggr) , \\
 & g(u,u_{ij},\alpha_i-\alpha_j)= \frac{1}{k(\alpha_i,\alpha_j)} \biggl( Q \pdv{^2 Q}{u_i \partial u_j} - \pdv{Q}{u_i}\pdv{Q}{u_j} \biggr) ,
 \end{align*}
 where $k$ is a skew-symmetric function, chosen such that $h$ only depends on the indicated lattice parameter. They are related to the $L$ and $\Lambda$ by
\cite[Lemma 3]{bobenko2010lagrangian}{\samepage
 \begin{align}
	 &\pdv{L(u,u_i,\alpha_i)}{\alpha_i}\equiv \log h(u,u_i,\alpha_i) + \kappa(u) + \kappa(u_i) + c(\alpha_i) \mod 2 \pi {\rm i} \label{L-log(h)}\\
	 &\pdv{\Lambda(u,u_{ij},\alpha_i-\alpha_j)}{\alpha_i}\equiv \log g(u,u_{ij},\alpha_i-\alpha_j) + \kappa(u) + \kappa(u_{ij}) - \gamma(\alpha_i-\alpha_j) \mod 2 \pi {\rm i} , \label{Lambda-log(g)}
 \end{align}
 for some functions $\kappa$, $c$, $\gamma$.}

The quad equation $Q_{ij} = 0$ implies the following identities involving the biquadratics (see~\cite[Lemma 1]{bobenko2010lagrangian} and~\cite[Proposition 15]{adler2003classification}):
\begin{align}
	h(u,u_i,\alpha_i)h(u_{ij},u_j,\alpha_i)
 &= h(u,u_j,\alpha_j)h(u_{ij},u_i,\alpha_j) \nonumber \\
 &= g(u,u_{ij},\alpha_i-\alpha_j)g(u_i,u_j,\alpha_i-\alpha_j) .\label{biquadratric-quad}
\end{align}
Similarly, the tetrahedron equations
\[ T(u,u_{ij},u_{jk},u_{ki},\alpha_i,\alpha_j,\alpha_k)=0 \qquad \text{and} \qquad T(u_{ijk},u_{k},u_{i},u_{j},\alpha_i,\alpha_j,\alpha_k)=0\]
imply
\begin{align}
	g(u,u_{ij},\alpha_i-\alpha_j)g(u_{ki},u_{jk},\alpha_i-\alpha_j) & = g(u,u_{jk},\alpha_j-\alpha_k)g(u_{ki},u_{ij},\alpha_k-\alpha_i) \nonumber \\
	& = g(u,u_{ki},\alpha_k-\alpha_i)g(u_{ij},u_{jk},\alpha_k-\alpha_i)\label{biquadratric-tetra}
\end{align}
and
\begin{align}
	g(u_{ijk},u_{k},\alpha_i-\alpha_j)g(u_{j},u_{i},\alpha_i-\alpha_j) & = g(u_{ijk},u_{i},\alpha_j-\alpha_k)g(u_{j},u_{k},\alpha_k-\alpha_i) \nonumber \\
	& = g(u_{ijk},u_{j},\alpha_k-\alpha_i)g(u_{k},u_{i},\alpha_k-\alpha_i),\label{biquadratric-tetra-inverted}
\end{align}
respectively.

\subsection{Cross 2-form: tetrahedron equations are variational}
\label{sec-cross}

We will show that the tetrahedron equations arise as corner equations of the following 2-form, which we call the \emph{cross Lagrangian},
\[
 \cL_\cross(u,u_i,u_j,u_{ij},\alpha_i-\alpha_j) = \Lambda(u_i, u_j,\alpha_i-\alpha_j) - \Lambda(u, u_{ij},\alpha_i-\alpha_j) .
\]
The name for this Lagrangian is inspired on its leg structure, which is shown in Figure~\ref{fig-cross}\,(a). Note that the lattice is a bipartite graph, we can colour the vertices such that each edge links a black vertex with white vertex. Then each term of the cross Lagrangian involves two lattice sites of the same colour, so the cross Lagrangian (and its action) can be decomposed into one contribution from the black graph, and one from the white graph. These could be studied separately, as was done in~\cite{bobenko2012discrete, bobenko2010lagrangian} in the more general context of Laplace-type equations on a~bipartite quad graph.

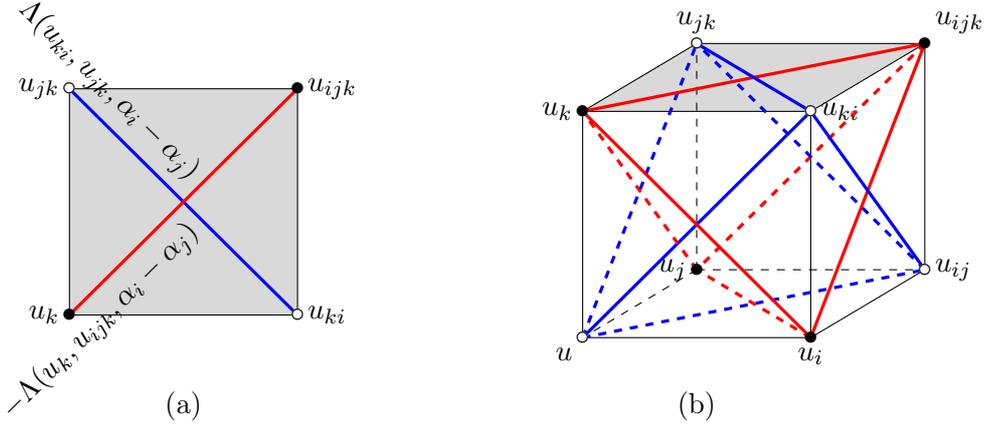
\begin{figure}[t]
\centering
\begin{tikzpicture}[scale=3]
	\begin{scope}[shift=({1.75,0})]
 \fill[gray!30] (0,0) -- (1,0) -- (1,1) -- (0,1) -- cycle;
		\draw (0,0) node[left] {$u_k$} -- (1,0) node[right] {$u_{ki}$} -- (1,1) node[right] {$u_{ijk}$} -- (0,1) node[left] {$u_{jk}$} -- cycle;
	 \draw[very thick, blue] (0,1) -- node[above left, rotate=-45, black] {\ $\Lambda(u_{ki},u_{jk}, \alpha_i-\alpha_j)$} (1,0);
		\draw[very thick, red] (1,1) -- node[below left, rotate=45, black] {\ $-\Lambda(u_k,u_{ijk}, \alpha_i-\alpha_j)$} (0,0);
 \node at (.5,-.4) {(a)};
 \node at (0,0) {$\bullet$};
 \node[white] at (1,0) {$\bullet$};
 \node[white] at (0,1) {$\bullet$};
 \node at (1,0) {$\circ$};
 \node at (0,1) {$\circ$};
 \node at (1,1) {$\bullet$};
	\end{scope}
	\begin{scope}[shift=({4,-.1}),scale=1, y={(5mm, 3mm)}, z={(0cm,1cm)}]
 \fill[gray!30] (0,0,1) -- (1,0,1) -- (1,1,1) -- (0,1,1) -- cycle;
		\draw[every edge/.append style={dashed}]
		(1,0,0) node[below] {$u_i$} -- (0,0,0) node[below left] {$u$} -- (0,0,1) node[left] {$u_k$} -- (0,1,1) node[above] {$u_{jk}$} -- (1,1,1) node[above right] {$u_{ijk}$} -- (1,1,0) node[right] {$u_{ij}$}-- cycle -- (1,0,1) node[right] {$u_{ki}$}-- (0,0,1)
		(1,0,1) -- (1,1,1)
		(0,1,0) edge (0,0,0) edge (0,1,1) edge (1,1,0) node[left] {$u_j$};
		
		\draw[very thick, blue, dashed] (0,0,0) -- (0,1,1) ;
		\draw[very thick, blue] (0,0,0) -- (1,0,1) ;
		\draw[very thick, blue, dashed] (0,0,0) -- (1,1,0) ;
		
		\draw[very thick, red] (1,1,1) -- (0,0,1) ;
		\draw[very thick, red] (1,1,1) -- (1,0,0) ;
		\draw[very thick, red, dashed] (1,1,1) -- (0,1,0) ;

 	\draw[very thick, blue, dashed] (1,1,0) -- (0,1,1) ;
 	\draw[very thick, blue] (0,1,1) -- (1,0,1) ;
 	\draw[very thick, blue] (1,0,1) -- (1,1,0) ;
 	
 	\draw[very thick, red, dashed] (1,0,0) -- (0,1,0) ;
 	\draw[very thick, red, dashed] (0,1,0) -- (0,0,1) ;
 	\draw[very thick, red] (0,0,1) -- (1,0,0) ;
 \node at (1,-1) {(b)};
 \node[white] at (0,0,0) {$\bullet$};
 \node at (0,0,0) {$\circ$};
 \node at (1,0,0) {$\bullet$};
 \node at (0,1,0) {$\bullet$};
 \node at (0,0,1) {$\bullet$};
 \node[white] at (1,1,0) {$\bullet$};
 \node[white] at (0,1,1) {$\bullet$};
 \node[white] at (1,0,1) {$\bullet$};
 \node at (1,1,0) {$\circ$};
 \node at (0,1,1) {$\circ$};
 \node at (1,0,1) {$\circ$};
 \node at (1,1,1) {$\bullet$};
	\end{scope}	
\end{tikzpicture}
\caption[.]{(a) The leg structure of a single Lagrangian $\cL_\cross(u_k,u_{ki},u_{jk},u_{ijk},\alpha_i,\alpha_j)$. (b) The leg structure for the action on an elementary cube of the cross 2-form $\cL_\cross$.}\label{fig-cross}
% added "[.]" to fix error, inspired by https://stackoverflow.com/questions/2716227/adding-an-equation-or-formula-to-a-figure-caption-in-latex
\end{figure}

Consider the action of $\cL_\cross$ over an elementary cube
\begin{align}
 S_\cross &= \Lambda(u_{ki}, u_{jk},\alpha_i-\alpha_j) - \Lambda(u_k, u_{ijk},\alpha_i-\alpha_j) \notag \\
 &\quad{} - \Lambda(u_i, u_j,\alpha_i-\alpha_j) + \Lambda(u, u_{ij},\alpha_i-\alpha_j) + \cyclic \notag\\
 & = \Lambda(u, u_{ij},\alpha_i-\alpha_j) - \Lambda(u_i, u_j,\alpha_i-\alpha_j) - \reverse + \cyclic ,\label{Scross}
\end{align}
where $\cyclic$ denotes the expressions obtained from the preceding one by cyclic permutations, and~\smash{\raisebox{-1pt}{$\reverse$}} denotes the point inversion of the preceding terms, i.e., the expression obtained by~interchanging $u \leftrightarrow u_{ijk}$, $u_i \leftrightarrow u_{jk}$, etc.
Consider also he extended actions
\begin{align*}
 &S^{\vec \Xi}_\cross= S_\cross + 2 \pi {\rm i} (\Xi_i \alpha_i + \Xi_j \alpha_j + \Xi_k \alpha_k) , \\
 &S^{\vec \Theta,\vec \Xi}_\cross= S_\cross + 2 \pi {\rm i} ( \Theta u + \Theta_i u_i + \Theta_j u_j + \Theta_k u_k +\Theta_{ij} u_{ij} + \Theta_{jk} u_{jk} + \Theta_{ki} u_{ki} + \Theta_{ijk} u_{ijk}) \\
 &\hphantom{S^{\vec \Theta,\vec \Xi}_\cross=}{} + 2 \pi {\rm i} (\Xi_i \alpha_i + \Xi_j \alpha_j + \Xi_k \alpha_k).
\end{align*}

\begin{Proposition}
 \label{prop-cross}
 For ${\rm H}1$, ${\rm A}1_{\delta=0}$ and ${\rm Q}1_{\delta=0}$, the tetrahedron equations
 \[ T(u,u_{ij},u_{jk},u_{ki},\alpha_i,\alpha_j,\alpha_k) = 0 ,
 \qquad
 T(u_{ijk},u_{k},u_{i},u_{j},\alpha_i,\alpha_j,\alpha_k) = 0 \]
 are satisfied if and only if there exist $\Xi_i,\Xi_j,\Xi_k \in \mathbb{Z}$ such that
 \begin{alignat*}{6}
 &\pdv{S^{\vec \Xi}_\cross}{u} = 0 , \qquad&&
 \pdv{S^{\vec \Xi}_\cross}{u_i} = 0 , \qquad&&
 \pdv{S^{\vec \Xi}_\cross}{u_j} = 0 , \qquad&&
 \pdv{S^{\vec \Xi}_\cross}{u_k} = 0 ,&& &\\
 & && \pdv{S^{\vec \Xi}_\cross}{u_{jk}} = 0 , \qquad&&
 \pdv{S^{\vec \Xi}_\cross}{u_{ki}} = 0 , \qquad&&
 \pdv{S^{\vec \Xi}_\cross}{u_{ij}} = 0 , \qquad&&
 \pdv{S^{\vec \Xi}_\cross}{u_{ijk}} = 0 ,&
 \end{alignat*}
 and
 \begin{equation*}
 \pdv{S^{\vec \Xi}_\cross}{\alpha_i} = 0 , \qquad
 \pdv{S^{\vec \Xi}_\cross}{\alpha_j} = 0 , \qquad
 \pdv{S^{\vec \Xi}_\cross}{\alpha_k} = 0 .
 \end{equation*}

 For ${\rm H}2$, ${\rm H}3$, ${\rm A}1_{\delta\neq0}$, ${\rm A}2$, ${\rm Q}1_{\delta\neq0}$, ${\rm Q}2$, ${\rm Q}3$, the tetrahedron equations are satisfied if and only if there exist $\Theta,\Theta_i,\dots,\Theta_{ijk} \in \mathbb{Z}$ and $\Xi_i,\Xi_j,\Xi_k \in \mathbb{Z}$ such that
 \begin{alignat*}{6}
 &\pdv{S^{\vec \Theta,\vec \Xi}_\cross}{u} = 0 ,\qquad &&
 \pdv{S^{\vec \Theta,\vec \Xi}_\cross}{u_i} = 0 ,\qquad &&
 \pdv{S^{\vec \Theta,\vec \Xi}_\cross}{u_j} = 0 ,\qquad &&
 \pdv{S^{\vec \Theta,\vec \Xi}_\cross}{u_k} = 0 ,&& & \\
 & && \pdv{S^{\vec \Theta,\vec \Xi}_\cross}{u_{jk}} = 0 ,\qquad &&
 \pdv{S^{\vec \Theta,\vec \Xi}_\cross}{u_{ki}} = 0 ,\qquad &&
 \pdv{S^{\vec \Theta,\vec \Xi}_\cross}{u_{ij}} = 0 ,\qquad &&
 \pdv{S^{\vec \Theta,\vec \Xi}_\cross}{u_{ijk}} = 0 , &
 \end{alignat*}
 and
 \begin{equation*}
 \pdv{S^{\vec \Theta,\vec \Xi}_\cross}{\alpha_i} = 0 , \qquad
 \pdv{S^{\vec \Theta,\vec \Xi}_\cross}{\alpha_j} = 0 , \qquad
 \pdv{S^{\vec \Theta,\vec \Xi}_\cross}{\alpha_k} = 0 .
 \end{equation*}
\end{Proposition}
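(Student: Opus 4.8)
The plan is to compute all eleven corner equations of the cube action~\eqref{Scross} explicitly, recognise the eight field equations as the three-leg forms of the two tetrahedron equations, and reduce the three parameter equations to the biquadratic identities. The equivalence with the tetrahedron equations then follows from the tetrahedron analogue of Proposition~\ref{prop-three-leg-quad}. First I would differentiate $S_\cross$ with respect to each vertex field, using $\phi(u,u_{ij},\alpha_i-\alpha_j)=\pdv{}{u}\Lambda(u,u_{ij},\alpha_i-\alpha_j)$ and $\pdv{}{u_{ij}}\Lambda(u,u_{ij},\alpha_i-\alpha_j)=\phi(u_{ij},u,\alpha_i-\alpha_j)$. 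Since each vertex occurs in exactly three of the twelve $\Lambda$-terms of~\eqref{Scross}, I expect $\pdv{S_\cross}{u}=\mathcal{T}^{(u)}$ and $\pdv{S_\cross}{u_{ij}}=\mathcal{T}^{(u_{ij})}$, with~\eqref{tetra-u}--\eqref{tetra-ujk}, and cyclically for $u_{jk}$, $u_{ki}$; the point-inversion symmetry of $S_\cross$ gives $\pdv{S_\cross}{u_{ijk}}=-\mathcal{T}^{(u_{ijk})}$, and $-\mathcal{T}^{(u_k)}$, $-\mathcal{T}^{(u_i)}$, $-\mathcal{T}^{(u_j)}$ for the remaining vertices, reproducing~\eqref{tetra-uijk}--\eqref{tetra-ui}. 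Each $\Theta$-term shifts the corresponding vertex equation by a single integer multiple of $2\pi{\rm i}$.

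For the eight vertex equations the stated equivalence is then a direct reading-off. In the forward direction, if both tetrahedron equations hold, the tetrahedron three-leg proposition gives every $\mathcal{T}^{(\cdot)}\equiv 0 \mod 2\pi{\rm i}$ (exactly $0$ in the additive case~\eqref{three-leg-additive-0}), so each vertex equation is solved by a suitable choice of its $\Theta$; for H1, A1$_{\delta=0}$, Q1$_{\delta=0}$ no $\Theta$ is needed and $S^{\vec\Xi}_\cross$ suffices. Conversely, a single vanishing vertex equation forces the associated three-leg expression to lie in $2\pi{\rm i}\Z$, and hence the corresponding tetrahedron equation by the same proposition; already $\pdv{}{u}$ and $\pdv{}{u_{ijk}}$ recover both tetrahedron equations.

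The substantive step is the three parameter derivatives. I would compute $\pdv{S_\cross}{\alpha_i}$ from~\eqref{Scross} by collecting the four $ij$-terms (carrying $\alpha_i-\alpha_j$) and the four $ki$-terms (carrying $\alpha_k-\alpha_i$), applying~\eqref{Lambda-log(g)}; the $ki$-terms contribute with a reversed overall sign, since $\alpha_i$ enters them only through $\alpha_k-\alpha_i$. Within each group the four $\Lambda$-terms carry signs $+,-,-,+$, so the $\gamma$-contributions cancel, and I expect the $\kappa$-contributions of the $ij$- and $ki$-groups to cancel against one another vertex by vertex, leaving a pure sum of $\log g$ terms. This residual should split into a first-tetrahedron part in $g(u,u_{ij},\cdot)$, $g(u_{jk},u_{ki},\cdot)$, $g(u,u_{ki},\cdot)$, $g(u_{ij},u_{jk},\cdot)$ and an inverted part; using the symmetry of the biquadratic $g$ in its two field slots, each part is the logarithm of one of~\eqref{biquadratric-tetra} and~\eqref{biquadratric-tetra-inverted}, hence $\equiv 0 \mod 2\pi{\rm i}$ on solutions. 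Thus $\pdv{S_\cross}{\alpha_i}\equiv 0 \mod 2\pi{\rm i}$, and an integer $\Xi_i$ makes $\pdv{S^{\vec\Xi}_\cross}{\alpha_i}=0$; the same holds cyclically. The parameter equations add no constraint beyond fixing the $\Xi$'s, so the converse is unaffected.

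The main obstacle I anticipate is the bookkeeping in this last step: tracking the chain-rule sign in the $ki$-group, verifying the complete cancellation of the $\kappa$- and $\gamma$-terms over the whole cube, and matching the surviving $\log g$ terms to the precise form of~\eqref{biquadratric-tetra} and~\eqref{biquadratric-tetra-inverted}, in particular identifying which of the three equal products in each identity the group reproduces. The vertex equations, by contrast, follow almost immediately once the derivatives of $\Lambda$ are substituted.
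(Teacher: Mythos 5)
Your proposal is correct and follows essentially the same route as the paper: identify the eight field derivatives of $S_\cross$ with the tetrahedron three-leg forms \eqref{tetra-u}--\eqref{tetra-ui} (up to sign and $2\Theta\pi{\rm i}$ shifts) and invoke the three-leg/multi-affine equivalence, then compute the $\alpha$-derivatives via \eqref{Lambda-log(g)} and reduce them to the biquadratic tetrahedron identities \eqref{biquadratric-tetra} and \eqref{biquadratric-tetra-inverted}. The signs you predict for the corner equations and the cancellation of the $\kappa$- and $\gamma$-contributions all check out against the paper's computation.
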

\begin{proof}\allowdisplaybreaks
 We give the proof for the case of H2, H3, A1$_{\delta\neq0}$, A2, Q1$_{\delta\neq0}$, Q2, Q3. The argument for H1, A1$_{\delta=0}$ and Q1$_{\delta=0}$ is obtained from this by setting $\vec \Theta = \vec 0$.
 In the partial derivatives of~\eqref{Scross}, we recognise three-leg forms of the tetrahedron equations,~\eqref{tetra-u}--\eqref{tetra-ui}. We have
 \begin{gather}
 \pdv{S^{\vec \Theta,\vec \Xi}_\cross}{u}
 = \phi(u,u_{ij},\alpha_i - \alpha_j) + \cyclic + 2 \Theta \pi {\rm i}
 = \mathcal{T}^{(u)} + 2 \Theta \pi {\rm i} , \label{cross-corners-u} \\
 \pdv{S^{\vec \Theta,\vec \Xi}_\cross}{u_i}
 = -\phi(u_i,u_j,\alpha_i - \alpha_j) - \phi(u_i,u_k,\alpha_k - \alpha_i) - \phi(u_i,u_{ijk},\alpha_j - \alpha_k) + 2 \Theta_i \pi {\rm i} \notag\\
 \hphantom{\pdv{S^{\vec \Theta,\vec \Xi}_\cross}{u_i}}{} = -\mathcal{T}^{(u_i)} + 2 \Theta_i \pi {\rm i} , \\
 \pdv{S^{\vec \Theta,\vec \Xi}_\cross}{u_{ij}}
 = \phi(u_{ij},u,\alpha_i-\alpha_j) + \phi(u_{ij},u_{jk},\alpha_k - \alpha_i) + \phi(u_{ij},u_{ki},\alpha_j-\alpha_k) + 2 \Theta_{ij} \pi {\rm i} \notag\\
 \hphantom{\pdv{S^{\vec \Theta,\vec \Xi}_\cross}{u_{ij}}}{} = \mathcal{T}^{(u_{ij})} + 2 \Theta_{ij} \pi {\rm i} , \\
 \pdv{S^{\vec \Theta,\vec \Xi}_\cross}{u_{ijk}}
 = -\phi(u_{ijk},u_k,\alpha_i-\alpha_j) - \phi(u_{ijk},u_j,\alpha_k - \alpha_i) - \phi(u_{ijk},u_{i},\alpha_j-\alpha_k) + 2 \Theta_{ijk} \pi {\rm i} \notag\\
 \hphantom{\pdv{S^{\vec \Theta,\vec \Xi}_\cross}{u_{ijk}}}{} = -\mathcal{T}^{(u_{ijk})} + 2 \Theta_{ijk} \pi {\rm i} , \label{cross-corners-uijk}
 \end{gather}
 and expressions obtained from these by cyclic permutation of the indices.
 According to Proposition~\ref{prop-three-leg-quad}, there exist $\Theta, \Theta_i, \dots, \Theta_{ijk} \in \Z$ such that these expressions are zero if and only if the multi-affine tetrahedron equation $T = 0$ holds.

 To prove the second claim, we use equations~\eqref{L-log(h)} and~\eqref{Lambda-log(g)} to compute
 \begin{align*}
 \pdv{S^{\vec \Theta,\vec \Xi}_\cross}{\alpha_i}
 &= \log(g(u,u_{ij},\alpha_i-\alpha_j)) - \log(g(u_i,u_j,\alpha_i-\alpha_j)) - \reverse + \cyclic + 2 \Xi_i \pi {\rm i} \\
 &\equiv \log \biggl( \frac{g(u,u_{ij},\alpha_i-\alpha_j) g(u_{ki},u_{jk},\alpha_i-\alpha_j)}{g(u_k,u_{ijk},\alpha_i-\alpha_j) g(u_i,u_j,\alpha_i-\alpha_j)} \biggr) + \cyclic \mod 2 \pi {\rm i} .
 \end{align*}
 By virtue of equations~\eqref{biquadratric-tetra} and~\eqref{biquadratric-tetra-inverted}, the logarithm in this expression vanishes on the tetrahedron equations, so given a solution to the tetrahedron equations, we can choose $\Xi_i \in \Z$ such that\looseness=-1
 \[
 \pdv{S^{\vec \Theta,\vec \Xi}_\cross}{\alpha_i} = 0.\tag*{\qed}
 \]\renewcommand{\qed}{}
\end{proof}

The closure of the cross-multiform is closely related to the star-triangle relation: the closure relation splits into contributions on the black and white parts of the bipartite graph, each of which encodes a star-triangle relation, see~\cite[Theorem~2]{bobenko2010lagrangian}.
More about the relation between ABS equations and star-triangle relations can be found, for example, in~\cite{bazhanov2016quasiclassical}. Note that our strategy of including integer fields to account for branch cuts has previously been used in the context of star-triangle relations~\cite{kels2021interaction, kels2023Twocomponent}.

\subsection{Trident 2-form: quad equations are variational}
%\label{sec-trident}

One of the main results of Part~I was that the quad equations of the ABS list are variational. We showed that the trident Lagrangian
\[
 \cL_\trileg(u,u_i,u_j,u_{ij},\alpha_i,\alpha_j) := L(u,u_i,\alpha_i) - L(u,u_j,\alpha_j) - \Lambda(u,u_{ij}, \alpha_i-\alpha_j) ,
\]
has corner equations that produce the quad equations directly (in their three-leg form).
This is illustrated in Figure~\ref{fig-trid}, where it can be seen that at each vertex there are three legs contributing to the action around the cube, which either lie in a single quad, or span a tetrahedron.

\begin{figure}[t]
\centering
\begin{tikzpicture}[scale=3]
	\begin{scope}[shift=({1.75,0})]
 \fill[gray!30] (0,0) -- (1,0) -- (1,1) -- (0,1) -- cycle;
		\draw (0,0) node[below left] {$u_k$} -- (1,0) node[below right] {$u_{ki}$} -- (1,1) node[above right] {$u_{ijk}$} -- (0,1) node[above left] {$u_{jk}$} -- cycle;
		\draw[very thick, blue] (0,0) -- node[below, black] {$L(u_k,u_{ki},\alpha_i)$} (1,0);
		\draw[very thick, red] (0,1) -- node[above, rotate=90, black] {$-L(u_k,u_{jk},\alpha_j)$} (0,0);
		\draw[very thick, red] (1,1) -- node[above, rotate=45, black] {\ $-\Lambda(u_k,u_{ijk}, \alpha_i-\alpha_j)$} (0,0);
 \node at (.5,-.4) {(a)};
 \node at (0,0) {$\bullet$};
 \node at (1,0) {$\bullet$};
 \node at (0,1) {$\bullet$};
 \node at (1,1) {$\bullet$};
	\end{scope}
	\begin{scope}[shift=({4,-.1}),scale=1, y={(5mm, 3mm)}, z={(0cm,1cm)}]
 \fill[gray!30] (0,0,1) -- (1,0,1) -- (1,1,1) -- (0,1,1) -- cycle;
		\draw[every edge/.append style={dashed}]
		(1,0,0) node[below] {$u_i$} -- (0,0,0) node[below left] {$u$} -- (0,0,1) node[left] {$u_k$} -- (0,1,1) node[above] {$u_{jk}$} -- (1,1,1) node[above right] {$u_{ijk}$} -- (1,1,0) node[right] {$u_{ij}$}-- cycle -- (1,0,1) node[right] {$u_{ki}$}-- (0,0,1)
		(1,0,1) -- (1,1,1)
		(0,1,0) edge (0,0,0) edge (0,1,1) edge (1,1,0) node[left] {$u_j$};
		
		\draw[very thick, red, dashed] (1,1,0) -- (0,1,0);
		\draw[very thick, blue, dashed] (0,1,0) -- (0,1,1);
		
		\draw[very thick, blue, dashed] (0,0,0) -- (0,1,1) ;
		\draw[very thick, blue] (0,0,0) -- (1,0,1) ;
		\draw[very thick, blue, dashed] (0,0,0) -- (1,1,0) ;
		
		\draw[very thick, red] (1,1,1) -- (0,0,1) ;
		\draw[very thick, red] (1,1,1) -- (1,0,0) ;
		\draw[very thick, red, dashed] (1,1,1) -- (0,1,0) ;
		
		\draw[very thick, blue] (1,0,0) -- (1,1,0);
		\draw[very thick, red] (0,1,1) -- (0,0,1);
		\draw[very thick, blue] (0,0,1) -- (1,0,1);
		\draw[very thick, red] (1,0,1) -- (1,0,0);
 \node at (1,-1) {(b)};
 \node at (0,0,0) {$\bullet$};
 \node at (1,0,0) {$\bullet$};
 \node at (0,1,0) {$\bullet$};
 \node at (0,0,1) {$\bullet$};
 \node at (1,1,0) {$\bullet$};
 \node at (0,1,1) {$\bullet$};
 \node at (1,0,1) {$\bullet$};
 \node at (1,1,1) {$\bullet$};
	\end{scope}	
\end{tikzpicture}
\caption[.]{(a) The leg structure of a single Lagrangian $\cL_\trileg(u_k,u_{ki},u_{jk},u_{ijk},\alpha_i,\alpha_j)$. (b) The leg structure for the action on an elementary cube of the trident 2-form $\cL_\trileg$.
}
\label{fig-trid}
% added "[.]" to fix error, inspired by https://stackoverflow.com/questions/2716227/adding-an-equation-or-formula-to-a-figure-caption-in-latex
\end{figure}

\begin{Proposition} \label{prop-trileg}\samepage
 For ${\rm H}2$, ${\rm H}3$, ${\rm A}1_{\delta\neq0}$, ${\rm A}2$, ${\rm Q}1_{\delta\neq0}$, ${\rm Q}2$, ${\rm Q}3$, the quad equations are satisfied if and only if there exist $\Theta,\Theta_i,\dots,\Theta_{ijk} \in \mathbb{Z}$ and $\Xi_i,\Xi_j,\Xi_k \in \mathbb{Z}$ such that
 \begin{alignat}{6}
 &\pdv{S^{\vec \Theta,\vec \Xi}_\trileg}{u} = 0 , \qquad &&
 \pdv{S^{\vec \Theta,\vec \Xi}_\trileg}{u_i} = 0 , \qquad &&
 \pdv{S^{\vec \Theta,\vec \Xi}_\trileg}{u_j} = 0 , \qquad &&
 \pdv{S^{\vec \Theta,\vec \Xi}_\trileg}{u_k} = 0 , && &\nonumber\\
 & && \pdv{S^{\vec \Theta,\vec \Xi}_\trileg}{u_{jk}} = 0 , \qquad &&
 \pdv{S^{\vec \Theta,\vec \Xi}_\trileg}{u_{ki}} = 0 , \qquad &&
 \pdv{S^{\vec \Theta,\vec \Xi}_\trileg}{u_{ij}} = 0 , \qquad &&
 \pdv{S^{\vec \Theta,\vec \Xi}_\trileg}{u_{ijk}} = 0 , & \label{trid-corners=0}
 \end{alignat}
 and
 \begin{equation}
 \label{trid-alphas}
 \pdv{S^{\vec \Theta,\vec \Xi}_\trileg}{\alpha_i} = 0 , \qquad
 \pdv{S^{\vec \Theta,\vec \Xi}_\trileg}{\alpha_j} = 0 , \qquad
 \pdv{S^{\vec \Theta,\vec \Xi}_\trileg}{\alpha_k} = 0 .
 \end{equation}

 For ${\rm H}1$, ${\rm A}1_{\delta=0}$ and ${\rm Q}1_{\delta=0}$, the quad equations are satisfied if and only if there exist integers $\Xi_i,\Xi_j,\Xi_k \in \mathbb{Z}$ such that equations~\eqref{trid-corners=0} and~\eqref{trid-alphas} hold for $S^{\vec \Xi}$.
\end{Proposition}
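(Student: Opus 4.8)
The plan is to follow the template of the proof of Proposition~\ref{prop-cross}: first evaluate the eight vertex corner equations of the cube action $S_\trileg$ obtained by substituting $\cL_\trileg$ into the general cube action, identify each as a three-leg expression, and invoke Proposition~\ref{prop-three-leg-quad}; then handle the three $\alpha$-derivatives separately by means of the biquadratic relations. A preliminary observation keeps the bookkeeping clean: since the $\Theta$'s enter the extended action only through $2\pi{\rm i}\,\Theta_v u_v$ and the $\Xi$'s only through $2\pi{\rm i}\,\Xi_i\alpha_i$, one has $\partial_{u_v} S^{\vec\Theta,\vec\Xi}_\trileg = \partial_{u_v} S_\trileg + 2\Theta_v\pi{\rm i}$ and $\partial_{\alpha_i} S^{\vec\Theta,\vec\Xi}_\trileg = \partial_{\alpha_i} S_\trileg + 2\Xi_i\pi{\rm i}$, so the integer fields decouple and may be chosen independently at each vertex and in each direction.

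First I would compute the vertex corner equations using $\psi=\frac{\partial L}{\partial u}$, $\phi=\frac{\partial\Lambda}{\partial u}$ together with the companion relations $\frac{\partial}{\partial u_i}L(u,u_i,\alpha_i)=\psi(u_i,u,\alpha_i)$ and $\frac{\partial}{\partial u_{ij}}\Lambda(u,u_{ij},\alpha_i-\alpha_j)=\phi(u_{ij},u,\alpha_i-\alpha_j)$. A direct calculation splits the eight equations into two families. At the six ``mixed'' vertices each reduces to a single quad three-leg form: for instance $\partial_{u_{ij}} S_\trileg = -\mathcal{Q}_{ij}^{(u_{ij})}$, the form~\eqref{quad-uij} of the bottom $ij$-face, while $\partial_{u_k} S_\trileg$ is the three-leg form of the top $ij$-face based at $u_k$, and cyclically; so these six vertices carry precisely the six face quad equations of the cube. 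At the two ``diagonal'' vertices the short legs telescope: since $\mathcal{Q}_{ij}^{(u)}+\mathcal{Q}_{jk}^{(u)}+\mathcal{Q}_{ki}^{(u)}=-\mathcal{T}^{(u)}$, one finds $\partial_u S_\trileg = \mathcal{T}^{(u)}$ and likewise $\partial_{u_{ijk}} S_\trileg = -\mathcal{T}^{(u_{ijk})}$, the tetrahedron three-leg forms~\eqref{tetra-u} and~\eqref{tetra-uijk}.

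The equivalence with the quad equations then follows from Proposition~\ref{prop-three-leg-quad}. For the ``quad $\Rightarrow$ corner'' direction, each of the six three-leg forms is $\equiv 0 \mod 2\pi{\rm i}$ on solutions, so suitable $\Theta_i,\dots,\Theta_{ki}$ annihilate those six corner equations; the telescoping identities then force $\mathcal{T}^{(u)},\mathcal{T}^{(u_{ijk})}\in 2\pi{\rm i}\,\Z$, so $\Theta$ and $\Theta_{ijk}$ can be chosen to kill the remaining two. Conversely, if all eight vertex equations hold, then in particular the six quad three-leg forms are integer multiples of $2\pi{\rm i}$, which by Proposition~\ref{prop-three-leg-quad} is equivalent to the six face quad equations (the two tetrahedron equations being then redundant). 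The case ${\rm H}1$, ${\rm A}1_{\delta=0}$, ${\rm Q}1_{\delta=0}$ is the specialisation $\vec\Theta=\vec0$.

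Finally I would substitute~\eqref{L-log(h)} and~\eqref{Lambda-log(g)} into $\partial_{\alpha_i} S_\trileg$; only the four faces carrying $\alpha_i$ contribute, and one checks that all $\kappa$, $c$ and $\gamma$ contributions cancel, leaving a single logarithm of a ratio of the biquadratics $h$ and $g$. Using~\eqref{biquadratric-quad} on the $ij$- and $ki$-faces through $u$ (and the symmetry of the biquadratics) this ratio collapses to a ratio of diagonal $g$'s, which~\eqref{biquadratric-tetra} turns into $1$; hence $\partial_{\alpha_i} S_\trileg \equiv 0 \mod 2\pi{\rm i}$ and an integer $\Xi_i$ makes it vanish, with the same computation reversible. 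I expect the main obstacle to be exactly this last reduction: cancelling the $\kappa$, $c$, $\gamma$ terms correctly and then collapsing the product of six biquadratics to $1$ by the right sequence of applications of~\eqref{biquadratric-quad}, \eqref{biquadratric-tetra} and~\eqref{biquadratric-tetra-inverted}. A secondary point needing care is confirming that the six quad corner equations really exhaust all six distinct faces of the cube (three through $u$, three through $u_{ijk}$), so that no face equation is missed or double-counted.
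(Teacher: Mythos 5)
Your proposal is correct and follows essentially the same route as the paper: identify the eight corner equations with the three-leg forms $\mathcal{Q}^{(\cdot)}$ on the six cube faces and $\mathcal{T}^{(u)}$, $\mathcal{T}^{(u_{ijk})}$ at the two remaining vertices, conclude via Proposition~\ref{prop-three-leg-quad} and the fact that quad equations imply tetrahedron equations, and establish the $\alpha$-derivative conditions through the biquadratic identities \eqref{biquadratric-quad}--\eqref{biquadratric-tetra-inverted} as in Proposition~\ref{prop-cross}. Your explicit identifications (e.g., $\partial_{u_{ij}}S_\trileg=-\mathcal{Q}_{ij}^{(u_{ij})}$ and the telescoping at $u$ and $u_{ijk}$) check out against the paper's equation \eqref{trid-corners} and Table~\ref{table-corner-eqns}; the paper itself only records two corner equations and defers the biquadratic computation to Part~I.
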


\begin{proof}
 We identify the corner equations with the three-leg forms of the quad equations~\eqref{quad-ui}--\eqref{quad-uj} and tetrahedron equations~\eqref{tetra-u} and~\eqref{tetra-uijk} by observing that
 \begin{equation}
 \label{trid-corners}
 \pdv{S_\trileg^{\vec \Theta, \vec \Xi}}{u} = \mathcal{T}^{(u)} + 2 \Theta \pi {\rm i} , \qquad
 \pdv{S_\trileg^{\vec \Theta, \vec \Xi}}{u_i} = \mathcal{Q}^{(u_i)}_{jk} + 2 \Theta_{i} \pi {\rm i} ,
 \qquad \text{etc.,}
 \end{equation}
 where \smash{$\mathcal{Q}^{(u_i)}_{jk}\! = \psi(u_i,u_{ij},\alpha_j) - \psi(u_i,u_{ki},\alpha_k) - \phi(u_i,u_{ijk},\alpha_j-\alpha_k)$} is the three leg form based~at~$u_i$ in the quad oriented along the $j,k$-directions. Since the quad equations imply the tetrahedron equations, equations~\eqref{trid-corners} imply that the quad equations are equivalent to the system~\eqref{trid-corners=0}.

 Similar to Proposition~\ref{prop-cross}, equations~\eqref{trid-alphas} can be proved using the biquadratics associated to each quad equation. Details can be found in~\cite{richardson2025discrete1}.
\end{proof}

\subsection{Cross-square 2-form}
%\label{sec-cross-square}

We introduce a second 2-form that produces a system of corner equations equivalent to the quad equations. The function defining this 2-form was introduced in~\cite{bobenko2010lagrangian}, where it was studied on a~single quad. To our knowledge, it was never before considered as a Lagrangian multiform. It is given by
\begin{align*}
\begin{split}
		\cL_\crosssquare:={}& L(u,u_i,\alpha_i) + L(u_{ij},u_j, \alpha_i) - L(u,u_j,\alpha_j) - L(u_{ij},u_i, \alpha_j) \\
		&{}{-} \Lambda(u,u_{ij},\alpha_i-\alpha_j) - \Lambda(u_i,u_j,\alpha_i-\alpha_j).
\end{split}
\end{align*}
Inspired by its leg structure, illustrated in Figure~\ref{fig-crosssquare}, we call it the \emph{cross-square Lagrangian}.

\begin{figure}[t]
\centering
\begin{tikzpicture}[scale=3]
	\begin{scope}[shift=({1.75,0})]
 \fill[gray!30] (0,0) -- (1,0) -- (1,1) -- (0,1) -- cycle;
		\draw[thick, dashed] (0,0) node[left] {$u_k$} -- (1,0) node[right] {$u_{ki}$} -- (1,1) node[right] {$u_{ijk}$} -- (0,1) node[left] {$u_{jk}$} -- cycle;
		\draw[very thick, blue] (0,0) -- (1,0);
 \draw[very thick, blue] (0,1) -- (1,1);
		\draw[very thick, red] (0,1) -- (0,0);
 \draw[very thick, red] (1,1) -- (1,0);
		\draw[very thick, red] (1,1) -- (0,0);
 \draw[very thick, red] (1,0) -- (0,1);
 \node at (.5,-.4) {(a)};
 \node at (0,0) {$\bullet$};
 \node at (1,0) {$\bullet$};
 \node at (0,1) {$\bullet$};
 \node at (1,1) {$\bullet$};
	\end{scope}
	\begin{scope}[shift=({4,-.1}),scale=1, y={(5mm, 3mm)}, z={(0cm,1cm)}]
 \fill[gray!30] (0,0,1) -- (1,0,1) -- (1,1,1) -- (0,1,1) -- cycle;
		\draw[every edge/.append style={dashed}]
		(1,0,0) node[below] {$u_i$} -- (0,0,0) node[below left] {$u$} -- (0,0,1) node[left] {$u_k$} -- (0,1,1) node[above] {$u_{jk}$} -- (1,1,1) node[above right] {$u_{ijk}$} -- (1,1,0) node[right] {$u_{ij}$}-- cycle -- (1,0,1) node[right] {$u_{ki}$}-- (0,0,1)
		(1,0,1) -- (1,1,1)
		(0,1,0) edge (0,0,0) edge (0,1,1) edge (1,1,0) node[left] {$u_j$};
		
		\draw[very thick, red, dashed, double] (1,1,0) -- (0,1,0);
		\draw[very thick, blue, dashed, double] (0,1,0) -- (0,1,1);
		
		\draw[very thick, blue, dashed] (0,0,0) -- (0,1,1) ;
		\draw[very thick, blue, dashed] (0,1,0) -- (0,0,1) ;
		\draw[very thick, blue] (0,0,0) -- (1,0,1) ;
		\draw[very thick, blue] (1,0,0) -- (0,0,1) ;
		\draw[very thick, blue, dashed] (0,0,0) -- (1,1,0) ;
		\draw[very thick, blue, dashed] (1,0,0) -- (0,1,0) ;
		
		\draw[very thick, red] (1,1,1) -- (0,0,1) ;
		\draw[very thick, red] (1,0,1) -- (0,1,1) ;
		\draw[very thick, red] (1,1,1) -- (1,0,0) ;
		\draw[very thick, red] (1,1,0) -- (1,0,1) ;
		\draw[very thick, red, dashed] (1,1,1) -- (0,1,0) ;
		\draw[very thick, red, dashed] (1,1,0) -- (0,1,1) ;
		
		\draw[very thick, blue, double] (1,0,0) -- (1,1,0);
		\draw[very thick, red, double] (0,1,1) -- (0,0,1);
		\draw[very thick, blue, double] (0,0,1) -- (1,0,1);
		\draw[very thick, red, double] (1,0,1) -- (1,0,0);
 \node at (1,-1) {(b)};
 \node at (0,0,0) {$\bullet$};
 \node at (1,0,0) {$\bullet$};
 \node at (0,1,0) {$\bullet$};
 \node at (0,0,1) {$\bullet$};
 \node at (1,1,0) {$\bullet$};
 \node at (0,1,1) {$\bullet$};
 \node at (1,0,1) {$\bullet$};
 \node at (1,1,1) {$\bullet$};
	\end{scope}	
\end{tikzpicture}
\caption[.]{(a) The leg structure of a single Lagrangian $\cL_\crosssquare(u_k,u_{ki},u_{jk},u_{ijk},\alpha_i,\alpha_j)$. (b) The leg structure for the action on an elementary cube of the cross-square 2-form $\cL_\crosssquare$.}
\label{fig-crosssquare}
% added "[.]" to fix error, inspired by https://stackoverflow.com/questions/2716227/adding-an-equation-or-formula-to-a-figure-caption-in-latex
\end{figure}

The action over an elementary cube of the cross-square Lagrangian is
\begin{align}
 S_\crosssquare &= 2 L(u_i,u_{ij},\alpha_j) + \Lambda(u,u_{ij},\alpha_i-\alpha_j) + \Lambda(u_i,u_j,\alpha_i-\alpha_j) - \reverse + \cyclic \notag \\
 &= 2 S_\trileg - S_\cross ,\label{Scrosssquare-tridcross}
\end{align}
so its corner equations consist of linear combinations of the corner equations of the trident and cross Lagrangians.

\begin{Proposition}
 %\label{prop-crosssquare}
 For ${\rm H}2$, ${\rm H}3$, ${\rm A}1_{\delta\neq0}$, ${\rm A}2$, ${\rm Q}1_{\delta\neq0}$, ${\rm Q}2$, ${\rm Q}3$, the quad equations are satisfied if and only if there exist $\Theta,\Theta_i,\dots,\Theta_{ijk} \in \mathbb{Z}$ and $\Xi_i,\Xi_j,\Xi_k \in \mathbb{Z}$ such that
 \begin{alignat}{6}
 &\pdv{S^{\vec \Theta,\vec \Xi}_\crosssquare}{u} = 0 , \qquad&&
 \pdv{S^{\vec \Theta,\vec \Xi}_\crosssquare}{u_i} = 0 , \qquad &&
 \pdv{S^{\vec \Theta,\vec \Xi}_\crosssquare}{u_j} = 0 , \qquad &&
 \pdv{S^{\vec \Theta,\vec \Xi}_\crosssquare}{u_k} = 0 , && &\nonumber\\
 & && \pdv{S^{\vec \Theta,\vec \Xi}_\crosssquare}{u_{jk}} = 0 , \qquad &&
 \pdv{S^{\vec \Theta,\vec \Xi}_\crosssquare}{u_{ki}} = 0 , \qquad &&
 \pdv{S^{\vec \Theta,\vec \Xi}_\crosssquare}{u_{ij}} = 0 , \qquad &&
 \pdv{S^{\vec \Theta,\vec \Xi}_\crosssquare}{u_{ijk}} = 0 ,& \label{crosssquare-corners=0}
 \end{alignat}
 and
 \begin{align}
 \label{crosssquare-alphas}
 \pdv{S^{\vec \Theta,\vec \Xi}_\crosssquare}{\alpha_i} = 0 ,\qquad
 \pdv{S^{\vec \Theta,\vec \Xi}_\crosssquare}{\alpha_j} = 0 , \qquad
 \pdv{S^{\vec \Theta,\vec \Xi}_\crosssquare}{\alpha_k} = 0 .
 \end{align}

 For ${\rm H}1$, ${\rm A}1_{\delta=0}$ and ${\rm Q}1_{\delta=0}$, the quad equations are satisfied if and only if there exist integers $\Xi_i,\Xi_j,\Xi_k \in \mathbb{Z}$ such that equations~\eqref{crosssquare-corners=0} and~\eqref{crosssquare-alphas} hold for $S^{\vec \Xi}$.
\end{Proposition}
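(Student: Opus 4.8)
The plan is to reduce the statement to Propositions~\ref{prop-trileg} and~\ref{prop-cross} by means of the identity~\eqref{Scrosssquare-tridcross}, $S_\crosssquare = 2 S_\trileg - S_\cross$. Differentiating this relation in each field shows that every corner equation of the cross-square form equals $2\times(\text{trident corner}) - (\text{cross corner})$. Reading off the trident corners from~\eqref{trid-corners} and the cross corners from~\eqref{cross-corners-u}--\eqref{cross-corners-uijk}, I would record
\[ \pdv{S^{\vec\Theta,\vec\Xi}_\crosssquare}{u} = \mathcal{T}^{(u)} + 2\Theta\pi{\rm i}, \qquad \pdv{S^{\vec\Theta,\vec\Xi}_\crosssquare}{u_{ijk}} = -\mathcal{T}^{(u_{ijk})} + 2\Theta_{ijk}\pi{\rm i}, \]
whereas at each of the six remaining vertices the corner equation takes the form $2\mathcal{Q}+\mathcal{T}$ up to sign; for instance $\pdv{S_\crosssquare}{u_i} = 2\mathcal{Q}^{(u_i)}_{jk} + \mathcal{T}^{(u_i)}$, which one may equivalently rewrite as the signed sum $\mathcal{Q}^{(u_i)}_{jk} + \mathcal{Q}^{(u_i)}_{ij} - \mathcal{Q}^{(u_i)}_{ki}$ of the three quad three-leg forms meeting at $u_i$. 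A direct differentiation of $\cL_\crosssquare$ confirms all of these and is the quickest check if one wishes to bypass~\eqref{Scrosssquare-tridcross}.

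For the forward implication, assume the quad equations. They imply the tetrahedron equations (as already used in Proposition~\ref{prop-trileg}), so every tetrahedron and every quad three-leg form is $\equiv 0 \mod 2\pi{\rm i}$; hence each corner combination above is $\equiv 0 \mod 2\pi{\rm i}$ and we choose $\Theta,\Theta_i,\dots,\Theta_{ijk}$ to turn~\eqref{crosssquare-corners=0} into exact equalities. The parameter equations~\eqref{crosssquare-alphas} are handled exactly as in Propositions~\ref{prop-cross} and~\ref{prop-trileg}: expanding $\pdv{S_\crosssquare}{\alpha_i} = 2\pdv{S_\trileg}{\alpha_i} - \pdv{S_\cross}{\alpha_i}$ through~\eqref{L-log(h)}--\eqref{Lambda-log(g)}, the auxiliary terms telescope and the biquadratic identities~\eqref{biquadratric-quad}--\eqref{biquadratric-tetra-inverted} force the surviving logarithm to vanish mod $2\pi{\rm i}$, which is absorbed into $\Xi_i,\Xi_j,\Xi_k$.

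For the converse, the corners at $u$ and $u_{ijk}$ give $\mathcal{T}^{(u)}\equiv\mathcal{T}^{(u_{ijk})}\equiv 0 \mod 2\pi{\rm i}$, so both tetrahedron equations hold; by the tetrahedron analogue of Proposition~\ref{prop-three-leg-quad} every tetrahedron three-leg form then vanishes mod $2\pi{\rm i}$. Subtracting $\mathcal{T}^{(u_i)}$ from the $u_i$-corner leaves $2\mathcal{Q}^{(u_i)}_{jk}\equiv 0$, and likewise at the other five odd vertices. For ${\rm H}1$, ${\rm A}1_{\delta=0}$ and ${\rm Q}1_{\delta=0}$ the three-leg forms are honest equalities~\eqref{three-leg-additive-0}, so dividing by two gives $\mathcal{Q}^{(u_i)}_{jk}=0$ and hence every quad equation; in this class the argument closes immediately.

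The main obstacle is this last division by two for ${\rm H}2$, ${\rm H}3$, ${\rm A}1_{\delta\neq0}$, ${\rm A}2$, ${\rm Q}1_{\delta\neq0}$, ${\rm Q}2$, ${\rm Q}3$, where the three-leg forms live mod $2\pi{\rm i}$: here $2\mathcal{Q}^{(u_i)}_{jk}\equiv 0 \mod 2\pi{\rm i}$ yields only $\mathcal{Q}^{(u_i)}_{jk}\equiv 0 \mod \pi{\rm i}$, one factor of two short of the quad equation, and the spurious branch $e^{\mathcal{Q}^{(u_i)}_{jk}}=-1$ must be excluded. Since the even multiplicity is built into the coefficient $2$ of~\eqref{Scrosssquare-tridcross}, no individual corner—and, as a short check on $\pdv{S_\crosssquare}{\alpha_i}\equiv 2\pdv{S_\trileg}{\alpha_i}$ over the tetrahedron locus shows, no individual parameter equation either—can by itself separate the two branches. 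I therefore expect to close the gap by combining the six odd-vertex congruences on the tetrahedron locus and showing, through the biquadratic identity~\eqref{biquadratric-quad}, that a $-1$ on any one face is incompatible with the simultaneous vanishing of the neighbouring corners, leaving the genuine quad solution as the only possibility. Carrying out this exclusion uniformly across the ABS list, through the explicit $h$ and $g$ of each equation, is the step I anticipate to be the most delicate.
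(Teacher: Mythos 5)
Your route is the paper's route. You use the decomposition~\eqref{Scrosssquare-tridcross} to read the corner equations off from~\eqref{trid-corners} and~\eqref{cross-corners-u}--\eqref{cross-corners-uijk}, arriving at exactly the system~\eqref{crosssquare-corners} of the paper ($\mathcal{T}^{(u)}$ and $-\mathcal{T}^{(u_{ijk})}$ at the two even vertices, $\pm\bigl(2\mathcal{Q}+\mathcal{T}\bigr)$ at the six odd ones); the forward implication and the treatment of~\eqref{crosssquare-alphas} via the biquadratic identities are likewise identical, as is the first half of the converse (tetrahedron equations from the $u$- and $u_{ijk}$-corners, then all tetrahedron legs vanish mod $2\pi{\rm i}$). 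One small slip: with the conventions~\eqref{quad-ui}--\eqref{quad-uj} the $u_i$-corner is the all-plus sum $\mathcal{Q}^{(u_i)}_{jk}+\mathcal{Q}^{(u_i)}_{ij}+\mathcal{Q}^{(u_i)}_{ki}$ rather than a signed one, but you do not use that remark later.

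The point at which you stop is exactly the point at which the paper's proof also stops, and the paper does not contain the branch-exclusion argument you are hoping to find. Its converse simply states that, once all tetrahedron legs are known to vanish mod $2\pi{\rm i}$, ``the remaining corner equations reduce to individual quad equations in three-leg form,'' i.e., the passage from $2\mathcal{Q}^{(u_i)}_{jk}\equiv 0 \mod 2\pi{\rm i}$ to $\mathcal{Q}^{(u_i)}_{jk}\equiv 0 \mod 2\pi{\rm i}$ is made without comment before invoking Proposition~\ref{prop-three-leg-quad}. You are right that, taken literally, the corner equation together with $\mathcal{T}^{(u_i)}\equiv 0 \mod 2\pi{\rm i}$ only yields $\mathcal{Q}^{(u_i)}_{jk}\equiv 0 \mod \pi{\rm i}$ for the multiplicative families, and that the branch ${\rm e}^{\mathcal{Q}}=-1$ (which an integer $\Theta_i$ can absorb) must be excluded before the quad equation follows; neither your proposal nor the published proof supplies that exclusion. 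So the verdict is: for ${\rm H}1$, ${\rm A}1_{\delta=0}$, ${\rm Q}1_{\delta=0}$ your argument is complete and coincides with the paper's; for the remaining seven cases your proposal is, by your own admission, unfinished at its last step, and you have in fact put your finger on a step that the paper itself leaves elliptical rather than one you are missing. If you pursue the exclusion, the natural starting point is Lemma~\ref{lemma-crosssquare}, whose single-quad computation produces the \emph{undoubled} three-leg form; the factor of $2$ is an artefact of summing the cube action, so any fix must either rule out the $-1$ branch case by case through the explicit leg functions, as you suggest, or reorganise the converse so that only undoubled corner data is used.
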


\begin{proof}
 {\samepage From~\eqref{Scrosssquare-tridcross}, we deduce that the corner equations are the corresponding linear combination of the corner equations~\eqref{trid-corners} and~\eqref{cross-corners-u}--\eqref{cross-corners-uijk} of the trident and cross Lagrangian 2-forms. Hence,
\begin{alignat}{3}
 &\pdv{S_\crosssquare}{u} = \mathcal{T}^{(u)} + 2 \Theta \pi {\rm i} , \qquad && \pdv{S_\crosssquare}{u_{ijk}} = -\mathcal{T}^{(u_{ijk})} + 2 \Theta_{ijk} \pi {\rm i} , & \nonumber\\
 &\pdv{S_\crosssquare}{u_i} = 2\mathcal{Q}^{(u_i)}_{jk} + \mathcal{T}^{(u_i)} + 2 \Theta_i \pi {\rm i} ,\qquad && \pdv{S_\crosssquare}{u_{jk}} = -2\mathcal{Q}^{(u_{jk})}_{jk} - \mathcal{T}^{(u_{jk})} + 2 \Theta_{jk} \pi {\rm i} , & \label{crosssquare-corners}
 \end{alignat}
 where \smash{$\mathcal{Q}^{(u_i)}_{jk}\! = \psi(u_i,u_{ij},\alpha_j) - \psi(u_i,u_{ki},\alpha_k) - \phi(u_i,u_{ijk},\alpha_j-\alpha_k)$} is the three leg form based at~$u_i$ in the quad oriented along the $j,k$-directions and \smash{$\mathcal{Q}\raisebox{-1pt}{${}^{(u_jk)}_{jk}$}$} is as in equation~\eqref{quad-uij}.
 The remaining corner equations are obtained from~\eqref{crosssquare-corners} by cyclic permutation of the indices.}

 If multi-affine quad equations hold, then all of the three-leg forms involved in the system~\eqref{crosssquare-corners} are multiples of $2 \pi {\rm i}$, so the corner equations are satisfied. Conversely, if all equations of the system~\eqref{crosssquare-corners} hold, then in particular the tetrahedron equations in three-leg form, centred at~$u$ and $u_{ijk}$, are satisfied. Then the tetrahedron three-leg forms centred at $u_i$, $u_j$, $u_k$, $u_{ij}$, $u_{jk}$, $u_{ki}$ also vanish modulo $2 \pi {\rm i}$, so the remaining corner equations reduce to individual quad equations in three-leg form. These are equivalent to the multi-affine quad equations.

 Furthermore, using equations~\eqref{L-log(h)} and~\eqref{Lambda-log(g)}, we find
 \begin{align*}
 \pdv{\cL_\crosssquare}{\alpha_i}
 &= \log(h(u,u_i,\alpha_i)) + \log(h(u_{ij},u_j,\alpha_i)) \\
 &\quad - \log(g(u,u_{ij},\alpha_i-\alpha_j)) - \log(g(u_i,u_j,\alpha_i-\alpha_j)) + 2 c(\alpha_i) - 2 \gamma(\alpha_i-\alpha_j) \\
 &\equiv \log \biggl(\frac{h(u,u_i,\alpha_i) h(u_{ij},u_j,\alpha_i)}{g(u,u_{ij},\alpha_i-\alpha_j) g(u_i,u_j,\alpha_i-\alpha_j)} \biggr) + 2 c(\alpha_i) - 2 \gamma(\alpha_i-\alpha_j) \mod 2 \pi {\rm i} .
 \end{align*}
 From equation~\eqref{biquadratric-quad}, it now follows that, on solutions to the quad equations, \smash{$\pdv{S_\crosssquare}{\alpha_i}$} is a multiple of $2 \pi {\rm i} $, so there exists a $\Xi_i$ such that
 \[ \pdv{S_\crosssquare^{\vec \Theta, \vec \Xi}}{\alpha_i} = 0 . \tag*{\qed} \]
 \renewcommand{\qed}{}
\end{proof}

The following lemma states that, for the cross-square Lagrangian, we can also obtain the quad equation by taking partial derivatives of the Lagrangian on a single quad. Lagrangian multiform theory does not require this property, but it will be a useful tool in Section~\ref{sec-double0-ABS}.

\begin{Lemma}
 \label{lemma-crosssquare}
 For ${\rm H}2$, ${\rm H}3$, ${\rm A}1_{\delta\neq0}$, ${\rm A}2$, ${\rm Q}1_{\delta\neq0}$, ${\rm Q}2$, ${\rm Q}3$, the quad equation is satisfied if and only if there exist $\Theta,\Theta_i,\Theta_j,\Theta_{ij} \in \mathbb{Z}$ and $\Xi_i,\Xi_j \in \mathbb{Z}$ such that
 \begin{align*}
 \pdv{\cL_\crosssquare}{u} = 2 \Theta \pi {\rm i} , \qquad
 \pdv{\cL_\crosssquare}{u_i} = 2 \Theta_i \pi {\rm i} , \qquad
 \pdv{\cL_\crosssquare}{u_j} = 2 \Theta_j \pi {\rm i} , \qquad
 \pdv{\cL_\crosssquare}{u_{ij}} = 2 \Theta_{ij} \pi {\rm i} .
 \end{align*}

 For ${\rm H}1$, ${\rm A}1_{\delta=0}$ and ${\rm Q}1_{\delta=0}$, the quad equation is satisfied if and only if there exist $\Xi_i,\Xi_j \in \mathbb{Z}$ such that
 \begin{align*}
 \pdv{\cL_\crosssquare}{u} = 0 , \qquad
 \pdv{\cL_\crosssquare}{u_i} = 0 , \qquad
 \pdv{\cL_\crosssquare}{u_j} = 0 , \qquad
 \pdv{\cL_\crosssquare}{u_{ij}} = 0 .
 \end{align*}
\end{Lemma}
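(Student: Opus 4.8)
The plan is to show that the four single-quad corner equations are, up to sign, exactly the three-leg forms $\mathcal{Q}_{ij}^{(u)}$, $\mathcal{Q}_{ij}^{(u_i)}$, $\mathcal{Q}_{ij}^{(u_j)}$, $\mathcal{Q}_{ij}^{(u_{ij})}$ of the quad equation, and then to invoke Proposition~\ref{prop-three-leg-quad}. First I would differentiate $\cL_\crosssquare$ term by term, using the defining relations $\psi(u,u_i,\alpha_i)=\pdv{}{u}L(u,u_i,\alpha_i)$ and $\pdv{}{u_i}L(u,u_i,\alpha_i)=\psi(u_i,u,\alpha_i)$, together with the corresponding identities for $\Lambda$ and $\phi$. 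Only the terms $L(u,u_i,\alpha_i)$, $-L(u,u_j,\alpha_j)$ and $-\Lambda(u,u_{ij},\alpha_i-\alpha_j)$ depend on $u$, and differentiating each in its first slot gives
\[
 \pdv{\cL_\crosssquare}{u} = \psi(u,u_i,\alpha_i) - \psi(u,u_j,\alpha_j) - \phi(u,u_{ij},\alpha_i-\alpha_j) = \mathcal{Q}_{ij}^{(u)} .
\]

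Carrying out the analogous bookkeeping at the other three corners, the terms depending on $u_i$ are $L(u,u_i,\alpha_i)$ and $-L(u_{ij},u_i,\alpha_j)$ (in their second slots) together with $-\Lambda(u_i,u_j,\alpha_i-\alpha_j)$, producing $\psi(u_i,u,\alpha_i) - \psi(u_i,u_{ij},\alpha_j) - \phi(u_i,u_j,\alpha_i-\alpha_j)$. Comparing with~\eqref{quad-ui} and using that $\phi$ is odd in its last entry, this equals $-\mathcal{Q}_{ij}^{(u_i)}$. In the same way one finds $\pdv{\cL_\crosssquare}{u_j} = -\mathcal{Q}_{ij}^{(u_j)}$ and $\pdv{\cL_\crosssquare}{u_{ij}} = \mathcal{Q}_{ij}^{(u_{ij})}$, matching~\eqref{quad-uj} and~\eqref{quad-uij}. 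The sign flips at $u_i$ and $u_j$ are the only point genuinely requiring care, and they stem directly from the oddness of $\phi$; the overall signs are immaterial for the conclusion, since they can be absorbed into the integers.

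With the four corner equations identified as (signed) three-leg forms, the statement is immediate from Proposition~\ref{prop-three-leg-quad}. For ${\rm H}2$, ${\rm H}3$, ${\rm A}1_{\delta\neq0}$, ${\rm A}2$, ${\rm Q}1_{\delta\neq0}$, ${\rm Q}2$, ${\rm Q}3$, that proposition yields $Q_{ij}=0 \iff \mathcal{Q}_{ij}^{(u)}\equiv 0 \mod 2\pi{\rm i}$, and likewise at each other vertex; recording the residues as integer multiples of $2\pi{\rm i}$ supplies $\Theta,\Theta_i,\Theta_j,\Theta_{ij}$, while any one of the four congruences already forces $Q_{ij}=0$. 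For ${\rm H}1$, ${\rm A}1_{\delta=0}$, ${\rm Q}1_{\delta=0}$ the three-leg forms vanish exactly, giving the $=0$ version. Finally, the parameters $\Xi_i,\Xi_j$ multiply only the $\alpha$-terms of the extended action, so they do not enter any $u$-derivative; unlike in the cube-based proposition for $\cL_\crosssquare$, there are here no $\alpha$-derivatives to satisfy and no appeal to the biquadratic identity~\eqref{biquadratric-quad} is needed, so $\Xi_i,\Xi_j$ are retained only for consistency with the extended-action framework and may be chosen freely.
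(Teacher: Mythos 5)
Your proposal is correct and follows essentially the same route as the paper: the paper's proof simply records $\pdv{\cL_\crosssquare}{u} = \psi(u,u_i,\alpha_i) - \psi(u,u_j,\alpha_j) - \phi(u,u_{ij},\alpha_i-\alpha_j)$ and notes the analogous expressions at the other corners, leaving Proposition~\ref{prop-three-leg-quad} implicit. Your version just makes the sign bookkeeping at $u_i$, $u_j$ (via the oddness of $\phi$) and the role of $\Xi_i$, $\Xi_j$ explicit, all of which checks out.
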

\begin{proof}
 We have
 \[ \pdv{\cL_\crosssquare}{u} = \psi(u,u_i,\alpha_i) - \psi(u,u_j,\alpha_j) - \phi(u, u_{ij}, \alpha_i - \alpha_j) \]
 and similar expressions for the derivatives with respect to $u_i$, $u_j$, and $u_{ij}$.
\end{proof}

\subsection{Triangle 2-form}
\label{sec-triangle}

Now we compare the above constructions to the known 2-form $\cL_\triang$ (which has been the standard choice in the theory of Lagrangian multiforms~\cite{lobb2009lagrangian}),
\[
 \cL_\triang(u,u_i,u_j,\alpha_i,\alpha_j) := L(u,u_i,\alpha_i) - L(u,u_j,\alpha_j) - \Lambda(u_i,u_j, \alpha_i-\alpha_j) .
\]
We review how this 2-form is critical on a set of equations weaker than the quad equations and we will relate it to our new 2-forms via $\cL_\triang = \cL_\trileg - \cL_\cross$.

\begin{figure}[t]
\centering
\begin{tikzpicture}[scale=3]
 \begin{scope}[shift=({1.75,0})]
 \fill[gray!30] (0,0) -- (1,0) -- (1,1) -- (0,1) -- cycle;
		\draw (0,0) node[below left] {$u_k$} -- (1,0) node[below right] {$u_{ki}$} -- (1,1) -- (0,1) node[above left] {$u_{jk}$} -- cycle;
 	\draw[very thick, blue] (0,0) -- node[below, black] {$L(u_k,u_{ki},\alpha_i)$} (1,0);
 	\draw[very thick, red] (0,1) -- node[above, rotate=90, black] {$-L(u_k,u_{jk},\alpha_j)$} (0,0);
 	\draw[very thick, red] (0,1) -- node[above, rotate=-45, black] {\ $-\Lambda(u_{ki},u_{jk}, \alpha_i-\alpha_j)$} (1,0);
 \node at (.5,-.4) {(a)};
 \node at (0,0) {$\bullet$};
 \node at (1,0) {$\bullet$};
 \node at (0,1) {$\bullet$};
	\end{scope}
	\begin{scope}[shift=({4,-.1}),scale=1, y={(5mm, 3mm)}, z={(0cm,1cm)}]
 	\fill[gray!30] (0,0,1) -- (1,0,1) -- (1,1,1) -- (0,1,1) -- cycle;
 	\draw[every edge/.append style={dashed}]
 	(1,0,0) node[below] {$u_i$} -- (0,0,0) node[below left] {$u$} -- (0,0,1) node[left] {$u_k$} -- (0,1,1) node[above] {$u_{jk}$} -- (1,1,1) node[above right] {$u_{ijk}$} -- (1,1,0) node[right] {$u_{ij}$}-- cycle -- (1,0,1) node[right] {$u_{ki}$}-- (0,0,1)
 	(1,0,1) -- (1,1,1)
 	(0,1,0) edge (0,0,0) edge (0,1,1) edge (1,1,0) node[left] {$u_j$};
 	
 	\draw[very thick, red, dashed] (1,1,0) -- (0,1,1) ;
 	\draw[very thick, red] (0,1,1) -- (1,0,1) ;
 	\draw[very thick, red] (1,0,1) -- (1,1,0) ;
 	
 	\draw[very thick, blue] (1,0,0) -- (1,1,0);
 	\draw[very thick, red, dashed] (1,1,0) -- (0,1,0);
 	\draw[very thick, blue, dashed] (0,1,0) -- (0,1,1);
 	\draw[very thick, red] (0,1,1) -- (0,0,1);
 	\draw[very thick, blue] (0,0,1) -- (1,0,1);
 	\draw[very thick, red] (1,0,1) -- (1,0,0);
 	
 	\draw[very thick, blue, dashed] (1,0,0) -- (0,1,0) ;
 	\draw[very thick, blue, dashed] (0,1,0) -- (0,0,1) ;
 	\draw[very thick, blue] (0,0,1) -- (1,0,0) ;
 \node at (1,0,0) {$\bullet$};
 \node at (0,1,0) {$\bullet$};
 \node at (0,0,1) {$\bullet$};
 \node at (1,1,0) {$\bullet$};
 \node at (0,1,1) {$\bullet$};
 \node at (1,0,1) {$\bullet$};
 \node at (1,-1) {(b)};
 \end{scope}
 \end{tikzpicture}
 \caption[.]{(a) The leg structure of a single Lagrangian $\cL_\triang(u_k,u_{ki},u_{jk},\alpha_i,\alpha_j)$. (b) The leg structure for the action on an elementary cube of the triangle 2-form $\cL_\triang$ sits on an octahedral stencil.}
 \label{fig-triang-cube}
 \end{figure}
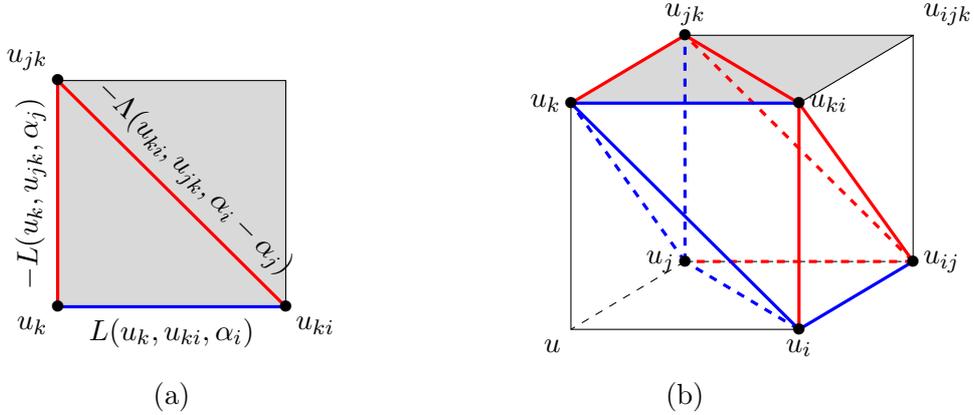

The action around the elementary cube of the triangle Lagrangian depends on an octahedral stencil depicted in Figure~\ref{fig-triang-cube},
\begin{align}\label{Striangle}
 S_\triang &= L(u_i,u_{ij},\alpha_j) + \Lambda(u_i,u_j,\alpha_i-\alpha_j) - \reverse + \cyclic .
\end{align}
Since $u$ and $u_{ijk}$ does not appear in this action, there are no corner equations at these points.
The corner equation at $u_{ij}$ in terms of leg functions is
\begin{align}
	0 &= \pdv{S_\triang}{u_{ij}}
 = \pdv{}{u_{ij}} ( L(u_i,u_{ij},\alpha_j)\hspace{-0.5pt} -\hspace{-0.5pt} L(u_j,u_{ij},\alpha_i)\hspace{-0.5pt} - \hspace{-0.5pt} \Lambda(u_{ij},u_{ki}, \alpha_j-\alpha_k) \hspace{-0.5pt}-\hspace{-0.5pt} \Lambda(u_{jk},u_{ij}, \alpha_k-\alpha_i)) \notag \\
 &= \psi(u_{ij},u_i,\alpha_j) - \psi(u_{ij},u_j,\alpha_i) - \phi(u_{ij},u_{ki}, \alpha_j-\alpha_k) - \phi(u_{ij},u_{jk}, \alpha_k-\alpha_i) \notag \\
 & =: \mathcal{E}_{ij} .\label{triangle-corners}
\end{align}
Note that this corner equation can be written as a combination of two quad equations in three-leg form, based at the vertex $u_{ij}$:
\begin{equation}\label{fourlegE-QQ}
 \mathcal{E}_{ij} = -\mathcal Q^{(u_{ij})}_{jk} - \mathcal Q^{(u_{ij})}_{ki} .
\end{equation}
At the vertex $u_i$, we find the corner equation \smash{$\inversion{\mathcal{E}}_{ij} = 0$}, where \smash{$\inversion{\mathcal{E}}_{ij} := -\mathcal{Q}^{(u_k)}_{jk} - \mathcal{Q}^{(u_k)}_{ki}$} is obtained from $\mathcal{E}_{ij}$ by point inversion in the cube, i.e., $u \leftrightarrow u_{ijk}$, $u_i \leftrightarrow u_{jk}$, etc.

\begin{Proposition}
 \label{prop-triang}
 For ${\rm H}2$, ${\rm H}3$, ${\rm A}1_{\delta\neq0}$, ${\rm A}2$, ${\rm Q}1_{\delta\neq0}$, ${\rm Q}2$, ${\rm Q}3$, the equations $\mathcal{E}_{ij} = 0$, $\mathcal{E}_{jk} = 0$, \dots, $\inversion{\mathcal{E}}_{ki} = 0$ are satisfied if and only if there exist $\Theta,\Theta_i,\dots,\Theta_{ijk} \in \mathbb{Z}$ and $\Xi_i,\Xi_j,\Xi_k \in \mathbb{Z}$ such that
 \begin{alignat}{4}
 &\pdv{S^{\vec \Theta,\vec \Xi}_\triang}{u_i} = 0 , \qquad &&
 \pdv{S^{\vec \Theta,\vec \Xi}_\triang}{u_j} = 0 , \qquad &&
 \pdv{S^{\vec \Theta,\vec \Xi}_\triang}{u_k} = 0 , &\nonumber\\
 &\pdv{S^{\vec \Theta,\vec \Xi}_\triang}{u_{jk}} = 0 , \qquad &&
 \pdv{S^{\vec \Theta,\vec \Xi}_\triang}{u_{ki}} = 0 , \qquad &&
 \pdv{S^{\vec \Theta,\vec \Xi}_\triang}{u_{ij}} = 0 , & \label{triang-corners=0}
 \end{alignat}
 and
 \begin{equation}
 \label{triang-alphas}
 \pdv{S^{\vec \Theta,\vec \Xi}_\triang}{\alpha_i} = 0 , \qquad
 \pdv{S^{\vec \Theta,\vec \Xi}_\triang}{\alpha_j} = 0 , \qquad
 \pdv{S^{\vec \Theta,\vec \Xi}_\triang}{\alpha_k} = 0 .
 \end{equation}

 For ${\rm H}1$, ${\rm A}1_{\delta=0}$ and ${\rm Q}1_{\delta=0}$, the quad equations are satisfied if and only if there exist integers $\Xi_i,\Xi_j,\Xi_k \in \mathbb{Z}$ such that equations~\eqref{triang-corners=0} and~\eqref{triang-alphas} hold for $S^{\vec \Xi}$.
\end{Proposition}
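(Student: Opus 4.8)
The plan is to mirror the proof of Proposition~\ref{prop-cross}, exploiting the decomposition $\cL_\triang = \cL_\trileg - \cL_\cross$ (equivalently $S_\triang = S_\trileg - S_\cross$), so that each corner equation and each $\alpha$-derivative of the triangle form is a difference of quantities already computed for the trident and cross forms. I treat the case H2, H3, A1$_{\delta\neq0}$, A2, Q1$_{\delta\neq0}$, Q2, Q3; the type-\eqref{three-leg-additive-0} equations H1, A1$_{\delta=0}$, Q1$_{\delta=0}$ will be the $\vec\Theta=\vec0$ specialisation, with all three-leg and biquadratic identities holding exactly rather than modulo $2\pi{\rm i}$.

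First I would dispatch the corner equations. Because $u$ and $u_{ijk}$ do not occur in $S_\triang$ (see~\eqref{Striangle}), there are no corner equations at those two vertices and the integers $\Theta,\Theta_{ijk}$ remain free; the six relevant corner equations sit at the octahedron vertices. The derivative at $u_{ij}$ has already been identified in~\eqref{triangle-corners}--\eqref{fourlegE-QQ} as $\mathcal{E}_{ij}=-\mathcal{Q}^{(u_{ij})}_{jk}-\mathcal{Q}^{(u_{ij})}_{ki}$, and cyclic permutation together with point inversion supplies the other five as $\mathcal{E}_{jk},\mathcal{E}_{ki}$ and $\inversion{\mathcal{E}}_{ij},\inversion{\mathcal{E}}_{jk},\inversion{\mathcal{E}}_{ki}$. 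Including the integer fields, the corner equations read $\pdv{S^{\vec\Theta,\vec\Xi}_\triang}{u_{ij}}=\mathcal{E}_{ij}+2\Theta_{ij}\pi{\rm i}$, and likewise for the rest. Hence, for a suitable choice of $\Theta_{ij},\dots$, the system~\eqref{triang-corners=0} holds precisely when each octahedron expression vanishes modulo $2\pi{\rm i}$; the branch ambiguity of the logarithmic three-leg forms is exactly what the integers $\Theta$ absorb, so this is the statement that the octahedron equations $\mathcal{E}_{ij}=0,\dots,\inversion{\mathcal{E}}_{ki}=0$ are satisfied. This settles the equivalence between~\eqref{triang-corners=0} and the octahedron equations.

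The substantive part is~\eqref{triang-alphas}. Differentiating~\eqref{Striangle} with the help of~\eqref{L-log(h)} and~\eqref{Lambda-log(g)}, I expect the auxiliary terms $\kappa$, $c$, $\gamma$ to cancel pairwise, leaving
\begin{equation*}
 \pdv{S_\triang}{\alpha_i}\equiv \log\!\left(\frac{h(u_k,u_{ki},\alpha_i)\,g(u_i,u_j,\alpha_i-\alpha_j)\,g(u_{ij},u_{jk},\alpha_k-\alpha_i)}{h(u_{ij},u_j,\alpha_i)\,g(u_k,u_i,\alpha_k-\alpha_i)\,g(u_{jk},u_{ki},\alpha_i-\alpha_j)}\right)\mod 2\pi{\rm i},
\end{equation*}
together with its cyclic images. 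The goal is then to show that this ratio of biquadratics equals $1$ on solutions of the octahedron equations, so that a suitable $\Xi_i$ makes $\pdv{S^{\vec\Theta,\vec\Xi}_\triang}{\alpha_i}=0$. This is the main obstacle: since the octahedron equations are strictly weaker than the quad and tetrahedron equations, the ratio does \emph{not} factor through the known biquadratic identities~\eqref{biquadratric-quad}--\eqref{biquadratric-tetra-inverted} (one cannot assume any individual $\mathcal{Q}^{(u_{ij})}_{jk}$ vanishes), and I would instead have to establish the corresponding ``octahedron biquadratic'' identity directly from the defining relations for $h$ and $g$ and the multi-affine relations equivalent to $\mathcal{E}_{ij}=0$.

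Finally I would assemble the equivalence as in Proposition~\ref{prop-cross}: if the octahedron equations hold, choose the $\Theta$'s to make each corner equation vanish and, by the biquadratic identity above, the $\Xi$'s to satisfy~\eqref{triang-alphas}; conversely, if~\eqref{triang-corners=0} and~\eqref{triang-alphas} hold for some integers, then reading off the corner equations returns $\mathcal{E}_{ij}\equiv 0\bmod 2\pi{\rm i}$ and its analogues, i.e.\ the octahedron equations. The type-\eqref{three-leg-additive-0} case follows from the same computation with $\vec\Theta=\vec0$, the three-leg forms and the biquadratic ratio then holding on the nose.
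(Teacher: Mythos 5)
Your reduction of the corner equations to the four-leg expressions $\mathcal{E}_{ij},\dots,\inversion{\mathcal{E}}_{ki}$ is fine and matches the paper, and your computed form of \smash{$\pdv{S_\triang}{\alpha_i}$} as a logarithm of a ratio of biquadratics agrees with the paper's. But the step you yourself flag as ``the main obstacle'' is a genuine gap: you stop at the point of needing an ``octahedron biquadratic identity'' that would make the logarithm vanish modulo $2\pi{\rm i}$ as a consequence of the $\mathcal{E}$-equations alone, and you neither state nor prove such an identity. Since this is precisely the nontrivial content of~\eqref{triang-alphas}, the proof is incomplete as written.

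The paper closes this gap without any new identity, by an extension argument that your proposal misses. The key observation is that $S_\triang$ (and hence \smash{$\pdv{S_\triang}{\alpha_i}$}) does not depend on $u$ or $u_{ijk}$, so one is free to \emph{choose} these two values. Given a solution of the $\mathcal{E}$-equations on the octahedral stencil, choose $u$ so that $Q_{ij}=0$ holds; then \smash{$\mathcal{Q}^{(u_i)}_{ij}\equiv\mathcal{Q}^{(u_j)}_{ij}\equiv 0 \bmod 2\pi{\rm i}$}, and the relations \smash{$\inversion{\mathcal{E}}_{jk}=-\mathcal{Q}^{(u_i)}_{ki}-\mathcal{Q}^{(u_i)}_{ij}$} and \smash{$\inversion{\mathcal{E}}_{ki}=-\mathcal{Q}^{(u_j)}_{ij}-\mathcal{Q}^{(u_j)}_{jk}$} force \smash{$\mathcal{Q}^{(u_i)}_{ki}\equiv\mathcal{Q}^{(u_j)}_{jk}\equiv 0$}, i.e., $Q_{jk}=0$ and $Q_{ki}=0$ as well; similarly $u_{ijk}$ is chosen so that the three shifted quad equations hold. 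Thus every solution of the $\mathcal{E}$-equations extends to a solution of all six quad equations on the cube, and on that extension the logarithm splits into three factors each of which vanishes by the standard biquadratic identities~\eqref{biquadratric-quad} and~\eqref{biquadratric-tetra}; since the original expression is independent of the auxiliary $u$, $u_{ijk}$, the conclusion transfers back. You should incorporate this extension step (or supply and prove the octahedron-level biquadratic identity you allude to) before the argument can be considered complete.
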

\begin{proof}
 Expressions~\eqref{triang-corners=0} follow from equation~\eqref{triangle-corners} and its analogues obtained by cyclicity and point inversion.

 To derive equations~\eqref{triang-alphas}, we first show that any solution $(u_i,u_j,u_k,u_{ij},u_{jk},u_{ki})$ to the $\mathcal{E}$-equations can be extended to a solution to the quad equations. We have that
 \begin{align*}
 & \inversion{\mathcal{E}}_{ij} = -\mathcal{Q}^{(u_k)}_{jk} - \mathcal{Q}^{(u_k)}_{ki} \equiv 0 \mod 2 \pi {\rm i} , \\
 & \inversion{\mathcal{E}}_{jk} = -\mathcal{Q}^{(u_i)}_{ki} - \mathcal{Q}^{(u_i)}_{ij} \equiv 0 \mod 2 \pi {\rm i} , \\
 & \inversion{\mathcal{E}}_{ki} = -\mathcal{Q}^{(u_j)}_{ij} - \mathcal{Q}^{(u_j)}_{jk} \equiv 0 \mod 2 \pi {\rm i} .
 \end{align*}
 Choose $u$ such that the quad equation $Q_{ij} = 0$ holds. Then \smash{$\mathcal{Q}^{(u_i)}_{ij} \equiv \mathcal{Q}^{(u_j)}_{ij} \equiv 0 \mod 2 \pi {\rm i}$}, hence \smash{$\mathcal{Q}^{(u_i)}_{ki} \equiv \mathcal{Q}^{(u_j)}_{jk} \equiv 0 \mod 2 \pi {\rm i}$}, so the quad equations $Q_{jk} = 0$ and $Q_{ki} = 0$ hold as well. Similarly,~we can choose $u_{ijk}$ such that the three shifted quad equations \smash{$\inversion{Q}_{ij} = 0$}, \smash{$\inversion{Q}_{jk} = 0$}, and \smash{$\inversion{Q}_{ki} = 0$} hold.

\allowdisplaybreaks Now, starting from the definition~\eqref{Striangle} and using equations~\eqref{L-log(h)}--\eqref{Lambda-log(g)}, we compute
 \begin{align*}
	\pdv{S^{\Theta,\Xi}_\triang}{\alpha_i} &=
	\pdv{}{\alpha_i} ( L(u_k,u_{ki},\alpha_i) - L(u_j,u_{ij},\alpha_i) + \Lambda(u_i,u_{j},\alpha_i-\alpha_j) + \Lambda(u_k,u_{i},\alpha_k-\alpha_i)\\
	&\hphantom{=\pdv{}{\alpha_i} ( }{}\, - \Lambda(u_{ki}, u_{jk}, \alpha_i-\alpha_j) - \Lambda(u_{jk},u_{ij},\alpha_k-\alpha_i ) + 2 \Xi_i \pi {\rm i} \\
	&\equiv \log\biggl( \frac{h(u_k,u_{ki},\alpha_i)g(u_i,u_{j},\alpha_i-\alpha_j)g(u_{jk},u_{ij},\alpha_k-\alpha_i)}{h(u_j,u_{ij},\alpha_i)g(u_k,u_{i},\alpha_k-\alpha_i)g(u_{ki}, u_{jk}, \alpha_i-\alpha_j)} \biggr) \mod 2 \pi {\rm i} \\
	&\equiv \log\biggl( \frac{g(u, u_{ij}, \alpha_i-\alpha_j) g(u_i,u_j,\alpha_i-\alpha_j)}{h(u,u_{i},\alpha_i)h(u_j,u_{ij},\alpha_i)} \biggr) \\
	& \quad + \log\biggl( \frac{h(u,u_{i},\alpha_i)h(u_k,u_{ki},\alpha_i)}{g(u, u_{ki}, \alpha_k-\alpha_i)g(u_k,u_i,\alpha_k-\alpha_i)} \biggr) \\
	& \quad + \log\biggl( \frac{g(u_{jk},u_{ij},\alpha_k-\alpha_i)g(u,u_{ki},\alpha_k-\alpha_i)}{g(u_{ik},u_{jk},\alpha_i-\alpha_j)g(u,u_{ij},\alpha_i-\alpha_j) } \biggr) \mod 2 \pi {\rm i} .
\end{align*}
Each of the logarithms vanishes on the quad equations, so the existence of a suitable $\Xi_i$ fol\=lows.\looseness=1
\end{proof}

The triangle Lagrangian 2-form produces a set of corner equations that vanish on the quad equations but are not equivalent to them, because they lack the variables $u$ and $u_{ijk}$. Furthermore, we can relate each of the four-leg equations~\eqref{triangle-corners} to a quad equation and a tetrahedron equation. From $\cL_\triang = \cL_\trileg - \cL_\cross$, we have
\begin{equation}
\label{Striang-tridcross}
S_\triang = S_\trileg - S_\cross
\end{equation}
and thus
\begin{equation}\label{fourlegE-QT}
 \mathcal{E}_{ij} = -\mathcal{Q}_{ij}^{(u_{ij})} - \mathcal{T}^{(u_{ij})} .
\end{equation}

\begin{Example} For H1, also known as the lattice potential KdV equation, we have
\begin{equation}
	\label{H1-Q}
	Q_{ij} = (u_i - u_j)(u - u_{ij}) - \alpha_i + \alpha_j .
\end{equation}
Its leg functions are given by
\[ \psi(u,u_i,\alpha_i) = u_i \qquad \text{and} \qquad \phi(u,u_{ij},\alpha_i-\alpha_j) = \frac{\alpha_i-\alpha_j}{u - u_{ij}} \]
and triangle Lagrangian is
\[ \cL_\triang = u u_i - u u_j - (\alpha_i - \alpha_j) \log( u_i - u_j ) , \]
so
\begin{align*}
 S_\triang & = u_k u_{ki} - u_k u_{jk} - (\alpha_i - \alpha_j) \log( u_{ki} - u_{jk} ) + (\alpha_i - \alpha_j) \log( u_i - u_j )^2 + \cyclic , \\
 & = u_i u_{ij} + (\alpha_i - \alpha_j) \log( u_i - u_j ) - \reverse + \cyclic.
\end{align*}
Two corner equations are identically zero and the other six have a four-leg form. The corner equations at $u_{ij}$, $u_{jk}$ and $u_{ki}$ are all of the form
\begin{align*}
 0 = \pdv{S_\triang}{u_{ij}} &= u_i - u_j - \frac{\alpha_j - \alpha_k}{u_{ij} - u_{ki}} + \frac{\alpha_k - \alpha_i}{u_{jk} - u_{ij}} = -\mathcal Q^{(u_{ij})}_{jk} - \mathcal Q^{(u_{ij})}_{ki} = \mathcal{E}_{ij} .
\end{align*}
Clearly these equations vanish on the quad equations. Similarly,
the other three corner equations are of the form
\begin{align*}
 \pdv{S_\triang}{u_k} &= u_{ki} - u_{jk} + \frac{\alpha_k - \alpha_i}{u_k - u_i} - \frac{\alpha_j - \alpha_k}{u_j - u_k} = \mathcal{Q}^{(u_{k})}_{jk} + \mathcal{Q}^{(u_{k})}_{ki} = -\inversion{\mathcal{E}}_{ij} .
\end{align*}

Furthermore, we have
\begin{align*}
 0 = \pdv{S_\triang^{\vec \Xi}}{\alpha_i} &= \log(u_i-u_j) - \log(u_{jk}-u_{ki}) - \log(u_k-u_i) + \log(u_{ij}-u_{jk}) + 2 \Xi_i \pi {\rm i}\\
 &\equiv \log \biggl( \frac{(u_i-u_j)(u_{ij}-u_{jk})}{(u_k-u_i)(u_{jk}-u_{ki})} \biggr) \mod 2 \pi {\rm i} .
\end{align*}
The only way the principal branch of $\log$ can yield a multiple of $2 \pi {\rm i}$ is for it to be exactly zero. Hence,
\[
\pdv{S_\triang^{\vec \Xi}}{\alpha_i} = 0
\]
 is equivalent to the lattice KP equation~\cite{nijhoff1984backlund}
\begin{align*}
 0 &= (u_i-u_j)(u_{ij}-u_{jk}) - (u_k-u_i)(u_{jk}-u_{ki}) \\
 &= u_i(u_{ij}-u_{ki}) + u_j(u_{jk}-u_{ij}) + u_k(u_{jk}-u_{ki}) .
\end{align*}
In other words, Proposition~\ref{prop-triang} gives us a short proof that the $\mathcal{E}$-equations for H1 imply the discrete KP equation. We will return to this when we discuss octahedron equations in Section~\ref{sec-octa}.
\end{Example}

\subsection{Closure relation}
%\label{subsec-clos}

Propositions~\ref{prop-trileg}--\ref{prop-triang} above provide the basis to prove closure of each of the Lagrangian multiforms, following the same approach we took in~\cite{richardson2025discrete1} for the trident Lagrangian. Propositions~\mbox{\ref{prop-trileg}--\ref{prop-triang}} state that on solutions of the corner equations, the gradient of the action with respect to both parameters $\alpha$ and fields $u$ vanishes. This means that perturbations of a given solution will have the same value of the action, unless it crosses a branch cut in one of the functions making up the action. For H1, H2, Q1, Q2, A1, the effect of crossing a branch cut is cancelled by the change in $\vec \Theta$ (if present) and $\vec \Xi$ that happens at the same point. For~H3,~Q3,~A2, the net effect of crossing a branch cut is that the value of the action changes by a multiple of~$4 \pi^2$.

Thus we arrive at the following statement (see~\cite[Theorem~3.5]{richardson2025discrete1}).

\begin{Theorem}
 \label{thm-closed}
 Let $S$ denote one of the actions \smash{$S_\trileg^{\vec \Theta, \vec \Xi}$}, \smash{$S_\triang^{\vec \Theta, \vec \Xi}$}, \smash{$S_\crosssquare^{\vec \Theta, \vec \Xi}$}, or \smash{$S_\cross^{\vec \Theta, \vec \Xi}$} for ${\rm H}2$, ${\rm H}3$, ${\rm A}1_{\delta\neq0}$, ${\rm A}2$, ${\rm Q}1_{\delta\neq0}$, ${\rm Q}2$, ${\rm Q}3$. Or, let $S$ denote one of the actions \smash{$S_\trileg^{\vec \Xi}$}, \smash{$S_\triang^{\vec \Xi}$}, \smash{$S_\crosssquare^{\vec \Xi}$}, or \smash{$S_\cross^{\vec \Xi}$} for ${\rm H}1$, ${\rm A}1_{\delta=0}$, ${\rm Q}1_{\delta=0}$.

 For any solution $\vec u$ to the corner equations of $S$, with associated integer fields $\vec \Theta$, $\vec \Xi$, there holds
 \[ S(\vec u, \vec \Theta, \vec \Xi) =
 \begin{cases}
 0 & \text{for ${\rm H}1$, ${\rm H}2$, ${\rm Q}1$, ${\rm Q}2$, and ${\rm A}1$,} \\
 4 k \pi^2,\ k \in \Z, & \text{for ${\rm H}3$, ${\rm Q}3$, and ${\rm A}2$.}
 \end{cases}\]
\end{Theorem}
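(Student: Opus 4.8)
The plan is to reduce everything to a single continuity argument for the cross form, feeding off two facts already established above. On the one hand, Proposition~\ref{prop-cross} (and the companion Propositions~\ref{prop-trileg}, \ref{prop-triang} and the proposition for the cross-square form) shows that on a solution of the corner equations the \emph{whole} gradient of the action, in all vertex variables $u,\dots,u_{ijk}$ and all parameters $\alpha_i,\alpha_j,\alpha_k$, vanishes. On the other hand, the four actions obey the linear relations \eqref{Striang-tridcross}, namely $S_\triang = S_\trileg - S_\cross$, and \eqref{Scrosssquare-tridcross}, namely $S_\crosssquare = 2S_\trileg - S_\cross$. Taking the trident case as given by \cite[Theorem~3.5]{richardson2025discrete1}, I would establish the cross case directly and then propagate: on a quad solution (which is automatically a tetrahedron solution) both the trident and cross closures apply, and because the integer fields of $S_\crosssquare$ are the integer combinations $2\vec\Theta^{\text{trid}}-\vec\Theta^{\text{cross}}$ and $2\vec\Xi^{\text{trid}}-\vec\Xi^{\text{cross}}$ of the others, \eqref{Scrosssquare-tridcross} yields the cross-square closure; since $\{0\}$ and $4\pi^2\Z$ are each stable under $a\mapsto 2a-b$, the dichotomy is preserved. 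For the triangle form, which does not depend on $u$ or $u_{ijk}$, I would first extend any solution of the $\mathcal{E}$-equations to a full quad solution (as in the proof of Proposition~\ref{prop-triang}) without changing $S_\triang$, and then invoke \eqref{Striang-tridcross} on that quad solution.

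For the cross form itself I would run the continuity argument of \cite[Theorem~3.5]{richardson2025discrete1}. Solving each of the two tetrahedron equations for a single vertex exhibits the solution set as a connected parameter domain (an open subset of an affine space with the singular loci removed), and by Proposition~\ref{prop-cross} every partial derivative of $S_\cross^{\vec\Theta,\vec\Xi}$ vanishes there; hence the action is locally constant, and constant along any path of solutions avoiding the branch cuts of $\Lambda$. To fix the constant I would specialise to $\alpha_i=\alpha_j=\alpha_k$: there every long-leg function $\phi$ vanishes, so since $\phi=\partial_u\Lambda$ the function $\Lambda(\,\cdot\,,\,\cdot\,,0)$ is constant in its first two arguments, and the alternating sum \eqref{Scross} collapses to $S_\cross=0$. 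Thus the base value is $0$, and modulo branch-cut jumps the action is $0$ throughout.

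The crux, and the step I expect to be hardest, is the bookkeeping across branch cuts. When a path of solutions crosses a branch cut of $\Lambda$ the principal-branch value of $S_\cross$ jumps by a multiple of $2\pi{\rm i}$; simultaneously, because the corner equations must hold \emph{exactly} rather than merely modulo $2\pi{\rm i}$, the integer fields $\vec\Theta,\vec\Xi$ --- pinned to the tetrahedron three-leg forms and, in the $\alpha$-directions, to the biquadratics through \eqref{Lambda-log(g)} --- are forced to jump by compensating integers at the same point. I would show, by relating the branch structure of the primitive $\Lambda$ to that of its derivative $\phi=\partial_u\Lambda$ and to the auxiliary function $\gamma$ of \eqref{Lambda-log(g)}, that for H1, H2, Q1, Q2, A1 the two jumps cancel identically, so the action is genuinely continuous and equals its base value $0$, while for H3, Q3, A2 an uncancelled residue remains that is a product of two $2\pi{\rm i}$ shifts and hence lies in $(2\pi{\rm i})^2\Z = 4\pi^2\Z$, giving the stated value $4k\pi^2$. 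Everything outside this case-by-case cancellation is organisational: the vanishing-gradient propositions, the connectedness of the solution set, and the linear relations do the rest.
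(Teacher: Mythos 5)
Your proposal is correct and follows essentially the same route as the paper: the paper's argument is precisely that Propositions~\ref{prop-cross}--\ref{prop-triang} give vanishing of the full gradient of the action in the fields and the parameters $\alpha$, so the action is locally constant on the (connected) solution set, with the base value fixed in a degenerate limit and with branch-cut crossings either cancelled by the accompanying jumps in $\vec\Theta$, $\vec\Xi$ (H1, H2, Q1, Q2, A1) or contributing a residue in $4\pi^2\Z$ (H3, Q3, A2), exactly as in Part~I. Your only deviation --- deducing the cross-square and triangle cases from the trident and cross cases via $S_\crosssquare = 2S_\trileg - S_\cross$ and $S_\triang = S_\trileg - S_\cross$ together with the matching of integer fields --- is a sound organisational shortcut rather than a different method.
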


\section{Tetrahedron and octahedron equations}
\label{sec-octa}

Previously, we discussed the three-leg forms equivalent to the multi-affine quad and tetrahedron equations. In this section, we discuss the polynomial equations associated with the four-leg corner equations~\eqref{triangle-corners} of the triangle 2-form. We review the related multi-affine \emph{octahedron equations} (also known as \emph{octahedron relations}) and their relationship to quad equations. Most of this section is based on~\cite{boll2016integrability}, but in Proposition~\ref{prop-484} we provide an additional variational interpretation of the relation between the different types of polynomial equations.

The four-leg equation $\mathcal E_{ij} = 0$ or $\mathcal E_{ij} \equiv 0 \mod 2 \pi {\rm i}$, where the left-hand side is defined in~equation~\eqref{fourlegE-QQ}, can be obtained by eliminating the variable $u_{ijk}$ from two quad equations in~three-leg form. An equivalent equation is obtained by eliminating $u_{ijk}$ from the corresponding two multi-affine quad equations. This yields the equation $E_{ij} = 0$, where the left-hand side is the polynomial
\begin{equation}\label{E-Q}
 E_{ij} = \pdv{\inversion Q_{jk}}{u_{ijk}} \inversion{Q}_{ki} - \pdv{\inversion{Q}_{ki}}{u_{ijk}} \inversion{Q}_{jk} ,
\end{equation}
where \smash{$\inversion{\cdot}$} denotes point inversion in the cube, for example \smash{$\inversion Q_{jk} = Q(u_{ijk},u_{ki},u_{ij},u_{i},\alpha_j-\alpha_k)$}.
We have two more polynomials by permuting indices and three more from point inversion:
\begin{equation}
\label{E-Q-invert}
\inversion{E}_{ij} = \pdv{Q_{jk}}{u} Q_{ki} - \pdv{Q_{ki}}{u} Q_{jk} .
\end{equation}
These polynomials lead to equations that are equivalent to those involving the four-leg expressions $\mathcal{E}_{ij}$:
\[ E_{ij}(u_i,u_j,u_{ij},u_{jk},u_{ki},\alpha_i,\alpha_j,\alpha_k) = 0\]
is equivalent to
\[ \mathcal{E}_{ij}(u_i,u_j,u_{ij},u_{jk},u_{ki},\alpha_i,\alpha_j,\alpha_k) = 0 \]
for H1, Q1$_{\delta=0}$, A1$_{\delta=0}$, and to
\[ \mathcal{E}_{ij}(u_i,u_j,u_{ij},u_{jk},u_{ki},\alpha_i,\alpha_j,\alpha_k) \equiv 0 \mod 2 \pi {\rm i} \]
for H2, H3, A1$_{\delta\neq0}$, A2, Q1$_{\delta\neq0}$, Q2, Q3.

From their definition as combinations of quad equations, it is clear that $E_{ij}=0$ and $\inversion E_{ij}=0$ are consequences of the quad equations. However, they are not equivalent to the quad equations. Only two of these six equations are independent (see Proposition~\ref{prop-2outof8} below).

The relation~\eqref{fourlegE-QT} suggests that the polynomial $E_{ij}$ is related to the polynomials $Q_{ij}$ and~$T$. Indeed, eliminating the variable $u$ from the system $T=0$, $Q_{ij}=0$ must lead to an equation equivalent to $E_{ij} = 0$, so we find
\begin{equation}\label{TQE-rel}
 \pdv{Q_{ij}}{u} T - \pdv{T}{u} Q_{ij} = \gamma(\alpha_i,\alpha_j,\alpha_k) E_{ij} .
\end{equation}
Here, $\gamma(\alpha_i,\alpha_j,\alpha_k)$ does not contain field variables, because the left hand side depends linearly on $(u_i,u_j,u_{jk},u_{ki})$ and quadratically on $u_{ij}$, as does $E_{ij}$.
Under point inversion, we find the analogous relation
\begin{equation*} \pdv{\inversion Q_{ij}}{u_{ijk}} \inversion T - \pdv{\inversion T}{u_{ijk}} \inversion Q_{ij} = \gamma(\alpha_i,\alpha_j,\alpha_k) \inversion E_{ij} .
\end{equation*}

The main result of~\cite{boll2016integrability} is that for every member of the ABS list there exist two octahedron equations which are equivalent to the set of $E$-equations. These are of the form
\begin{align*}
 \Omega_1(u_i,u_j,u_k,u_{ij}, u_{jk},u_{ki}, \alpha_i,\alpha_j,\alpha_k) = 0 , \\
 \Omega_2(u_i,u_j,u_k,u_{ij}, u_{jk},u_{ki},\alpha_i,\alpha_j,\alpha_k) = 0 ,
\end{align*}
where $\Omega_1$, $\Omega_2$ are multi-affine polynomials in the field variables $u_i$, $u_j$, $u_k$, $u_{ij}$, $u_{jk}$, $u_{ki}$, which form an octahedral stencil. These polynomials are (anti-)symmetric under point inversion:
For~Q4 and~A2, we have \smash{$\inversion \Omega_1 = \Omega_1$} and \smash{$\inversion \Omega_2 = \Omega_2$}; for the rest of ABS list, there holds \smash{$\inversion \Omega_1 = -\Omega_1$} and \smash{$\inversion \Omega_2 = \Omega_2$}.

In all cases, we can get $E$-polynomials by eliminating a variable form the system of octahedron equations. For Q4 and A2, eliminating $u_{k}$ and $u_{ij}$ leads to, respectively,
\begin{align}
 &\pdv{\Omega_1}{u_k} \Omega_2 - \pdv{\Omega_2}{u_k} \Omega_1 = \mu(u_i,u_j,u_{jk},u_{ki},\alpha_i,\alpha_j,\alpha_k) E_{ij} , \nonumber \\
 &\pdv{\Omega_1}{u_{ij}} \Omega_2 - \pdv{\Omega_2}{u_{ij}} \Omega_1 = \mu(u_{jk},u_{ki},u_i,u_j,\alpha_i,\alpha_j,\alpha_k) \inversion E_{ij} ,\label{Omegas-mu-E}
\end{align}
where the factor $\mu(u_i,u_j,u_{jk},u_{ki},\alpha_i,\alpha_j,\alpha_k)$ is polynomial in $u_i$, $u_j$, $u_{jk}$, $u_{ki}$.
For the rest of~ABS list, we have the explicit expressions
\begin{align}
 &\pdv{\Omega_1}{u_k} \Omega_2 - \pdv{\Omega_2}{u_k} \Omega_1 = g_1(\alpha_i,\alpha_j,\alpha_k) \biggl( \pdv{\Omega_1}{u_{ij}} - \pdv{\Omega_1}{u_k} \biggr) E_{ij} , \nonumber\\
 &-\pdv{\Omega_1}{u_{ij}} \Omega_2 + \pdv{\Omega_2}{u_{ij}} \Omega_1 = g_1(\alpha_i,\alpha_j,\alpha_k) \biggl( -\pdv{\Omega_1}{u_{k}} + \pdv{\Omega_1}{u_{ij}} \biggr) \inversion E_{ij} ,\label{Omegas-explicit-E}
\end{align}
where $g_1$ is defined explicitly on a case by case basis in~\cite[Proposition 5.3]{boll2016integrability}.

\begin{Example}[H1 octahedron equations]
 For H1, the four-leg equations $\mathcal{E}_{ij} = 0$ and $\inversion{\mathcal{E}}_{ij} = 0$ are equivalent to the following polynomials, respectively:
 \begin{align*}
 &E_{ij} = (u_i - u_j) (u_{ij} - u_{jk}) (u_{ij} - u_{ki} ) + \alpha_i (u_{ki} - u_{ij} ) + \alpha_j (u_{ij} - u_{jk} ) + \alpha_k ( u_{jk} - u_{ki} ) , \\
 &\inversion{E}_{ij} = (u_{jk} - u_{ki} ) (u_k - u_i) (u_k - u_j) + \alpha_i ( u_j - u_k ) + \alpha_j (u_k - u_i ) + \alpha_k ( u_i - u_j ) .
 \end{align*}
 These can written in terms of the following multi-affine octahedron polynomials:
 \begin{align}
 &\Omega_1= u_i ( u_{ki} - u_{ij} ) + \cyclic , \label{H1-Omega1} \\
 &\Omega_2 = \alpha_i ( u_k - u_j + u_{ij} - u_{ki} ) + u_i u_{jk} ( u_j - u_k - u_{ij} u_{ki} ) + \cyclic ,\nonumber
 \end{align}

 Equation~\eqref{H1-Omega1} is exactly the discrete KP equation we obtained earlier from \smash{$\pdv{S_\triang}{\alpha_i} = 0$}. The latter can be understood as a conservation law,
 \[ 0 = \pdv{S}{\alpha_i} = \Delta_k \pdv{\cL_{ij}}{\alpha_i} + \Delta_j \pdv{\cL_{ki}}{\alpha_i} , \]
 where $\Delta_k$ denotes the difference between the shift in the $k$-direction and the function itself.
 This is the 2-form version of the \emph{spectrality property} that was formulated in~\cite{suris2013variational} for discrete Lagrangian 1-forms.
 Further investigation is required to determine how this property relates to the octahedron equations of~\cite{boll2016integrability} for other members of the ABS list.
\end{Example}

The equivalence between the corner equations of $\cL_\triang$ (i.e., the $E$-equations obtained by setting the expressions~\eqref{E-Q} and~\eqref{E-Q-invert} to zero) and octahedron equations
can be seen as a particular case of the following statement, which holds in the sense of fractional ideals, as explained in~\cite{boll2016integrability}.

\begin{Proposition}
\label{prop-2outof8}
 Out of the set of equations consisting of the six $E$-equations and two octahedron equations, any $2$ equations imply the other six.
\end{Proposition}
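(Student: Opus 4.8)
The plan is to reduce every case to recovering the two octahedron equations $\Omega_1 = 0$ and $\Omega_2 = 0$, since these in turn imply all six $E$-equations. The latter implication is precisely the content of the elimination relations \eqref{Omegas-mu-E}--\eqref{Omegas-explicit-E} together with their cyclic permutations. The key structural fact is that the six $E$-polynomials are in bijection with the six octahedron variables: for each variable $v \in \{u_i,u_j,u_k,u_{ij},u_{jk},u_{ki}\}$ there is a relation of the shape
\[
 \pdv{\Omega_1}{v}\,\Omega_2 - \pdv{\Omega_2}{v}\,\Omega_1 = P_v \cdot E ,
\]
whose left-hand side is the $v$-resultant of the multi-affine polynomials $\Omega_1,\Omega_2$, where $P_v$ is the corresponding prefactor and $E$ is the $E$-polynomial obtained by eliminating $v$ (so $v = u_k$ gives $E_{ij}$, $v = u_{ij}$ gives $\inversion{E}_{ij}$, and so on cyclically). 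Since the left-hand side lies in $\langle \Omega_1,\Omega_2\rangle$ and $P_v$ is a nonzero polynomial, dividing by $P_v$ places $E$ in the fractional ideal generated by $\Omega_1,\Omega_2$; this handles the pair $\{\Omega_1,\Omega_2\}$.

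For a pair consisting of one octahedron equation and one $E$-equation, say $\Omega_1 = 0$ and $E = 0$ with $E$ eliminating the variable $v$, I would restrict the displayed relation to this common zero locus. Both $\Omega_1$ and $E$ vanish there, leaving $\pdv{\Omega_1}{v}\,\Omega_2 = 0$; since $\Omega_1$ depends genuinely on $v$, the factor $\pdv{\Omega_1}{v}$ is not identically zero, and dividing by it yields $\Omega_2 = 0$. The symmetric computation recovers $\Omega_1$ from a pair $\{\Omega_2, E\}$. Thus from any octahedron/$E$-pair we obtain the second octahedron equation, and the first paragraph then delivers the remaining $E$-equations.

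The crux is a pair of two $E$-equations. By the bijection above these eliminate two distinct variables $v_a \neq v_b$, and on their common zero locus the two relations collapse to the homogeneous linear system
\begin{align*}
 \pdv{\Omega_1}{v_a}\,\Omega_2 - \pdv{\Omega_2}{v_a}\,\Omega_1 &= 0 , \\
 \pdv{\Omega_1}{v_b}\,\Omega_2 - \pdv{\Omega_2}{v_b}\,\Omega_1 &= 0
\end{align*}
in the unknowns $(\Omega_1,\Omega_2)$. Its determinant is the Jacobian minor $\pdv{\Omega_1}{v_a}\pdv{\Omega_2}{v_b} - \pdv{\Omega_1}{v_b}\pdv{\Omega_2}{v_a}$, so wherever this minor is invertible the only solution is $\Omega_1 = \Omega_2 = 0$. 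Feeding this back into the first paragraph recovers all six $E$-equations, completing this case as well.

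The main obstacle is the non-degeneracy hypothesis underlying every division above: one must confirm that each prefactor $P_v$ and, above all, each $2\times 2$ Jacobian minor $\pdv{(\Omega_1,\Omega_2)}{(v_a,v_b)}$ is a nonzerodivisor (not identically zero), which is exactly what legitimises the passage to fractional ideals. I expect this to follow from the algebraic independence of $\Omega_1$ and $\Omega_2$---their common zero set has genuine codimension two rather than lying in a single hypersurface---which guarantees at least one nonvanishing minor; the cyclic symmetry and the point-inversion (anti)symmetry of the octahedron polynomials should then propagate non-vanishing to every pair. Failing a uniform argument, the minors can be checked directly from the explicit octahedron polynomials of~\cite{boll2016integrability}, as illustrated for H1 around~\eqref{H1-Omega1}.
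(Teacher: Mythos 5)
First, be aware that the paper does not actually prove this proposition: it is imported from \cite{boll2016integrability}, where it is established from the explicit octahedron polynomials for each ABS equation, and the qualifier ``in the sense of fractional ideals'' in the surrounding text signals precisely the divisions your argument has to perform. So there is no in-paper proof to compare against; the question is whether your self-contained argument closes.

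Your architecture is sound and does cover all 28 pairs: the resultant identities $\pdv{\Omega_1}{v}\Omega_2 - \pdv{\Omega_2}{v}\Omega_1 = P_v\, E$ (one per octahedron variable $v$, with $E$ the unique $E$-polynomial not containing $v$) dispose of the pair $\{\Omega_1,\Omega_2\}$ and of the mixed pairs, and they reduce the case of two $E$-equations to the invertibility of a $2\times 2$ Jacobian minor. The genuine gap is the one you flag but do not close: you must show that the minor $\pdv{\Omega_1}{v_a}\pdv{\Omega_2}{v_b} - \pdv{\Omega_1}{v_b}\pdv{\Omega_2}{v_a}$ is not identically zero for \emph{every} pair $(v_a,v_b)$ of octahedron variables --- up to cyclicity and point inversion there are three essentially distinct orbits of pairs (two ``adjacent'' $E$'s such as $E_{ij},E_{jk}$; an $E$ and its point inversion; and the mixed case $E_{ij},\inversion{E}_{jk}$), and each must be checked. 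Algebraic independence of $\Omega_1,\Omega_2$ only guarantees that \emph{some} $2\times 2$ minor of the Jacobian is nonzero, not the particular one attached to a given pair of $E$-equations; if one of these minors vanished identically, that pair of $E$-equations would fail to imply the octahedron equations and the proposition would be false for it. Until those finitely many minors are verified (by a structural argument or, as in \cite{boll2016integrability}, from the explicit polynomials), the crucial two-$E$ case --- which is the content that makes the six $E$-equations equivalent to the two octahedron equations --- is not established. A secondary loose end: the paper records the elimination identities only for $v=u_k$ and $v=u_{ij}$ in equations \eqref{Omegas-mu-E} and \eqref{Omegas-explicit-E}, so the ``bijection'' you invoke rests on taking their cyclic images; that is legitimate but should be stated.
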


Further to this, a dimension-counting argument suggests that if we add the tetrahedron equations to the set of octahedron equations (or $E$-equations), we should obtain a system equivalent to the quad equations. The Lagrangian multiform structure provides an explicit proof of this~fact:

\begin{Proposition}
 \label{prop-484}
 The tetrahedron equations and octahedron equations together are equivalent to the quad equations.
\end{Proposition}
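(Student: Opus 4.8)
The plan is to carry out both implications at the level of three-leg forms, where the additive relation~\eqref{fourlegE-QT}, namely $\mathcal{E}_{ij} = -\mathcal{Q}_{ij}^{(u_{ij})} - \mathcal{T}^{(u_{ij})}$, is exact and does all the work; this relation is the three-leg counterpart of the Lagrangian identity $\cL_\triang = \cL_\trileg - \cL_\cross$. Read as $\mathcal{Q}_{ij}^{(u_{ij})} = -\mathcal{E}_{ij} - \mathcal{T}^{(u_{ij})}$, it says that among the quad, octahedron, and tetrahedron three-leg expressions based at a common vertex, any two determine the third modulo $2\pi{\rm i}$. Together with Proposition~\ref{prop-three-leg-quad} (three-leg form $\equiv$ multi-affine quad) and the equivalence between the $E$-expressions and the two octahedron equations (Proposition~\ref{prop-2outof8}), this will give the claimed equivalence once the two directions are organised correctly.

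First I would dispatch the easy direction, that the quad equations imply both families. The tetrahedron equations follow because, as recalled in Section~\ref{sec-abs}, each tetrahedron three-leg form is a sum of quad three-leg forms on adjacent faces. The octahedron equations then follow from~\eqref{fourlegE-QQ} (equivalently~\eqref{E-Q}): $\mathcal{E}_{ij}$ is a sum of two quad three-leg forms, so $\mathcal{E}_{ij}\equiv 0\bmod 2\pi{\rm i}$ on solutions of the quad equations, and by Proposition~\ref{prop-2outof8} the vanishing of the $E$-expressions is equivalent to the two octahedron equations.

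For the converse I would assume the tetrahedron and octahedron equations and run~\eqref{fourlegE-QT} together with its five analogues obtained by cyclic permutation of $(i,j,k)$ and by point inversion. Proposition~\ref{prop-2outof8} promotes the two octahedron equations to all six $E$-equations, hence to $\mathcal{E}_{ij}\equiv 0\bmod 2\pi{\rm i}$ for every four-leg form; the two tetrahedron equations give $\mathcal{T}^{(u_{ij})}\equiv 0$, $\mathcal{T}^{(u_{jk})}\equiv 0$, $\mathcal{T}^{(u_{ki})}\equiv 0$ and, via the inverted tetrahedron, $\mathcal{T}^{(u_k)}\equiv 0$, etc. Since~\eqref{fourlegE-QT} is exact, I read off $\mathcal{Q}_{ij}^{(u_{ij})} = -\mathcal{E}_{ij} - \mathcal{T}^{(u_{ij})}\equiv 0\bmod 2\pi{\rm i}$, which by Proposition~\ref{prop-three-leg-quad} is equivalent to $Q_{ij}=0$. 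Running this over all six relations recovers one three-leg form of each of the six quad equations on the cube: the bottom faces $Q_{ij}, Q_{jk}, Q_{ki}$ from $\mathcal{E}_{ij},\mathcal{E}_{jk},\mathcal{E}_{ki}$ and the first tetrahedron, and the top faces $\inversion Q_{ij}, \inversion Q_{jk}, \inversion Q_{ki}$ from the inverted four-leg forms $\inversion{\mathcal{E}}_{ij}$, etc., and the inverted tetrahedron.

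The step I would verify most carefully is the bookkeeping that these six recovered three-leg forms exhaust the six independent quad equations of the cube rather than repeating a subset; concretely, I would track which vertex each $\mathcal{E}$ and each $\mathcal{T}$ is based at under cyclicity and inversion, using the explicit forms~\eqref{quad-uij} and~\eqref{tetra-uk}. I expect the genuine obstacle to be presentational: the octahedron-to-$E$ equivalence and~\eqref{fourlegE-QT} hold only in the fractional-ideal, i.e.\ generic, sense of~\cite{boll2016integrability}. To state the multi-affine equivalence cleanly I would therefore also invoke the polynomial identity~\eqref{TQE-rel}, $\pdv{Q_{ij}}{u}\,T - \pdv{T}{u}\,Q_{ij} = \gamma(\alpha_i,\alpha_j,\alpha_k)\,E_{ij}$: on the locus $T=0$, $E_{ij}=0$ it forces $\pdv{T}{u}\,Q_{ij}=0$, hence $Q_{ij}=0$ wherever the factor $\pdv{T}{u}$ (which does not vanish identically) is nonzero, which is precisely the genericity caveat already present in Proposition~\ref{prop-2outof8}.
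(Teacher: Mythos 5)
Your proof is correct and follows essentially the same route as the paper: both rest on the decomposition $\cL_\trileg = \cL_\triang + \cL_\cross$, which you use in its differentiated form $\mathcal{Q}_{ij}^{(u_{ij})} = -\mathcal{E}_{ij} - \mathcal{T}^{(u_{ij})}$ at each vertex, while the paper phrases the converse more compactly as criticality of $S_\triang$ and $S_\cross$ implying criticality of $S_\trileg$. The vertex-by-vertex bookkeeping and the fractional-ideal caveat you supply are just the unpacked content of Propositions~\ref{prop-cross}, \ref{prop-triang} and~\ref{prop-trileg}, which the paper invokes wholesale.
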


\begin{proof}
 From equation~\eqref{tetra-u} (or from tetrahedron property as assumed in the ABS classification~\cite{adler2003classification}), it follows that the quad equations imply the tetrahedron equations. From~\eqref{Omegas-mu-E} or~\eqref{Omegas-explicit-E}, it follows that the $E$-equations, and hence the quad equations, imply the octahedron equations (in the sense of fraction ideals, as in~\cite{boll2016integrability}).

 To prove the other implication, assume that the octahedron and tetrahedron equations are satisfied. Then the actions of $\cL_\triang$ and $\cL_\cross$ are critical. Since
 $\cL_\trileg = \cL_\triang + \cL_\cross$, it follows that the action of $\cL_\trileg$ is critical, hence the quad equations hold.
\end{proof}

\begin{table}[ht]
\centering
\renewcommand{\arraystretch}{1.5}
\begin{tabular}{ |c|c|c| }
\hline
2-Form & Corner polynomials & Equivalent system \\
\hline
$\cL_\triang$ & $\bigl(0, \inversion{E}_{jk}, \inversion{E}_{ki}, \inversion{E}_{ij}, E_{ij}, E_{jk}, E_{ki}, 0\bigr)$ & $\Omega_1,\Omega_2 = 0$ \\
$\cL_\cross$ & $\bigl(T, \inversion{T}, \inversion{T}, \inversion{T}, T, T, T, \inversion{T}\bigr)$ & $T, \inversion{T} = 0$ \\
$\cL_\trileg$ & $\bigl(T, \inversion{Q}_{jk}, \inversion{Q}_{ki}, \inversion{Q}_{ij}, Q_{ij}, Q_{jk}, Q_{ki}, \inversion{T}\bigr)$ & $\Omega_1, \Omega_2, T, \inversion{T} = 0$ \\
$\cL_\crosssquare$ & $\bigl(T, \sim \inversion{Q}_{jk}, \sim \inversion{Q}_{ki}, \sim \inversion{Q}_{ij}, \sim Q_{ij}, \sim Q_{jk}, \sim Q_{ki}, \inversion{T}\bigr)$ & $\Omega_1, \Omega_2, T, \inversion{T} = 0$ \\
\hline
\end{tabular}
\caption[.]{Overview of the four types of Lagrangian multiform, with their corner equations (in polynomial form) and a symmetric set of equations forming an equivalent system. Note in the final row ``\smash{$\sim \inversion{Q}_{jk}$}'' represents an expression such that, after the elimination of a tetrahedron polynomial \smash{$\inversion T$}, a quad polynomial $\inversion{Q}_{jk}$ remains.}
\label{table-corner-eqns}
\end{table}

Proposition~\ref{prop-484} relates the three sets of equations we are dealing with: those produced by~$\cL_\trileg$ (or $\cL_\crosssquare$), $\cL_\cross$, and $\cL_\triang$. (See Table~\ref{table-corner-eqns} for an overview.)
The corner equations of $\cL_\cross$ can be immediately identified with the tetrahedron equations.
The equations produced by $\cL_\triang$ can be understood from two points of view. On the one hand, they are generated by the 5-point $E$-equations, which have an obvious variational interpretation, but are far less symmetric than the other equations considered. On the other hand they are generated by the octahedron equations, which have cyclic and point inversion symmetry, but have no previously known variational interpretation.
The equations produced by $\cL_\trileg$ (or $\cL_\crosssquare$) are the six quad equations around the cube (together with the two tetrahedron equations), which are equivalent to the combined set of two tetrahedron equations and two octahedron equations.

In the next section, we introduce the \emph{double zero property}, which states that the exterior derivative of a 2-form can be expanded such that each of its terms is a product of expressions that vanish on its corner equations. This will give us a variational interpretation for all of the equations discussed above, including the octahedron equations.

\section{The double zero property of discrete exterior derivatives}
\label{sec-double0}

To motivate the formulation of the double zero property, we first present some observations regarding a linear quad equation. Then we formalise the definition of a double zero expansion and provide constructions of double zero expansions for each of the 2-forms for any member of ABS list. Afterwards we investigate explicit expansions associated with each of the 2-forms for~H1.

\subsection{Double zero expansion associated with a linear quad equation}

We consider the linear quad equation $Q_{ij} = 0$, with
\begin{align*}
 Q_{ij} &= (\alpha_i + \alpha_j)(u_i - u_j) - (\alpha_i - \alpha_j)(u - u_{ij}) .
\end{align*}
This equation can be considered as a linearisation of H1 and is associated to the following triangle Lagrangian
\begin{align*}
 \cL_\triang & = u (u_i - u_j) - \frac{\alpha_i + \alpha_j}{2(\alpha_i - \alpha_j)} (u - u_{ij})^2 .
\end{align*}
Its action over an elementary cube can be written as
\begin{align}
 S_\triang & = \cL_\triang(u_k,u_{ki},u_{jk},\alpha_i,\alpha_j) + \cL_\triang(u_i,u_{ij},u_{ki},\alpha_j,\alpha_k) + \cL_\triang(u_j,u_{jk},u_{ij},\alpha_k,\alpha_i) \nonumber \\
 & \quad - \cL_\triang(u,u_i,u_j,\alpha_i,\alpha_j) - \cL_\triang(u,u_j,u_k,\alpha_j,\alpha_k) - \cL_\triang(u,u_k,u_i,\alpha_k,\alpha_i) \nonumber \\
 & = \frac{O_1 \, O_2}{(\alpha_i - \alpha_j)(\alpha_j - \alpha_k)(\alpha_k - \alpha_i)} , \label{eqn:lindoubzeroo1o2}
\end{align}
where
\begin{align*}
 O_1 & := (\alpha_j - \alpha_k) u_i + (\alpha_j - \alpha_k) u_{jk} + \cyclic , \\
 O_2 & := \alpha_i (\alpha_j - \alpha_k) u_i - \alpha_i (\alpha_j - \alpha_k) u_{jk} + \cyclic .
\end{align*}
Hence, the action over an elementary cube can be explicitly written as a product of two expressions $O_1$ and $O_2$. They are symmetric under cyclic permutation of the indices and under point inversion, in the sense that \smash{$\inversion{O}_1 = O_1$} and \smash{$\inversion{O}_2 = -O_2$}.

In analogy to multiple zeros of polynomials, we say that $S_\triang$ has a \emph{double zero} on the system of equations $O_1=0$, $O_2=0$.
The factorisation~\eqref{eqn:lindoubzeroo1o2} implies that for any $v \in \{u,u_i,u_j,u_k,u_{ij},u_{jk},\allowbreak u_{ki},u_{ijk}\}$ there holds
\[ \pdv{ S_\triang }{v} = \frac{1}{(\alpha_i - \alpha_j)(\alpha_j - \alpha_k)(\alpha_k - \alpha_i)} \biggl( \pdv{O_1}{v} O_2 + O_1 \pdv{O_2}{v} \biggr), \]
which is zero if both $O_1=0$ and $O_2=0$. Hence, the double zero property tells us that the action is critical if the two equations $O_1=0$ and $O_2=0$ hold. This implies that all corner equations are consequences of these two equations!

This linear example shows the power of the double zero property: if $S$ can be written as a product of two expressions, then the fact that both these expressions vanish is a sufficient condition for criticality. Hence, in this situation, two equations together imply the full system of corner equations.

In some examples, the double zero expansion of $S$ involves more than two expressions and is a sum of products, rather than a single product. In the next subsection, we give a suitably general definition of a double zero expansion.

\subsection{Definition of a double zero expansion}

\begin{Definition}\label{def:doublezero}
 We say that $S(u,u_i,u_j,u_k,u_{ij},u_{jk},u_{ki},u_{ijk})$ has a \emph{double zero} on a set of equations $\{ K_m(u,u_i,u_j,u_k,u_{ij},u_{jk},u_{ki},u_{ijk})=0 \mid m = 1, \dots, M \}$ if it can be written as
 \begin{equation}
 \label{double0def}
 S = \sum_{1 \leq m \leq m' \leq M} d_{m,m'} K_m \, K_{m'} .
 \end{equation}
 Here, $d_{m,m'}$ are coefficients that can depend upon any variable or parameter, but must be nonsingular on generic points of $\{ K_m(u,u_i,u_j,u_k,u_{ij},u_{jk},u_{ki},u_{ijk})=0 \mid m = 1, \dots, M \}$.
 We call the right hand side of equation~\eqref{double0def} a \emph{double zero expansion} of $S$.
\end{Definition}

This definition applies to the example above, because equation~\eqref{eqn:lindoubzeroo1o2} is of the form~\eqref{double0def} with~${M=2}$, $K_1 = O_1$, $K_2 = O_2$, $d_{1,1} = d_{2,2} = 0$, and
\[d_{1,2} = \frac{1}{(\alpha_i - \alpha_j)(\alpha_j - \alpha_k)(\alpha_k - \alpha_i)} .\]

\begin{Proposition}
%\label{prop:double0}
 If the action over an elementary cube $S$ of a Lagrangian $2$-form $L$ has a~double zero on a~system of equations $\{ K_m=0 \mid m = 1, \dots, M\}$, then this system implies the corner equations of $L$.
\end{Proposition}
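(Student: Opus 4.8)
The plan is to differentiate the double zero expansion term by term and observe that every resulting term retains at least one factor $K_m$. Recall that the corner equations of $L$ are precisely the conditions $\pdv{S}{v} = 0$ for each vertex variable $v \in \{u, u_i, u_j, u_k, u_{ij}, u_{jk}, u_{ki}, u_{ijk}\}$. So it suffices to show that each such partial derivative vanishes whenever all the $K_m$ vanish.

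First I would apply the product rule to equation~\eqref{double0def}. For any vertex variable $v$, this gives
\[
 \pdv{S}{v} = \sum_{1 \leq m \leq m' \leq M} \left( \pdv{d_{m,m'}}{v} K_m K_{m'} + d_{m,m'} \pdv{K_m}{v} K_{m'} + d_{m,m'} K_m \pdv{K_{m'}}{v} \right) .
\]
The key structural observation is that each of the three summand types carries an undifferentiated factor $K$: the first has both $K_m$ and $K_{m'}$, the second retains $K_{m'}$, and the third retains $K_m$. Consequently, on the solution set where $K_1 = \dots = K_M = 0$, every term in the sum contains a vanishing factor.

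To conclude, I would invoke the nonsingularity hypothesis on the coefficients $d_{m,m'}$ from Definition~\ref{def:doublezero}. Since each $d_{m,m'}$ is nonsingular at generic points of the solution set, both $d_{m,m'}$ and its derivative $\pdv{d_{m,m'}}{v}$ remain finite there; there is therefore no indeterminate product, and each term genuinely evaluates to zero rather than to an undefined $0 \cdot \infty$. Hence $\pdv{S}{v} = 0$ for every $v$, so the system $\{K_m = 0 \mid m = 1, \dots, M\}$ implies all corner equations of $L$.

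The argument is essentially a routine application of the product rule, so there is no serious obstacle; the only point requiring care is the role of the nonsingularity condition, which is exactly what licenses us to treat each term as vanishing on the solution set rather than as an indeterminate form.
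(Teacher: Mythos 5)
Your proposal is correct and follows essentially the same route as the paper: differentiate the double zero expansion~\eqref{double0def} with the product rule and observe that every resulting term retains an undifferentiated factor $K_m$ or $K_{m'}$, hence vanishes on the solution set. Your additional remark about the nonsingularity of the coefficients $d_{m,m'}$ is a reasonable elaboration of a point the paper leaves implicit, but the argument is the same.
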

\begin{proof}
 For any $v \in \{u,u_i,u_j,u_k,u_{ij},u_{jk},u_{ki},u_{ijk}\}$, we have
 \[
 \pdv{S}{v} = \sum_{1 \leq m \leq m' \leq M} \biggl( \pdv{d_{m,m'}}{v} K_m K_{m'} + d_{m,m'} \pdv{K_m}{v} K_{m'} + d_{m,m'} K_m \pdv{K_{m'}}{v} \biggr),
 \]
 which vanishes on $\{ K_m=0 \mid m = 1, \dots, M \}$.
\end{proof}

Below, we will use Taylor expansions to derive double zero expansions involving higher powers of expressions. For example, consider an action $S$ which is critical on some equation $K_1 = 0$, and an expansion of the form
\[ S = \sum_{n = 2}^\infty c_{n} K_1^n . \]
This satisfies Definition~\ref{def:doublezero}, because we can write the action as
\[ S = \Biggl( \sum_{n = 2}^\infty c_{n} K_1^{n-1} \Biggr) K_1 , \]
where the sum can be identified with $d_{1,1}$.

\subsection{Double zero expansions for ABS 2-forms}
\label{sec-double0-ABS}

In this subsection, we derive double zero expansions for the action over an elementary cube for each of the discrete 2-forms for arbitrary members of the ABS list. For ease of notation, we assume that we are in a case where the closure property holds:

\begin{Assumption}
\label{assumption}
In the following we assume that $S^{\vec \Theta, \vec \Xi} = 0$ $\bigl($or $S^{\vec \Xi} = 0$ for ${\rm H}1$, ${\rm A}1_{\delta=0}$, ${\rm Q}1_{\delta=0}\bigr)$ on solutions. To drop this assumption, a multiple of $4 \pi^2$ should be added as constant term to the series expansions that follow.
\end{Assumption}

The double zero expansions are constructed from Taylor expansions in one variable, where the variable represents either a quad polynomial $Q$, a tetrahedron polynomial $T$ or an $E$ polynomial. In this Taylor expansion, the zeroth order term vanishes due to the closure relation and the first order term vanishes as a consequence of the corner equations. Hence, each discrete 2-form has a~double zero expansion in terms of the polynomials associated with its corner equations. These double zero expansions are manifestly symmetric under cyclic permutation of the indices and under point inversion.

{\bf Cross-square 2-form.}
Recall that $Q_{ij} = 0$ implies that $\nabla \cL_\crosssquare = 0$, , where $\nabla$ denotes the gradient with respect to $u$, $u_i$, $u_j$, $u_{ij}$ (see Lemma~\ref{lemma-crosssquare})
and that \smash{$S^{\vec \Theta,\vec \Xi}_\crosssquare = 0$} (see Assumption~\ref{assumption}). To get a more explicit picture of the behaviour of $\cL_\crosssquare$ and $S^{\vec \Theta,\vec \Xi}_\crosssquare = 0$ near solutions, we would like to perform a Taylor expansion about solutions $Q_{ij} = 0$. To this end, it is useful to consider~$Q_{ij}$ as one of the variables on which $\cL_\crosssquare$ depends. Specifically, we consider a change of variables from $(u,u_i,u_j,u_{ij})$ to $(Q_{ij},u_i,u_j,u_{ij})$.
Since the quad polynomial $Q_{ij}$ is multi-affine, we can write it~as
\[ Q_{ij}(u,u_i,u_j,u_{ij},\alpha_i,\alpha_j) = r(u_i,u_j,u_{ij},\alpha_i,\alpha_j) u + s(u_i,u_j,u_{ij},\alpha_i,\alpha_j), \]
so it is possible to write $u$ as a rational function of $Q_{ij},u_i,u_j,u_{ij}$:
\begin{equation*}
	u = V(Q_{ij},u_i,u_j,u_{ij},\alpha_i,\alpha_j) := \frac{Q_{ij} - s(u_i,u_j,u_{ij},\alpha_i,\alpha_j)}{r(u_i,u_j,u_{ij},\alpha_i,\alpha_j)} .
\end{equation*}

In the case of the cross square 2-form, we can consider the Lagrangian on a single square, apply this variable transformation, and then Taylor expand about $Q_{ij} = 0$:
\begin{align*}
	\cL_\crosssquare
	& = \cL_\crosssquare(V(Q_{ij},u_i,u_j,u_{ij},\alpha_i,\alpha_j),u_i,u_j,u_{ij},\alpha_i,\alpha_j) \nonumber \\
	& = \sum_{n=0}^\infty \biggl( \frac{1}{n!} \frac{\partial^n}{\partial {Q_{ij}}^n} \cL_\crosssquare(V(Q_{ij},u_i,u_j,u_{ij},\alpha_i,\alpha_j),u_i,u_j,u_{ij},\alpha_i,\alpha_j) \bigg|_{Q_{ij}=0} \biggr) Q_{ij}^n . \nonumber
\end{align*}
Now, the fact that for $Q_{ij} = 0$ there holds $ \nabla \cL_\crosssquare = 0$ implies that the first order term vanishes. Hence, we have
\begin{align*}
	\cL_\crosssquare &= g(\alpha_i,\alpha_j)
 + \cL_\crosssquare\bigl(V(0,u_i,u_j,u_{ij},\alpha_i,\alpha_j),u_i,u_j,u_{ij},\alpha_i,\alpha_j\bigr) \\
 &\quad{} +\sum_{n=2}^\infty \biggl( \frac{1}{n!} \frac{\partial^n}{\partial {Q_{ij}}^n} \cL_\crosssquare\bigl(V(Q_{ij},u_i,u_j,u_{ij},\alpha_i,\alpha_j),u_i,u_j,u_{ij},\alpha_i,\alpha_j\bigr) \Big|_{Q_{ij}=0} \biggr) Q_{ij}^n .
\end{align*}
Together with the fact that $S^{\vec \Theta,\vec \Xi}_\crosssquare = 0$ on solutions, this implies that the series expansion of the action over the elementary cube has no constant or linear terms in $Q_{ij}$.
\begin{Proposition}
The action of the Lagrangian $2$-form $\cL_\crosssquare$ over an elementary cube has \linebreak a~double~zero expansion in terms of $Q_{ij}$, $Q_{jk}$, $Q_{ki}$, \smash{$\inversion Q_{ij}$}, \smash{$\inversion Q_{jk}$}, \smash{$\inversion Q_{ki}$}:
	\begin{align}
		S_\crosssquare={}& {-}\sum_{n = 2}^\infty \biggl( \frac{1}{n!} \frac{\partial^n}{\partial {Q_{ij}}^n} \cL_\crosssquare(V(Q_{ij},u_i,u_j,u_{ij},\alpha_i,\alpha_j),u_i,u_j,u_{ij},\alpha_i,\alpha_j) \bigg|_{Q_{ij}=0} \biggr) Q_{ij}^n \nonumber \\
 &{}{-} \reverse + \cyclic .\label{Scrosssquare-Q}
 \end{align}
\end{Proposition}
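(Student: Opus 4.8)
The plan is to assemble the cube action $S_\crosssquare$ from the six single-square expansions obtained just above, and to show that only the terms of order $n\ge 2$ survive. Recall that $S_\crosssquare$ is the alternating sum of six copies of $\cL_\crosssquare$, one per face of the cube: the three faces based at $u$ carry $Q_{ij}$, $Q_{jk}$, $Q_{ki}$ as their base quad polynomials, while the three faces meeting $u_{ijk}$ carry $\inversion{Q}_{ij}$, $\inversion{Q}_{jk}$, $\inversion{Q}_{ki}$ and are the point inversions of the first three. First I would Taylor-expand each face-Lagrangian separately in its own base quad, as prepared in the text preceding the statement. Collecting the orders $n\ge 2$, the three faces at $u$ reproduce the leading sum of \eqref{Scrosssquare-Q} and its cyclic images, while the three faces at $u_{ijk}$ reproduce the $\reverse$ term and its cyclic images, since point inversion sends $Q_{ij}\mapsto\inversion{Q}_{ij}$ and carries each Taylor coefficient to its inversion. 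Thus the entire statement reduces to checking that the zeroth- and first-order terms of these six expansions cancel.

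The first-order terms vanish face by face. By the chain rule through $u=V(Q_{ij},u_i,u_j,u_{ij},\alpha_i,\alpha_j)$, the $n=1$ coefficient of the expansion of $\cL_\crosssquare$ is $\pdv{\cL_\crosssquare}{u}\big|_{Q_{ij}=0}\,\pdv{V}{Q_{ij}}$, and Lemma~\ref{lemma-crosssquare} tells us that $\pdv{\cL_\crosssquare}{u}$ equals a multiple of $2\pi{\rm i}$ on $\{Q_{ij}=0\}$ (and is zero for H1, A1$_{\delta=0}$, Q1$_{\delta=0}$); under Assumption~\ref{assumption} this is absorbed, so the linear term drops out. For the constant terms, the key observation is that $\cL_\crosssquare|_{Q_{ij}=0}$ is independent of the field variables: Lemma~\ref{lemma-crosssquare} says the full gradient of $\cL_\crosssquare$ in $u,u_i,u_j,u_{ij}$ vanishes (modulo $2\pi{\rm i}$) on the surface $\{Q_{ij}=0\}$, so $\cL_\crosssquare$ is constant along that surface and its zeroth-order coefficient is a function $g(\alpha_i,\alpha_j)$ of the parameters alone. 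Each such constant attached to a face at $u$ is then matched by the same function $g$ on the corresponding point-inverted face at $u_{ijk}$, carrying the opposite sign, so all six constants cancel in pairs and \eqref{Scrosssquare-Q} follows.

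The step I expect to require the most care is this cancellation of the constant terms, and specifically the bookkeeping of the $2\pi{\rm i}$ branch contributions. The identification of a face at $u_{ijk}$ with the point inversion of a face at $u$, and the symmetry of $\Lambda$ in its first two arguments, hold only modulo multiples of $2\pi{\rm i}$; likewise the gradient in Lemma~\ref{lemma-crosssquare} vanishes only modulo $2\pi{\rm i}$, so along $\{Q_{ij}=0\}$ the value $\cL_\crosssquare|_{Q_{ij}=0}$ may acquire a term linear in the fields whose gradient lies in $2\pi{\rm i}\Z$. The point to verify is that, under Assumption~\ref{assumption} (equivalently, that the extended action vanishes on solutions), these branch contributions are exactly the ones already accounted for by the integer fields $\vec\Theta$, $\vec\Xi$, so that the effective constant governing the pairwise cancellation is precisely $g(\alpha_i,\alpha_j)$ and no residual term remains. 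I would also record, for Definition~\ref{def:doublezero}, that the change of variables $u=V(Q_{ij},u_i,u_j,u_{ij},\alpha_i,\alpha_j)$ is legitimate because multi-affinity of $Q_{ij}$ keeps the coefficient $r$ nonzero generically, so the Taylor coefficients are nonsingular on $\{Q_{ij}=0\}$ and the expansion is a genuine local one away from the branch points.
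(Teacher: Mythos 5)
Your proposal matches the paper's own argument: both Taylor-expand each face Lagrangian in its own quad polynomial after the change of variables $u = V(Q_{ij},u_i,u_j,u_{ij},\alpha_i,\alpha_j)$, kill the linear term via Lemma~\ref{lemma-crosssquare}, and dispose of the zeroth-order terms using Assumption~\ref{assumption}. The only cosmetic difference is that you justify the constant-term cancellation by pairwise matching of point-inverted faces (each constant being a parameter-only function $g(\alpha_i,\alpha_j)$), whereas the paper simply cites the closure relation \smash{$S^{\vec \Theta,\vec \Xi}_\crosssquare = 0$} on solutions; these amount to the same fact, and your explicit flagging of the $2\pi {\rm i}$ branch bookkeeping is, if anything, more careful than the text.
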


{\bf Cross 2-form.}
Recall that \smash{$T,\inversion T = 0$} implies the existence of $\vec \Theta$ and $\vec \Xi$ such that $S_\cross^{\vec \Theta, \vec \Xi} = 0$ (see Theorem~\ref{thm-closed}) and $\nabla S_\cross^{\vec \Theta, \vec \Xi} = 0$, where $\nabla$ denotes the gradient with respect to $u,\dots,u_{ijk}$ (these are the corner equations, see Proposition~\ref{prop-cross}).
With that in mind, and with the intention of finding a series expansion about \smash{$T,\inversion T = 0$}, we consider a change of variables form $(u,u_i,u_j,u_k,u_{ij},u_{jk},u_{ki},u_{ijk})$ to \smash{$\bigl(T,u_i,u_j,u_k,u_{ij},u_{jk},u_{ki},\inversion T\bigr)$}.
Since the tetrahedron polynomial $T$ is multi-affine, it is possible to rewrite $u$ in terms of $T$ \big(and $u_{ijk}$ in terms of \smash{$\inversion T$}\big) in the following way for the entire ABS list:
\begin{align*}
	&u = W(T,u_{ij},u_{jk},u_{ki},\alpha_i,\alpha_j,\alpha_k) , \qquad
	 u_{ijk} = W \bigl(\inversion T,u_{k},u_{i},u_{j},\alpha_i,\alpha_j,\alpha_k \bigr) .
\end{align*}
Here, $W$ is a fraction with a numerator which is multi-affine in $(T,u_{ij},u_{jk},u_{ki})$ and a denominator which is multi-affine in $(u_{ij},u_{jk},u_{ki})$.

Now we consider the action on an elementary cube of the cross 2-form, apply the variable transformation, and Taylor expand about $T=0$ and \smash{$\inversion T=0$}:
\begin{align*}
S_\cross={}& {-}\cL_\cross(u,u_i,u_j,u_{ij},\alpha_i,\alpha_j) - \reverse + \cyclic \\
	={}& {-}\Lambda(u_i, u_j,\alpha_i-\alpha_j) + \Lambda\bigl(W(T,u_{ij},u_{jk},u_{ki},\alpha_i,\alpha_j\bigr), u_{ij},\alpha_i-\alpha_j) - \reverse + \cyclic \\
	={}& {-}\Lambda(u_i, u_j,\alpha_i-\alpha_j) + \sum_{n=0}^\infty \biggl( \frac{1}{n!} \frac{\partial^n}{\partial {T}^n} \Lambda(W(T,u_{ij},u_{jk},u_{ki},\alpha_i,\alpha_j), u_{ij},\alpha_i-\alpha_j) \bigg|_{T=0} \biggr) T^n \\
 &{} {-} \reverse + \cyclic .
\end{align*}
Now, the fact that \smash{$T,\inversion T = 0 \implies S_\cross^{\vec \Theta, \vec \Xi}$}, $\nabla S_\cross^{\vec \Theta, \vec \Xi} = 0$ means that the zeroth and first order terms must cancel against the terms involving $\vec \Theta$ and $\vec \Xi$, so we have the following.
\begin{Proposition}
	The action of the Lagrangian $2$-form $\cL_\cross$ over an elementary cube has a~double~zero expansion in terms of $T$, \smash{$\inversion T$}
	\begin{equation}\label{Scross-T}
		S_\cross^{\vec \Theta, \vec \Xi} = \sum_{n=2}^\infty \biggl( \frac{1}{n!}\frac{\partial^n}{\partial {T}^n} \Lambda(W, u_{ij},\alpha_i-\alpha_j) \bigg|_{T=0} \biggr) T^n - \reverse + \cyclic ,
	\end{equation}
 where $W = W(T,u_{ij},u_{jk},u_{ki},\alpha_i,\alpha_j)$.
\end{Proposition}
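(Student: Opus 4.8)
The plan is to start from the Taylor expansion in $T$ displayed immediately before the statement and to show that, once the integer terms of $S_\cross^{\vec\Theta,\vec\Xi}$ are taken into account, the zeroth- and first-order contributions in $T$ and $\inversion{T}$ drop out, leaving precisely the $n\ge 2$ sum of~\eqref{Scross-T}. The first thing I would record are the two structural facts that drive the argument. Because the tetrahedron polynomial $T$ is multi-affine, the substitution $u=W(T,u_{ij},u_{jk},u_{ki},\alpha_i,\alpha_j)$ is in fact \emph{affine} in $T$ (the denominator of $W$ is independent of $T$), and likewise $u_{ijk}=W(\inversion{T},u_k,u_i,u_j,\dots)$ is affine in $\inversion{T}$. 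Moreover, no single term of $S_\cross^{\vec\Theta,\vec\Xi}$ contains both $u$ and $u_{ijk}$: the terms $\Lambda(u,u_{\cdot\cdot})$ and the summand $2\pi{\rm i}\,\Theta u$ depend on $T$ only, their point inversions depend on $\inversion{T}$ only, and the $\Lambda(u_i,u_j)$-type terms and the $\vec\Xi$ terms depend on neither. Hence, after the change of variables, $S_\cross^{\vec\Theta,\vec\Xi}$ splits as $A(T)+B(\inversion{T})+C$ with no mixed powers $T^a\inversion{T}^b$, so its $(T,\inversion{T})$-Taylor expansion contains only pure powers of $T$ and of $\inversion{T}$.

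Next I would isolate where the integer terms sit. Since $u=W(T,\dots)$ is affine in $T$, the term $2\pi{\rm i}\,\Theta u$ contributes only at orders $0$ and $1$ in $T$, while the $\vec\Xi$ terms and the middle-layer $\Lambda$-terms are constant in $T$ and $\inversion{T}$. Consequently, every Taylor coefficient of order $n\ge 2$ originates purely from $\Lambda(W,u_{ij},\alpha_i-\alpha_j)$ and its cyclic and inverted images, which is exactly the coefficient appearing in~\eqref{Scross-T}. It then remains to kill the order-$0$ and order-$1$ coefficients. The order-$0$ coefficient is $S_\cross^{\vec\Theta,\vec\Xi}\big|_{T=\inversion{T}=0}$, which vanishes by the closure relation (Theorem~\ref{thm-closed}, under Assumption~\ref{assumption}).

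For the order-$1$ coefficients I would apply the chain rule. Holding the remaining new coordinates fixed, only $u$ varies with $T$, so $\partial S_\cross^{\vec\Theta,\vec\Xi}/\partial T = (\partial u/\partial T)\,\partial S_\cross^{\vec\Theta,\vec\Xi}/\partial u$, and by~\eqref{cross-corners-u} the second factor equals $\mathcal T^{(u)}+2\Theta\pi{\rm i}$, which vanishes on the tetrahedron equations by Proposition~\ref{prop-cross}. Since the Jacobian $\partial u/\partial T$ is finite and nonzero at generic points, the order-$1$ coefficient in $T$ vanishes on the variety $\{T=\inversion{T}=0\}$, and the argument for $\inversion{T}$ is identical via $u_{ijk}$. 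As the six middle-layer variables parametrise this solution variety freely, these coefficient functions vanish identically, which is precisely what~\eqref{Scross-T} asserts. Finally, the surviving tail $\sum_{n\ge 2}(\cdots)T^n$ meets Definition~\ref{def:doublezero} by absorbing $\sum_{n\ge 2}(\cdots)T^{n-1}$ into a coefficient $d$, so the expression is a genuine double zero in $T$ and $\inversion{T}$. (The case of H1, A1$_{\delta=0}$, Q1$_{\delta=0}$ is recovered by setting $\vec\Theta=\vec 0$ and using $S_\cross^{\vec\Xi}$.)

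The main obstacle is the bookkeeping of the integer fields. The order-$0$ and order-$1$ coefficients depend linearly on $\vec\Theta,\vec\Xi$, and the vanishing statements hold only once these integers are chosen so that closure and the corner equations hold at the solution about which one expands; one must argue that a single choice of $\vec\Theta,\vec\Xi$ is valid throughout a neighbourhood, i.e.\ that the integers are locally constant away from the branch cuts of $\Lambda$, so that the identity~\eqref{Scross-T} is genuine rather than merely pointwise. A secondary, routine point is to verify that $\partial u/\partial T\neq 0$, i.e.\ that the coefficient of $u$ in $T$ does not vanish, which holds on generic points of the solution variety by the genuine multi-affinity of $T$.
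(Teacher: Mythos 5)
Your proposal follows essentially the same route as the paper: change variables so that $T$ and $\inversion{T}$ replace $u$ and $u_{ijk}$, Taylor expand $\Lambda(W,u_{ij},\alpha_i-\alpha_j)$ about $T=0$, and kill the order-$0$ term by the closure relation (Theorem~\ref{thm-closed} under Assumption~\ref{assumption}) and the order-$1$ term by the corner equations of Proposition~\ref{prop-cross}. Your additional bookkeeping (the affine dependence of $W$ on $T$, the absence of mixed $T^a\inversion{T}^b$ terms, the chain-rule reduction of $\partial S/\partial T$ to $\mathcal{T}^{(u)}+2\Theta\pi{\rm i}$, and the remark on local constancy of $\vec\Theta,\vec\Xi$) is a correct and somewhat more careful elaboration of the same argument.
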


{\bf Triangle 2-form.} From equations~\eqref{Scrosssquare-tridcross} and~\eqref{Striang-tridcross}, we infer that
\begin{equation}
 \label{Striang-crosssquarecross}
 S_\triang = \frac12 S_\crosssquare - \frac12 S_\cross .
\end{equation}
We note that $S_\triang(u_i,u_j,u_k,u_{ij},u_{jk},u_{ki})$ does not depend on $u$ or $u_{ijk}$. Thus, imposing the equations \smash{$T,\inversion T = 0$} has no effect on $S_\triang$: we can take $T = 0$ and \smash{$\inversion T = 0$} as the definition of~$u$ and~$u_{ijk}$ in terms of the variables $(u_i,u_j,u_k,u_{ij},u_{jk},u_{ki})$ that occur in $S_\triang$. Since \smash{$T,\inversion T = 0$} implies $S_\cross = 0$, we have that
\begin{equation*}
	T,\inversion T = 0 \ \implies \ S_\triang = \frac12 S_\crosssquare .
\end{equation*}
Now recall that \smash{$E,\inversion E = 0$} implies $S_\triang^{\vec \Theta, \vec \Xi} = 0$ (see Theorem~\ref{thm-closed}) and $\nabla S_\triang^{\vec \Theta, \vec \Xi} = 0$ (see Proposition~\ref{prop-triang}).
With all of this in mind, we note that the polynomial identity~\eqref{TQE-rel} between $T$, $Q_{ij}$ and $E_{ij}$ implies that
\begin{align*}
	T = 0 \ \implies \ Q_{ij} = \frac{\gamma(\alpha_i,\alpha_j,\alpha_k)}{\pdv{T}{u}} E_{ij} .
\end{align*}
Applying these observations to the double zero expansion of the action of the cross square Lagrangian~\eqref{Scrosssquare-Q}, we find the following.
\begin{Proposition}
	The action of the Lagrangian $2$-form $\cL_\triang$ over an elementary cube has \linebreak a~double~zero expansion in terms of $E_{ij}$, $E_{jk}$, $E_{ki}$, \smash{$\inversion E_{ij}$}, \smash{$\inversion E_{jk}$}, \smash{$\inversion E_{ki}$}
	\begin{equation}\label{Striangle-E}
		S_\triang^{\vec \Theta, \vec \Xi} = -\frac{1}{2} \sum_{n = 2}^\infty \frac{\frac{\partial^n}{\partial {Q_{ij}}^n} \cL_\crosssquare(V,u_i,u_j,u_{ij},\alpha_i,\alpha_j) \big|_{Q_{ij}=0} }{n! \bigl( \pdv{T}{u} \frac{1}{\gamma(\alpha_i,\alpha_j,\alpha_k)} \bigr)^n} E_{ij}^n - \reverse + \cyclic ,
	\end{equation}
 where $V = V(Q_{ij},u_i,u_j,u_{ij},\alpha_i,\alpha_j)$.
\end{Proposition}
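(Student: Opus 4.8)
The plan is to build the expansion out of three ingredients already in hand: the decomposition \eqref{Striang-crosssquarecross}, $S_\triang = \frac12 S_\crosssquare - \frac12 S_\cross$; the double zero expansion \eqref{Scrosssquare-Q} for $\cL_\crosssquare$ in the quad polynomials; and the polynomial identity \eqref{TQE-rel} relating $Q_{ij}$, $T$ and $E_{ij}$. The key structural observation is that $S_\triang$ depends only on the octahedral variables $(u_i,u_j,u_k,u_{ij},u_{jk},u_{ki})$ and not on $u$ or $u_{ijk}$. Since \eqref{Striang-crosssquarecross} is an identity in all eight cube variables, it must hold for every value of $u$ and $u_{ijk}$; in particular I am free to fix these two fields by imposing $T=0$ and $\inversion T = 0$. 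This is legitimate because $T$ is multi-affine, hence affine in $u$, and $\inversion T$ is affine in $u_{ijk}$, so each equation has a unique generic solution, namely the functions $W$ introduced for the cross 2-form.

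On this locus the two tetrahedron equations hold, so by Proposition~\ref{prop-cross} the corner equations of $\cL_\cross$ are satisfied, and by the closure result Theorem~\ref{thm-closed} (using Assumption~\ref{assumption} to discard the possible multiple of $4\pi^2$) we obtain $S_\cross^{\vec \Theta,\vec \Xi}=0$. Hence $S_\triang^{\vec \Theta,\vec \Xi} = \frac12 S_\crosssquare^{\vec \Theta,\vec \Xi}$ on this locus, and because $S_\triang$ is independent of $u,u_{ijk}$ this is in fact an identity in the six octahedral variables. I would then substitute the cross-square expansion \eqref{Scrosssquare-Q} and convert each power $Q_{ij}^n$ into $E_{ij}^n$ using \eqref{TQE-rel}: on $T=0$ that identity collapses to $Q_{ij} = \frac{\gamma(\alpha_i,\alpha_j,\alpha_k)}{\pdv{T}{u}} E_{ij}$, so $Q_{ij}^n = \bigl(\pdv{T}{u}/\gamma\bigr)^{-n} E_{ij}^n$, with the cyclic and point-inverted terms treated identically, sending the $\inversion Q$-contributions to $\inversion E$-contributions. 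Carrying the overall factor $\frac12$ then reproduces exactly the claimed series \eqref{Striangle-E}.

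The remaining thing to verify is that the result genuinely satisfies Definition~\ref{def:doublezero}: the expansion must start at quadratic order and its coefficients must be nonsingular on generic points of the $E$-equations. The quadratic start is inherited directly from \eqref{Scrosssquare-Q}, whose linear term vanished by Lemma~\ref{lemma-crosssquare} and whose constant term vanished by closure; multiplying by $\frac12$ and rewriting in $E_{ij}$ preserves this. For nonsingularity I would note that the only new denominator is $\bigl(\pdv{T}{u}\bigr)^{n}$, which is harmless wherever $\pdv{T}{u}\neq 0$, i.e.\ generically, since $T$ depends nontrivially on $u$. The step requiring the most care — and the one I would present most explicitly — is the ``gauge fixing'' to $T=\inversion T=0$: one must confirm that evaluating the cross-square expansion on this locus and substituting $Q_{ij}=\frac{\gamma}{\pdv{T}{u}}E_{ij}$ is consistent, namely that on $T=0$ the value of $Q_{ij}$ is itself a function of the six octahedral variables (it is, because fixing $T=0$ determines $u$ as a function of $u_{ij},u_{jk},u_{ki}$), so that the entire right-hand side of \eqref{Striangle-E} is a bona fide function of the variables on which $S_\triang$ actually depends.
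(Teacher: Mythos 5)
Your proposal is correct and follows essentially the same route as the paper: the decomposition $S_\triang = \tfrac12 S_\crosssquare - \tfrac12 S_\cross$, the observation that $S_\triang$ is independent of $u$ and $u_{ijk}$ so that $T=\inversion{T}=0$ may be imposed without loss, the resulting reduction $S_\triang = \tfrac12 S_\crosssquare$, and the conversion $Q_{ij} = \frac{\gamma}{\partial T/\partial u}E_{ij}$ on $T=0$ substituted into the cross-square expansion. Your additional checks on the quadratic start and nonsingularity of the coefficients are consistent with, and slightly more explicit than, the paper's presentation.
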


{\bf Trident 2-form.} Recall that \smash{$Q,\inversion Q = 0$} implies $S_\trileg^{\vec \Theta, \vec \Xi} = 0$ (see Theorem~\ref{thm-closed}) and $\nabla S_\trileg^{\vec \Theta, \vec \Xi} = 0$ (see Proposition~\ref{prop-trileg}). From equation~\eqref{Scrosssquare-tridcross}, we know that
\begin{equation}
 \label{Strileg-crosscrosssquare}
 S_\trileg = \frac{1}{2} S_\cross + \frac{1}{2} S_\crosssquare .
\end{equation}
From the three-leg tetrahedron equations~\eqref{tetra-u} and~\eqref{tetra-uijk}, we can derive the following relations between the multi-affine polynomials $T$ in terms of $Q_{ij}$, $Q_{jk}$ and $Q_{ki}$:
 \begin{align}
 &c T = d_{ij} Q_{ij} + d_{jk} Q_{jk} + d_{ki} Q_{ki} , \nonumber\\
 &\inversion c \, \inversion T= \inversion d_{ij} \inversion Q_{ij} + \inversion d_{jk} \inversion Q_{jk} + \inversion d_{ki} \inversion Q_{ki} ,\label{T-cyclic-Qs}
 \end{align}
where $d$ and $c_{ij}$ are polynomials. We can apply these observations to the double zero expansion of the cross 2-form~\eqref{Scross-T} and cross-square 2-form~\eqref{Scrosssquare-Q} and conclude the following.

\begin{Proposition}
	The action of Lagrangian $2$-form $\cL_\trileg$ over a an elementary cube has \linebreak a~double~zero expansion in terms of $Q_{ij}$, $Q_{jk}$, $Q_{ki}$, \smash{$\inversion Q_{ij}$}, \smash{$\inversion Q_{jk}$}, \smash{$\inversion Q_{ki}$}
	\begin{align*}
 S_\trileg^{\vec \Theta, \vec \Xi}&=\frac{1}{2} \sum_{n=2}^\infty \biggl( \frac{1}{n!}\frac{\partial^n}{\partial {T}^n} \Lambda(W, u_{ij},\alpha_i-\alpha_j ) \bigg|_{T=0} \biggr) \frac{1}{c^n} (d_{ij} Q_{ij} + d_{jk} Q_{jk} + d_{ki} Q_{ki} )^n \\
 &\quad{}-\frac{1}{2} \sum_{n = 2}^\infty \biggl( \frac{1}{n!} \frac{\partial^n}{\partial {Q_{ij}}^n} \cL_\crosssquare (V,u_i,u_j,u_{ij},\alpha_i,\alpha_j ) \bigg|_{Q_{ij}=0} \biggr) Q_{ij}^n - \reverse + \cyclic ,
	\end{align*}
 where $W = W(T,u_{ij},u_{jk},u_{ki},\alpha_i,\alpha_j)$ and $V = V(Q_{ij},u_i,u_j,u_{ij},\alpha_i,\alpha_j)$.
\end{Proposition}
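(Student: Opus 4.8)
The plan is to read off the expansion from the linear relation \eqref{Strileg-crosscrosssquare}, $S_\trileg = \frac{1}{2}S_\cross + \frac{1}{2}S_\crosssquare$, by inserting the two double zero expansions that have just been established for the cross and cross-square actions and then rewriting everything in terms of the quad polynomials $Q_{ij}$, $Q_{jk}$, $Q_{ki}$ and their point inversions. Because the trident action is a fixed linear combination of actions whose double zero expansions are already known, no new analytic input is needed: the task reduces to expressing the $T$-series of the cross form in the common variables.

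First I would substitute the cross expansion \eqref{Scross-T}, a series in powers of $T$ and $\inversion T$, together with the cross-square expansion \eqref{Scrosssquare-Q}, a series in powers of $Q_{ij}$ and $\inversion Q_{ij}$, into \eqref{Strileg-crosscrosssquare}. This writes $S_\trileg^{\vec \Theta, \vec \Xi}$ as $\frac{1}{2}$ times a $T$-series plus $\frac{1}{2}$ times a $Q$-series, each with its cyclic and point-inverted copies. The integer-field terms combine exactly as in the two source propositions: under Assumption \ref{assumption} the closure relation (Theorem \ref{thm-closed}) removes the zeroth-order term and the corner equations of Proposition \ref{prop-trileg} remove the first-order term, so both series begin at order $n = 2$.

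Next I would eliminate $T$ using \eqref{T-cyclic-Qs}. From $cT = d_{ij}Q_{ij} + d_{jk}Q_{jk} + d_{ki}Q_{ki}$ one gets $T^n = c^{-n}(d_{ij}Q_{ij} + d_{jk}Q_{jk} + d_{ki}Q_{ki})^n$, and the point-inverted series is converted in the same way through the second line of \eqref{T-cyclic-Qs}. Substituting these and collecting the factors of $\frac{1}{2}$ produces precisely the displayed formula.

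What remains is to check that the outcome meets Definition \ref{def:doublezero}. The $Q$-series inherited from the cross square already starts at $Q_{ij}^2$, and every monomial of $(d_{ij}Q_{ij} + d_{jk}Q_{jk} + d_{ki}Q_{ki})^n$ with $n \geq 2$ is a product of at least two quad polynomials; in either case each term is at least quadratic in the set $\{Q_{ij}, Q_{jk}, Q_{ki}, \inversion Q_{ij}, \inversion Q_{jk}, \inversion Q_{ki}\}$, with any surplus factors absorbed into the coefficients. I expect the only delicate point to be the admissibility of these coefficients: the factor $c^{-n}$ arising from \eqref{T-cyclic-Qs} must remain nonsingular on generic solutions of the quad equations, which reduces to verifying that $c$ (and likewise $\inversion c$) does not vanish identically on that variety — a consequence of \eqref{T-cyclic-Qs} presenting $T$ as a genuine nonzero combination of the $Q_{ij}$.
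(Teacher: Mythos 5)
Your proposal is correct and follows essentially the same route as the paper: substitute the previously established double zero expansions \eqref{Scross-T} and \eqref{Scrosssquare-Q} into the linear relation $S_\trileg = \frac{1}{2}S_\cross + \frac{1}{2}S_\crosssquare$ from \eqref{Strileg-crosscrosssquare}, then eliminate $T$ and $\inversion T$ via \eqref{T-cyclic-Qs}. Your closing remarks on the admissibility of the coefficients $c^{-n}$ under Definition~\ref{def:doublezero} are a reasonable extra check that the paper leaves implicit.
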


\subsection{Double zeroes on octahedron equations}

Some of the double zero expansions above can be written in terms of octahedron polynomials. For A2, we can rewrite the expansion~\eqref{Striangle-E} using equation~\eqref{Omegas-mu-E}.\footnote{This would also be the case for Q4, but we have not proved a closure relation for this equation, so we will not discuss its double zero expansion either.} With the cyclic symmetry and point inversion symmetry of $\Omega_1$ and $\Omega_2$, we can conclude the following.

\begin{Proposition}
	For equation ${\rm A}2$, a double zero expansion for the cube action of the triangle $2$-form in terms of $\Omega_1$, $\Omega_2$ is given by
	\begin{align*}
		S_\triang^{\vec \Theta, \vec \Xi}={}&{-}\frac{1}{2} \sum_{n = 2}^\infty \frac{ \frac{\partial^n}{\partial {Q_{ij}}^n} \cL_\crosssquare(V,u_i,u_j,u_{ij},\alpha_i,\alpha_j) \big|_{Q_{ij}=0} }{ n! \bigl( \pdv{T}{u} \frac{\mu(u_i,u_j,u_{jk},u_{ki},\alpha_i,\alpha_j,\alpha_k) }{\gamma(\alpha_i,\alpha_j,\alpha_k)} \bigr)^n} \biggl( \pdv{\Omega_1}{u_{k}}\Omega_2 - \pdv{\Omega_2}{u_{k}} \Omega_1\biggr)^n \\
 &{} {-} \reverse + \cyclic ,
	\end{align*}
 where $V = V(Q_{ij},u_i,u_j,u_{ij},\alpha_i,\alpha_j)$.
\end{Proposition}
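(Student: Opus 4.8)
The plan is to obtain the octahedron expansion as a direct consequence of the $E$-polynomial expansion \eqref{Striangle-E}, which was already established as a double zero expansion of $S_\triang^{\vec \Theta, \vec \Xi}$. Since for A2 the closure relation holds (Theorem~\ref{thm-closed}), Assumption~\ref{assumption} applies and \eqref{Striangle-E} is available; the only remaining task is to re-express the $E$-polynomials through $\Omega_1$ and $\Omega_2$ and to confirm that the resulting series still meets Definition~\ref{def:doublezero}.

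First I would solve the first line of the A2 elimination identity \eqref{Omegas-mu-E} for the corner polynomial, obtaining
\[
 E_{ij} = \frac{1}{\mu(u_i,u_j,u_{jk},u_{ki},\alpha_i,\alpha_j,\alpha_k)} \left( \pdv{\Omega_1}{u_k} \Omega_2 - \pdv{\Omega_2}{u_k} \Omega_1 \right).
\]
Raising this to the $n$-th power and substituting into \eqref{Striangle-E}, the factor $\mu^{-n}$ merges with the denominator $\bigl(\pdv{T}{u}\tfrac{1}{\gamma}\bigr)^{n}$ of \eqref{Striangle-E} to produce exactly the factor $\bigl(\pdv{T}{u}\tfrac{\mu}{\gamma}\bigr)^{n}$ printed in the statement. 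The cyclic images are carried along by the $\cyclic$ notation, while the point-inverted part comes from the second line of \eqref{Omegas-mu-E} together with the A2 symmetries $\inversion\Omega_1 = \Omega_1$ and $\inversion\Omega_2 = \Omega_2$, so that $-\reverse$ assumes precisely the displayed shape.

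Next I would check that this really is a double zero expansion. Because $\pdv{\Omega_1}{u_k}\Omega_2 - \pdv{\Omega_2}{u_k}\Omega_1$ is a linear combination of $\Omega_1$ and $\Omega_2$ with field-dependent coefficients, each power with $n \geq 2$ expands into monomials carrying at least two octahedron factors; hence every term is of the form $d_{m,m'}\Omega_m\Omega_{m'}$ required by Definition~\ref{def:doublezero}, with $M=2$, $K_1=\Omega_1$, $K_2=\Omega_2$.

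The main obstacle is the nonsingularity clause of Definition~\ref{def:doublezero}: since we have divided by $\mu^n$, I must verify that the resulting coefficients $d_{m,m'}$ stay nonsingular at generic points of $\{\Omega_1=\Omega_2=0\}$. This reduces to showing that neither $\mu$ nor $\pdv{T}{u}$ vanishes identically on the octahedron variety, which I would confirm from the explicit A2 expressions (equivalently, that the octahedron variety is not contained in $\{\mu = 0\}$). Everything else is routine bookkeeping of the cyclic and inversion symmetry.
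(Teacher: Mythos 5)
Your proposal is correct and follows essentially the same route as the paper, which justifies this proposition in a single remark: substitute $E_{ij} = \mu^{-1}\bigl(\pdv{\Omega_1}{u_k}\Omega_2 - \pdv{\Omega_2}{u_k}\Omega_1\bigr)$ from equation~\eqref{Omegas-mu-E} into the established expansion~\eqref{Striangle-E}, absorb $\mu^{-n}$ into the denominator, and use the cyclic and point-inversion symmetries of $\Omega_1$, $\Omega_2$ for the $\cyclic$ and $\reverse$ terms. Your additional checks that each $n\geq 2$ term is genuinely of the form $d_{m,m'}\Omega_m\Omega_{m'}$ and that the coefficients remain nonsingular on generic points of $\{\Omega_1=\Omega_2=0\}$ are sensible extra care that the paper leaves implicit.
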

For the rest of the ABS list, we can rewrite the expansion~\eqref{Striangle-E} using equation~\eqref{Omegas-explicit-E} and can conclude the following.

\begin{Proposition}
	For equations ${\rm Q}1$, ${\rm Q}2$, ${\rm Q}3$, ${\rm H}1$, ${\rm H}2$, ${\rm H}3$ and ${\rm A}1$, a double zero expansion for the cube action of the triangle $2$-form in terms of $\Omega_1$, $\Omega_2$ is given by
	\begin{align*}
		S_\triang^{\vec \Theta, \vec \Xi}={} &{-}\frac{1}{2} \sum_{n = 2}^\infty \frac{\frac{\partial^n}{\partial {Q_{ij}}^n} \cL_\crosssquare(V,u_i,u_j,u_{ij},\alpha_i,\alpha_j) \big|_{Q_{ij}=0} }{n! \bigl( \pdv{T}{u} \frac{1}{\gamma(\alpha_i,\alpha_j,\alpha_k)} g_1 \bigl( \pdv{\Omega_1}{u_{ij}} - \pdv{\Omega_1}{u_{k}} \bigr) \bigr)^n} \biggl( \pdv{\Omega_1}{u_{k}}\Omega_2 - \pdv{\Omega_2}{u_{k}} \Omega_1 \biggr)^n \\
 &{}{-} \reverse + \cyclic ,
	\end{align*}
 where $V = V(Q_{ij},u_i,u_j,u_{ij},\alpha_i,\alpha_j)$.
\end{Proposition}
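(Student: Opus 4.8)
The plan is to derive this expansion directly from the $E$-polynomial double zero expansion \eqref{Striangle-E} by substituting the explicit relation \eqref{Omegas-explicit-E} between the $E$-polynomials and the octahedron polynomials $\Omega_1$, $\Omega_2$. The only new ingredient compared with the preceding propositions is that each power $E_{ij}^n$ is rewritten in terms of $\Omega_1$ and $\Omega_2$, so the bulk of the work is a bookkeeping substitution that preserves the cyclic and point-inversion structure already present in \eqref{Striangle-E}.

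First I would solve the first line of \eqref{Omegas-explicit-E} for $E_{ij}$, obtaining
\[ E_{ij} = \frac{\pdv{\Omega_1}{u_{k}}\Omega_2 - \pdv{\Omega_2}{u_{k}}\Omega_1}{g_1(\alpha_i,\alpha_j,\alpha_k)\bigl(\pdv{\Omega_1}{u_{ij}} - \pdv{\Omega_1}{u_{k}}\bigr)} , \]
and raise this to the $n$-th power. Substituting $E_{ij}^n$ into \eqref{Striangle-E} moves the factor $\bigl(g_1(\pdv{\Omega_1}{u_{ij}} - \pdv{\Omega_1}{u_{k}})\bigr)^n$ into the denominator of the coefficient, producing exactly the claimed summand. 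The point-inverted term denoted $\reverse$ is handled by the second line of \eqref{Omegas-explicit-E}, which expresses $\inversion E_{ij}$ through $\Omega_1$, $\Omega_2$. Here I would use $\inversion\Omega_1 = -\Omega_1$ and $\inversion\Omega_2 = \Omega_2$ (valid for every member of the list except Q4 and A2) to check that applying point inversion to the new summand reproduces precisely the substitution of $\inversion E_{ij}$; likewise the cyclic structure is inherited since $\Omega_1$, $\Omega_2$ are cyclically invariant and \eqref{Omegas-explicit-E} is compatible with cycling the indices.

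Second, I would verify that the right-hand side is genuinely a double zero expansion in the sense of Definition~\ref{def:doublezero}, namely in terms of $\Omega_1$ and $\Omega_2$. Writing $A := \pdv{\Omega_1}{u_{k}}\Omega_2 - \pdv{\Omega_2}{u_{k}}\Omega_1$, the coefficients $\pdv{\Omega_1}{u_{k}}$ and $\pdv{\Omega_2}{u_{k}}$ are polynomials in the field variables, so the binomial expansion of $A^n$ is a sum of terms whose octahedron content is $\Omega_1^{n-\ell}\Omega_2^{\ell}$ with total degree $n\geq 2$. Hence $A^n$ lies in the square of the ideal generated by $\Omega_1$ and $\Omega_2$, and every summand vanishes to second order on $\{\Omega_1 = 0,\ \Omega_2 = 0\}$, as required.

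The most delicate point, and the one I expect to be the main obstacle, is nonsingularity of the coefficients on generic solutions of the octahedron equations. The denominators introduced contain $g_1$, $\pdv{T}{u}$ and $\pdv{\Omega_1}{u_{ij}} - \pdv{\Omega_1}{u_{k}}$, and one must check that none of these factors vanishes identically on the octahedron variety. The function $g_1$ depends only on the parameters and is nonzero by its definition in \cite[Proposition 5.3]{boll2016integrability}; $\pdv{T}{u}$ is generically nonzero because $T$ is genuinely multi-affine in $u$; and $\pdv{\Omega_1}{u_{ij}} - \pdv{\Omega_1}{u_{k}}$ is a generically nonvanishing polynomial in the octahedron variables (for H1, using \eqref{H1-Omega1}, it equals $(u_j - u_i)+(u_{ki}-u_{jk})$), and the same holds after cyclic permutation and point inversion. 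Granting this, the coefficients in the displayed expansion are nonsingular at generic points of the octahedron variety, so Definition~\ref{def:doublezero} is satisfied and the statement follows.
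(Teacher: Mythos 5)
Your proposal is correct and takes essentially the same route as the paper, which derives this proposition by substituting the relation \eqref{Omegas-explicit-E} for $E_{ij}$ (and its point-inverted counterpart for $\inversion{E}_{ij}$) into the $E$-polynomial expansion \eqref{Striangle-E}. Your additional checks -- that $A^n$ lies in the square of the ideal generated by $\Omega_1$, $\Omega_2$ and that the new denominator factors are generically nonvanishing -- go beyond what the paper records but are consistent with it.
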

These propositions give a possible variational interpretation to the octahedron equations. The~double zero expansions imply that the equations $\Omega_1 = 0$ and $\Omega_2 = 0$ are sufficient conditions for criticality, hence they imply the corner equations.

\subsection{Example: Double zero expansions for H1}

In this subsection, we show that the general construction described above leads to succinct double zero expansions for the actions over the cube of each of the discrete 2-forms associated with H1. For this example, we will verify by explicit computation that the zeroth and first order terms vanish in the Taylor expansions.

{\bf Cross-square Lagrangian.} Solving the quad equation $Q_{ij} = 0$ for $u$, with $Q_{ij}$ given by equation~\eqref{H1-Q}, we find a change of variables expressing $u$ in terms of the quad polynomial $Q_{ij}$:
\begin{equation*}
 u = \frac{\alpha_i -\alpha_j + Q_{ij}}{u_i - u_j} + u_{ij} .
\end{equation*}
Now we apply this to a single cross-square Lagrangian and Taylor expand in $Q_{ij}$
\begin{align*}
 \cL_\crosssquare &= u u_i + u_j u_{ij} - u u_j - u_i u_{ij} - (\alpha_i - \alpha_j) \log(u - u_{ij}) - (\alpha_i - \alpha_j) \log(u_i - u_j) , \\
 &= ( u - u_{ij} ) (u_i - u_j) - (\alpha_i - \alpha_j) \log((u - u_{ij})(u_i - u_j) ) + 2 \pi {\rm i} \Xi (\alpha_i-\alpha_j) , \notag \\
 & = \alpha_i - \alpha_j + Q_{ij} - (\alpha_i - \alpha_j) \log(\alpha_i - \alpha_j + Q_{ij}) + 2 \pi {\rm i} \Xi (\alpha_i-\alpha_j) \\
 & = (\alpha_i - \alpha_j)(1 - \log(\alpha_i - \alpha_j)) + 2 \pi {\rm i} \Xi (\alpha_i-\alpha_j) - \sum_{n=2}^\infty \frac{Q_{ij}^n}{n (\alpha_j - \alpha_i)^{n-1}} .
\end{align*}
We find that the first order term vanishes as expected, but the zeroth order term does not. It~depends only on the lattice parameters. When we consider the action around the cube, the zeroth order contributions cancel, except for terms of the form $2 \pi {\rm i} \Xi_i \alpha_i$:
\begin{align*}
 S_\crosssquare & = \inversion{\cL_\crosssquare} - \cL_\crosssquare + \cyclic \, \\
 & = \sum_{n=2}^\infty \frac{Q_{ij}^n}{n (\alpha_j - \alpha_i)^{n-1}} - \reverse + \cyclic - 2 \pi {\rm i} ( \Xi_i \alpha_i + \Xi_j \alpha_j + \Xi_k \alpha_k) .
\end{align*}
Hence,
\begin{equation}
 \label{H1-double0-Q}
S_\crosssquare^{\vec \Xi} = \sum_{n=2}^\infty \frac{Q_{ij}^n}{n (\alpha_j - \alpha_i)^{n-1}} - \reverse + \cyclic .
\end{equation}
Equation~\eqref{H1-double0-Q} is a double zero expansion for the action around the cube of the cross square 2-form associated with H1 in terms of $Q_{ij}$, $Q_{jk}$, $Q_{ki}$, \smash{$\inversion Q_{ij}$}, \smash{$\inversion Q_{jk}$}, \smash{$\inversion Q_{ki}$}.

{\bf Cross Lagrangian.}
In order to derive a double zero expansion for the cross 2-form associated with H1, we use its tetrahedron expression
\[ T = (\alpha_i-\alpha_j)(u u_{ij} + u_{jk}u_{ki})
+ (\alpha_j-\alpha_k)(u u_{jk} + u_{ij} u_{ki})
+ (\alpha_k-\alpha_i)(u u_{ki} + u_{ij} u_{jk} ) \]
to find a variable transformation that eliminates $u$ in favour of $T$:
\[
 u = \frac{T - ( (\alpha_i-\alpha_j) u_{jk} u_{ki} + \cyclic ) }{( (\alpha_i - \alpha_j) u_{ij} + \cyclic ) } .
\]
From this, we obtain
\begin{equation}
 \label{H1-T-for-u}
 u - u_{ij} = \frac{T + (\alpha_i-\alpha_j) (u_{ij} - u_{jk})(u_{ki} - u_{ij}) }{( (\alpha_i - \alpha_j) u_{ij} + \cyclic ) } .
\end{equation}
We have an analogous transformation for $u_{ijk}$ in terms of \smash{$\inversion T$}. Now we apply this to the action around the cube of the cross 2-form and Taylor expand in $T$ and \smash{$\inversion T$}.
Using the point inversion symmetry, we find
\begin{align*}
\begin{split}
 S_\cross
 & = -\cL_\cross -\reverse + \cyclic \\
 &= -(\alpha_i - \alpha_j) \log(u_i - u_j) + (\alpha_i - \alpha_j) \log(u - u_{ij}) -\reverse + \cyclic \\
 &= (\alpha_i - \alpha_j) \log(u_{jk} - u_{ki}) + (\alpha_i - \alpha_j) \log(u - u_{ij}) -\reverse + \cyclic \\
 &= (\alpha_i - \alpha_j) \log ((u - u_{ij})(u_{jk} - u_{ki})) - \reverse + 2 \pi {\rm i} \Xi_{ij} (\alpha_i-\alpha_j) + \cyclic .
 \end{split}
\end{align*}
Now we substitute $u$ using equation~\eqref{H1-T-for-u},
\begin{align*}
 S_\cross &= (\alpha_i - \alpha_j) \log \biggl( \frac{T(u_{jk} - u_{ki}) + (\alpha_i-\alpha_j) (u_{ij} - u_{jk})(u_{ki} - u_{ij})(u_{jk} - u_{ki}) }{\bigl( (\alpha_i - \alpha_j) u_{ij} + \cyclic \bigr) } \biggr) \\
 &\quad{}- \reverse + 2 \pi {\rm i} \Xi_{ij} (\alpha_i-\alpha_j) + \cyclic .
\end{align*}
Then we observe that
\begin{align*}
& (\alpha_i - \alpha_j) \log \bigl( (u_{ij} - u_{jk})(u_{ki} - u_{ij})(u_{jk} - u_{ki}) \bigr) + \cyclic= 0 , \\
& (\alpha_i - \alpha_j) \log \bigl( (\alpha_i - \alpha_j) u_{ij} + \cyclic \bigr) + \cyclic= 0 ,
\end{align*}
because the logarithms are invariant under cyclic permutations of $i$, $j$, $k$, so we find
\begin{align*}
 S_\cross &= (\alpha_i - \alpha_j) \log \biggl( \frac{T}{(u_{ij} - u_{jk})(u_{ki} - u_{ij})} + (\alpha_i-\alpha_j) \biggr) - \reverse + \cyclic \\
 &\quad{}- 2 \pi {\rm i} ( \Xi_i \alpha_i + \Xi_j \alpha_j + \Xi_k \alpha_k) \\
 & = -\sum_{n=2}^\infty \frac{T^n}{n (\alpha_i - \alpha_j)^{n-1} (u_{ij} - u_{jk})^n (u_{ij} - u_{ki})^n} - \reverse + \cyclic \\
 &\quad{}- 2 \pi {\rm i} ( \Xi_i \alpha_i + \Xi_j \alpha_j + \Xi_k \alpha_k) .
\end{align*}
Hence,
\begin{equation}
 \label{H1-double0-T}
 S_\cross^{\vec \Xi} = -\sum_{n=2}^\infty \frac{T^n}{n (\alpha_i - \alpha_j)^{n-1} (u_{ij} - u_{jk})^n (u_{ij} - u_{ki})^n} - \reverse + \cyclic .
\end{equation}
Note that the zeroth and first order terms vanish. Hence, equation~\eqref{H1-double0-T} is a double zero expansion in terms of $T$ and \smash{$\inversion T$}.

{\bf Triangle Lagrangian.}
In order to derive a double zero expansion for the action on the elementary cube of the triangle 2-form, we note the following relation for H1:
\begin{equation*}
 T = 0 \ \implies \ Q_{ij} = \frac{(\alpha_i - \alpha_j) E_{ij}}{(\alpha_i (u_{ij} - u_{ki}) + \cyclic)} .
\end{equation*}
There holds \smash{$T,\inversion T = 0 \implies S_\cross = 0$}, so we deduce form equation~\eqref{Striang-crosssquarecross} that
\[ T,\inversion T = 0 \ \implies\ S_\triang
= \frac{1}{2} S_\crosssquare . \]
Since $S_\triang(u_i,u_j,u_k,u_{ij},u_{ji},u_{ki})$ does not depend on $u$ or $u_{ijk}$, we can assume that $T = 0$ and~\smash{$\inversion T = 0$}, without affecting the action.
Thus, we can write the cube action for the triangle 2-form~as
\begin{align}
 S_\triang^{\vec \Xi} &= \frac{1}{2}\sum_{n=2}^\infty \frac{1}{n (\alpha_j - \alpha_i)^{n-1}} \biggl( \frac{(\alpha_i - \alpha_j) E_{ij}}{(\alpha_i (u_{ij} - u_{ki}) + \cyclic)} \biggr)^n - \reverse + \cyclic \, \notag\\
 & = \frac{1}{2}\sum_{n=2}^\infty \frac{(\alpha_j - \alpha_i)(-1)^n}{n (\alpha_i (u_{ij} - u_{ki}) + \cyclic)^n} E_{ij}^n - \reverse + \cyclic .
 \label{H1-double0-E}
\end{align}
Equation~\eqref{H1-double0-E} is a double zero expansion for the triangle 2-form in terms of $E_i$, $E_j$, $E_k$, \smash{$\inversion E_i$}, \smash{$\inversion E_j$}, \smash{$\inversion E_k$}.

We can use the identities~\eqref{Omegas-explicit-E}, where $g_1 = 1$ for H1, to rewrite this in terms of $\Omega_1$ and $\Omega_2$:
\begin{align}
 S_\triang^{\vec \Xi} & = \frac{1}{2}\sum_{n=2}^\infty \frac{(\alpha_j - \alpha_i)(-1)^n}{n (\alpha_i (u_{ij} - u_{ki}) + \cyclic)^n} \Biggl( \frac{\pdv{\Omega_1}{u_k} \Omega_2 - \pdv{\Omega_2}{u_k} \Omega_1}{\pdv{\Omega_1}{u_{ij}} - \pdv{\Omega_1}{u_k}} \Biggr)^n - \reverse + \cyclic .
 \label{H1-double0-O}
\end{align}
Equation~\eqref{H1-double0-O} is a double zero expansion for the triangle 2-form in terms of the octahedron polynomials.

{\bf Trident Lagrangian.}
To derive the double zero expansion for action of the trident 2-form, we consider equation~\eqref{T-cyclic-Qs}, which shows that cyclic combinations of quad equations lead to the tetrahedron equation. For H1, this can be written explicitly as
\begin{equation*}
T = - (u - u_{jk}) (u - u_{ki}) Q_{ij} - (u - u_{ki}) (u - u_{ij}) Q_{jk} - (u - u_{ij}) (u - u_{jk}) Q_{ki} .
\end{equation*}
Using equation~\eqref{Strileg-crosscrosssquare}, we can write the action on an elementary cube of the trident 2-form as
\begin{align}
S_\trileg^{\vec \Xi}
 ={}& \frac{1}{2}\sum_{n=2}^\infty \biggl(\frac{Q_{ij}^n}{n (\alpha_j - \alpha_i)^{n-1}} - \frac{T^n}{n (\alpha_i - \alpha_j)^{n-1} (u_{ij} - u_{jk})^n (u_{ij} - u_{ki})^n} \biggr) - \reverse + \cyclic \nonumber \notag\\
 ={}& \frac{1}{2}\sum_{n=2}^\infty \biggl(\frac{Q_{ij}^n}{n (\alpha_j - \alpha_i)^{n-1}} - \frac{(- Q_{ij} ( u - u_{jk} ) ( u - u_{ki} ) + \cyclic)^n}{n (\alpha_i - \alpha_j)^{n-1} (u_{ij} - u_{jk})^n (u_{ij} - u_{ki})^n} \biggr) \nonumber\\
 & - \reverse + \cyclic .\!
 \label{H1-dobule0-QQQ}
\end{align}
Equation~\eqref{H1-dobule0-QQQ} is a double zero expansion for the cube action of the trident 2-form in terms of the quad polynomials $Q_{ij}$, $Q_{jk}$, $Q_{ki}$, \smash{$\inversion Q_{ij}$}, \smash{$\inversion Q_{jk}$}, \smash{$\inversion Q_{ki}$}.

\section{Conclusion}

In this work, we compared the Lagrangian multiforms $\cL_\trileg$, $\cL_\cross$ and $\cL_\crosssquare$ on four-point stencils to the~three-point one $\cL_\triang$ that has been favoured in the previous literature on Lagrangian multiforms for the ABS equations.

In Part I~\cite{richardson2025discrete1}, we focused on $\cL_\trileg$: we introduced integer fields into the action to deal with the branch cuts of the logarithm and dilogarithm functions that occur in $\cL_\trileg$, and showed that these integer fields are essential to obtain the closure property of Lagrangian multiforms.

In Part II, we showed that the same construction applies to $\cL_\cross$, $\cL_\crosssquare$, and $\cL_\triang$. We established that the closure property holds for each of them, in the sense that the action over the cube vanishes \big(modulo $4 \pi^2$\big) on solutions of the corresponding corner equations.
We studied the relations between the respective systems of corner equations using the relations between the Lagrangian 2-forms. We showed that the quad equations are equivalent to the combined system of tetrahedron and octahedron equations. In addition, we formulated the double-zero property, which has recently seen a lot of emphasis in the continuous and semi-discrete settings, on the discrete level. This gives a variational interpretation of the octahedron equations.

This work shows once again that Lagrangian multiform theory is a powerful variational principle. As well as being an attribute of integrability, it can provide a solution to the inverse problem of the calculus of variations for equations that do not admit a Lagrangian in the traditional sense.

The Lagrangian multiforms presented here are specific to the ABS equations, relying in~particular on their three-leg forms. Lagrangian multiforms exists for other integrable quad equations, such as the discrete Gel'fand--Dikii equations~\cite{lobb2010lagrangian}, but these do no have a three-leg structure. It remains to be seen if such examples also admit different Lagrangian multiforms producing distinct but related corner equations.

The double-zero property has been observed in Lagrangian multiforms for non-ABS lattice equations~\cite{nijhoff2024lagrangianb, nijhoff2024lagrangian}. Since the corner equations imply that the gradient of the action vanishes, we~expect the double-zero property to hold for all discrete Lagrangian multiforms that satisfy the closure property. A~careful study of the closure property for non-ABS examples, generalising the results of Part~I, is left for future work.

\subsection*{Acknowledgements}

The authors are grateful to Professor Frank Nijhoff and Dr Vincent Caudrelier, for their continued support over many years, and to the organisers and supporters of the BIRS-IASM workshop on Lagrangian Multiform Theory and Pluri-Lagrangian Systems in October 2023, where this work was started. We thank the anonymous referees for their constructive criticism on the initial version of this paper, which inspired much of the work presented in Part~I.

JR acknowledges funding from Engineering and Physical Sciences Research Council DTP, Crowther Endowment and School of Mathematics at the University of Leeds. MV is supported by the Engineering and Physical Sciences Research Council [EP/Y006712/1].

\pdfbookmark[1]{References}{ref}
\LastPageEnding

\end{document}